\newcommand{\vertiii}[1]{{\left\vert\kern-0.25ex\left\vert\kern-0.25ex\left\vert #1 \right\vert\kern-0.25ex\right\vert\kern-0.25ex\right\vert}}
\newtheorem{prop}{Proposition}[section]
\newtheorem{thm}{Theorem}[section]
\newtheorem{corollary}{Corollary}[section]
\newcommand{\R}{\mathbb{R}}
\newcommand{\modch}{\textcolor{red}}
\newtheoremstyle{mytheorem}
  {\topsep}
  {\topsep}
  {\itshape}
  {0pt}
  {\bfseries}
  {.}
  { }
  {\thmname{#1}.\thmnumber{#2}\thmnote{ (#3)}}
\theoremstyle{mytheorem}
\newtheorem{asm}{A}
\newtheorem{cond}{C}
\long\def\symbolfootnote[#1]#2{\begingroup\def\thefootnote{\fnsymbol{footnote}}\footnote[#1]{#2}\endgroup}
\title {\large{\textsc{Concentration Inequalities for Suprema of Empirical Processes with Dependent Data via Generic Chaining with Applications to Statistical Learning}}} 
\author{\normalsize{Chiara Amorino}$^{\dag,*}$ \and \normalsize{Christian Brownlees}$^{\dag}$ \and \normalsize{Ankita Ghosh}$^{\dag}$} 
\begin{document} 

\maketitle

\begin{abstract}

This paper develops a general concentration inequality for the suprema of empirical processes with dependent data.
The concentration inequality is obtained by combining generic chaining with a coupling-based strategy.
Our framework accommodates high-dimensional and heavy-tailed (sub-Weibull) data. 
We demonstrate the usefulness of our result by deriving non-asymptotic predictive performance guarantees for empirical risk minimization in regression problems with dependent data. 
In particular, we establish an oracle inequality for a broad class of nonlinear regression models and, as a special case, a single-layer neural network model. 
Our results show that empirical risk minimizaton with dependent data attains a prediction accuracy comparable to that in the i.i.d.~setting for a wide range of nonlinear regression models.

{\bigskip \noindent \footnotesize \textbf{Keywords:} generic chaining, concentration inequalities, empirical process, statistical learning, dependent data}

{\bigskip \noindent \footnotesize \textbf{JEL:} C13, C18, C14, C22, C55}

\end{abstract}

\symbolfootnote[0]{\\
\noindent
$^{\dag}$ Department of Economics and Business, Universitat Pompeu Fabra and Barcelona SE;\\
e-mail: \texttt{chiara.amorino@upf.edu}, \texttt{christian.brownlees@upf.edu}, \texttt{ankita.ghosh@upf.edu}.\\
$^*$ Corresponding author. \\ 
We have benefited from discussions with Gabor Lugosi.\\
Christian Brownlees acknowledges support from the Spanish Ministry of Science and Technology (Grant MTM2012-37195);
the Ayudas Fundaci\'on BBVA Proyectos de Investigación Cient\`ifica en Matemáticas 2021;
the Spanish Ministry of Economy and Competitiveness through the Severo Ochoa Programme for Centres of Excellence in R\&D (SEV-2011-0075).}

\doublespace
\clearpage

\section{Introduction}

Bounds for the suprema of stochastic processes have numerous applications in statistics, econometrics, and machine learning. 
A powerful and general technique used to obtain such bounds is generic chaining.
Classical chaining consists of bounding the supremum of a stochastic process by constructing a sequence of increasingly fine partitions of the index set and appropriately controlling the process’ increments across the different partitions.
Generic chaining refines classical chaining by optimizing over the admissible sequences of partitions, typically leading to sharper bounds.
This technique was pioneered by Michel Talagrand, who was awarded the Abel Prize in 2024 in part for this key contribution to the theory of stochastic processes.
\citet{Talagrand2005}, \citet{Vandervaart} and \citet{Boucheron} provide, among others, a comprehensive treatment of this topic.

A classic application of generic chaining consists in obtaining concentration inequalities for the suprema of empirical processes.
The majority of applications in the literature, however, rely on the assumption that the underlying data are independent and identically distributed.
This is not appealing for applications in econometrics, where it is often more realistic to assume that the data exhibit dependence. 
This paper establishes a novel general concentration inequality for suprema of empirical processes with dependent data.
We do so by combining the generic chaining argument \citep{Talagrand2005} with a coupling argument to deal with the dependence \citep{MerlevedePeligrad2002}.
We demonstrate the usefulness of our result by obtaining non-asymptotic predictive performance guarantees for empirical risk minimization in statistical learning problems.

We begin by introducing a general concentration result for the supremum of an empirical process with dependent data.
We consider a (possibly nonlinear) function that depends on a random vector and a parameter belonging to some parameter space.
We then study the empirical process indexed by the parameter which is given by the average of the functions over a sequence of dependent random vectors.
The dependence structure of the sequence is characterized using the notion of $\beta$-mixing \citep{Doukhan:1994}.
Our main theorem is based on two high-level assumptions: an \emph{increment} condition and a \emph{coupling} condition. 
These two conditions allow us to develop, respectively, the generic chaining and the coupling arguments required to establish the main claim of the theorem.
The increment condition states that the sub-Weibull quasi-norm of the difference of the function evaluated in two parameter values, for the same random vector, is bounded by their distance.
The coupling condition states that the expected supremum (over the parameter space) of the absolute difference of the function evaluated in two random vectors, for the same parameter, is bounded by the $L_r$-norm of their distance.
This condition enables the use of a coupling lemma \citep[Theorem 2.9]{MerlevedePeligrad2002}, which allows us to approximate the sequence of dependent random vectors with an i.i.d.~sequence of random vectors with the same marginal distribution.

Our main theorem establishes a general concentration inequality for the supremum of empirical processes with dependent data, extending classical i.i.d.~results. 
The bound on the supremum of the empirical process depends on the so called Talagrand's functional, which captures the complexity of the parameter space, and on a coupling correction term accounting for the approximation error introduced when replacing the sequence of dependent random vectors with an independent copy.
The bounds is governed by a key quantity that we refer to as the \emph{effective sample size}. When observations are dependent, each additional observation provides less incremental information compared to the i.i.d.~case, and the effective sample size quantifies this loss of information due to dependence.

We apply our concentration result to study the properties of empirical risk minimization. 
Empirical risk minimization is a classic principle in statistical learning theory to choose a prediction rule for forecasting.
It consists of choosing the prediction rule that minimizes the average loss over the observed data, which is called the empirical risk.
A central problem in statistical learning theory is to understand the predictive performance of the empirical risk minimizer (ERM). 
Using our results, we derive predictive performance guarantees for the ERM for nonlinear regression with dependent data.
In particular, we establishes a non-asymptotic oracle inequality for the ERM under mild conditions on the regression model. 
The result implies that the predictive performance of the ERM approaches the best attainable performance at a rate that matches the so-called ``classical'' convergence rate of empirical risk minimization \citep[Ch.~12]{Devroye:Gyorfi:Lugosi:1996} once the sample size is replaced by the effective sample size. As a special illustration of the general framework, we obtain predictive performance guarantees for a single-layer neural network model. 
Overall, our results show that empirical risk minimzaton with dependent data attains a prediction accuracy comparable to that in the i.i.d.~setting for a wide range of nonlinear regression models.

This paper is related to different strands of the literature.
First it is related to the literature on generic chaining. In addition to the works we have already cited, additional important references on chaining and generic chaining include \citet{pollard1984convergence}, \citet{vandegeer2000} and \citet{Kosorok}. 
Introductory exposition on chaining and generic chaining is provided by \citet{Wainwright_2019} and \citet{Vershynin}, among others.
Second, this work is related to the literature on empirical risk minimization with dependent data.
Contributions in this literature include \citet{Jiang:Tanner:2010}, \citet{Brownlees:Gudmundsson:2021} and \citet{Brownlees:LlorensTerrazas:2021}.

The rest of the paper is outlined as follows.
Section \ref{sec:main} introduces the basic framework, the assumptions and the main theorem of this paper.
Section \ref{sec:learning} applies the main theorem in the context of statistical learning to obtain non-asymptotic prediction performance guarantees for empirical risk minimization for a fairly large class of nonlinear regression models and, as a special case, single-layer neural network.
Section \ref{sec:mainproof} outlines the proof of the main theorem.
Concluding remarks follow in Section \ref{sec:end}.
Additional proofs and results are collected in Appendix \ref{sec:proofs}.

\section{Basic Framework, Assumptions and Main Result}\label{sec:main}

Let \( \{ g_{\bm \theta} : \bm \theta \in \Theta \} \) be a class of real-valued functions defined on \( \mathcal Z \subset \mathbb R^d \) indexed by $\bm \theta \in \Theta$.
Let $\{ \bm Z_t \}$ be a dependent sequence of random vectors where $\bm Z_t$  takes values in $\mathcal Z$ for each $t$.
Our main objective consists in controlling the supremum of the empirical process associated with the average of the functions $g_{\bm \theta}$ based on the a sequence $\{ \bm Z_1, \ldots, \bm Z_T \}$, that is
\[
    \sup_{\bm \theta \in \Theta} \left| {1\over T} \sum_{t=1}^T g_{\bm \theta}(\bm Z_t) - \mathbb E(g_{\bm \theta}(\bm Z_t)) \right| ~.
\]
In what follows we refer to $T$ as the sample size.
Such a problem arises frequently in statistics, econometrics and machine learning. 
In the following section, we will show how controlling the supremum of the empirical process is key to obtain prediction performance guarantees in statistical learning problems.

Our concentration result relies on two high level assumptions that we present below.
Before stating the first of these two assumptions we need to introduce the notion of a sub-Weibull random variable of order $\alpha$, for some $\alpha > 0$ \citep{WongTewari:2020}.  
Let $\psi_\alpha(x) = \exp(x^\alpha) - 1$ for some $\alpha>0$ and define the quasi-norm of a random variable $X$ as
\begin{equation}{\label{eq: def subWei}}
 \| X \|_{\psi_\alpha} = \inf \left\{ c > 0 : \mathbb E\left( \psi_\alpha\left(\frac{|X|}{c}\right) \right) \leq 1 \right\} ~.   
\end{equation}
We refer to $\| \cdot \|_{\psi_\alpha}$ as the sub-Weibull($\alpha$) quasi-norm, and we say that a random variable $X$ is sub-Weibull($\alpha$) if $\| X \|_{\psi_\alpha} < \infty$.
We recall that the special cases $\alpha = 2$ and $\alpha = 1$ correspond, respectively, to the familiar notions of sub-Gaussian and sub-exponential random variables. 
Additional details and properties of sub-Weibull($\alpha$) random variables are provided in Appendix~\ref{App:sub-Wei}.

\begin{asm}[Increment condition]\label{asm:increments_and_tail}
There exists a distance $d_\Theta : \Theta \times \Theta \rightarrow \mathbb R_+$ and a positive constant $C_{\Theta}$ 
such that for all $t=1,\ldots,T$ we have that 
$(i)$ 
for any $\bm \theta_1, \bm \theta_2 \in \Theta$ it holds that 
\[ 
	\| g_{\bm\theta_1}(\bm Z_t) - g_{\bm\theta_2}(\bm Z_t) - \mathbb E ( g_{\bm\theta_1}(\bm Z_t) - g_{\bm\theta_2}(\bm Z_t) ) \|_{\psi_\alpha} \leq C_{\Theta} d_{\Theta}( \bm \theta_1 , \bm \theta_2 ) ~,
\]
and $(ii)$ for some ${\bm \theta}_0 \in \Theta$ it holds that $ \| g_{\bm\theta_0}(\bm Z_t) - \mathbb E g_{\bm\theta_0}(\bm Z_t) \|_{\psi_\alpha} \leq C_{\Theta} $.
\end{asm}

\ref{asm:increments_and_tail} implies that the increments of the empirical process exhibit sub-Weibull-type behaviour. 
This is a standard type of condition required to develop the chaining argument.
We remark that we state \ref{asm:increments_and_tail} for demeaned random variables for convenience.
It follows from the basic properties of sub-Weibull random variables that if $X$ is sub-Weibull of order $\alpha$ then $X - \mathbb E(X)$ is also sub-Weibull of order $\alpha$ (Proposition \ref{prop:subweibull:centering}).

\begin{asm}[Coupling condition]\label{asm:coupling}
There exists a distance $d_{\mathcal Z} : \mathcal Z \times \mathcal Z \rightarrow \mathbb R_+$, an $r\geq 1$ and a positive constant $C_{\mathcal Z}$ 
such that
$(i)$ $(\mathcal Z,d_{\mathcal Z})$ is a Polish space,
$(ii)$ for all $t=1,\ldots,T$ we have that it holds that
\[
	\mathbb E\left( \sup_{\bm \theta \in \Theta} |g_{\bm\theta}(\bm Z_t) - g_{\bm\theta}(\bm Z_t^*)| \right) 
	\leq \| d_{\mathcal Z}( \bm Z_t , \bm Z_t^* ) \|_{L_r} ~,
\]
where $\bm Z_t^*$ is random vector with the same marginal distribution as $\bm Z_t$ and
$(iii)$ for all $t=1,\ldots,T$ for some $\bm z \in \mathcal Z$ and some $s>0$ it holds that $\| d_{\mathcal Z}( \bm Z_t , \bm z ) \|_{L_{r+s}} < C_{\mathcal Z}$.
\end{asm}
\ref{asm:coupling} implies that the expected absolute difference between the empirical processes associated with two copies of the sequence $\{ \bm Z_t \}$ 
can be bounded by the average $L_r$-norm of the distance between the random vectors in the two sequences.
We also remark that the requirement that $(\mathcal Z,d_{\mathcal Z})$ is a Polish space is a technical condition required to apply a coupling result and that it is typically straightforward to verify.
This is a key condition required to develop the coupling argument.

Before stating our main concentration result, we introduce two key concepts: Talagrand's functional and the absolute regularity coefficients.

Talagrand's functional is a measure of complexity of a class of functions \citep{Talagrand2005}.
We say that a sequence of partition $\{ \mathcal A_k \}_{k\geq 0}$ of $\Theta$ is admissible if the sequence is increasing\footnote{An increasing sequence of partitions means that every set of $\mathcal A_{k+1}$ is included in a set of $\mathcal A_k$.} and it is such that $| \mathcal A_k | \leq 2^{2^k}$ for $k=0,1,\ldots$.
For any $\bm \theta \in \Theta$, denote by $A_k(\bm \theta)$ the unique element of $\mathcal A_k$ that contains $\bm \theta$.
Let $\Delta(A)$ denote the diameter of the set $A \subset \Theta$ associated with the distance $d_\Theta$. 
Finally, for $\alpha>0$ Talagrand's functional $\gamma_\alpha$ is defined as
\[
	\gamma_\alpha(\Theta) = \inf_{\mathcal A_k} \sup_{\bm \theta \in \Theta} \sum_{k\geq 0} 2^{k/\alpha} \Delta( A_k(\bm \theta) ) ~,
\]
where the infimum is taken over all admissible sequences.
It follows from standard arguments (Proposition \ref{prop:gammafunctional:bound}) that 
\[
	\gamma_\alpha(\Theta) \leq \log (2)^{1/\alpha} \left(1 - \frac{1}{2^{1/\alpha}}\right) \int_{0}^{\Delta(\Theta)} \big(\log \mathcal{N}(\Theta,\varepsilon)\big)^{1/\alpha} \, d\varepsilon ~,
\]
where $\mathcal{N}(\Theta, \varepsilon)$ denotes the covering number of $\Theta$ at scale $\varepsilon > 0$.

The absolute regularity coefficients, also known as $\beta$-mixing coefficients, measure the degree of dependence among the coordinates of the process $\{ \bm Z_t \}$ \citep{Doukhan:1994}.
Let $\mathcal{F}_{-\infty}^t$ and $\mathcal{F}_{t+l}^{\infty}$ be the $\sigma$-algebras generated by $\lbrace \bm Z_s: -\infty \leq s \leq t\rbrace$ and $\lbrace \bm Z_s: t + l \leq s \leq \infty\rbrace$ respectively. The $\beta$-mixing coefficient of order $l$, for $l\geq 0$, is defined as
\begin{equation*}
	\beta(l) = \sup_t \left\{ \sup_{\mathcal U, \mathcal V} {1\over 2} \sum_{i=1}^I \sum_{j=1}^J \left| {\mathbb P \left(U_i \cap V_j \right) - \mathbb P \left(U_i\right) \mathbb P \left(V_j\right) } \right| \right\}~,
\end{equation*}
where the inner supremum in the definition is taken over all pairs of partitions $\mathcal U = \{ U_1, \ldots, U_I \}$ and $\mathcal V = \{ V_1, \ldots, V_J \}$ of the sample space such that 
$U_i \in \mathcal{F}_{-\infty}^t$ and $V_j \in \mathcal{F}_{t+l}^{\infty}$ for all $i,j$.

Finally, we can state our main theorem.

\begin{thm}[Concentration]\label{thm:maintheorem}
Suppose \ref{asm:increments_and_tail} and \ref{asm:coupling} are satisfied.

Then, for any $n \in \{ 1 , \ldots , T \}$, any $\varepsilon_1\geq 2$ and any $\varepsilon_2>0$ 
\begin{align*}
	& \sup_{\bm\theta\in\Theta}\left|{1 \over T} \sum_{t=1}^T g_{\bm\theta}(\bm Z_t) - \mathbb E(g_{\bm\theta}(\bm Z_t)) \right| 
	\geq C_\alpha C_{\Theta} \left( { 1 + \gamma_2(\Theta)\over \sqrt{n}}  \varepsilon_1^{1/2} +{1+ \gamma_\alpha(\Theta)\over n^{1/\alpha \vee 1}}  \varepsilon_1^{1/\alpha} \right) + C_{\mathcal Z} \varepsilon_2 
\end{align*}
holds at most with probability
\[
	5 {T \over n} \exp( - \varepsilon_1 ) + 8 {T \over n} \beta^{s/(r(r+s))}\left( \left\lfloor{T \over n+1}\right\rfloor \right) {1 \over \varepsilon_2} ~,
\]
where $C_\alpha$ is a positive constant that depends on $\alpha$.
\end{thm}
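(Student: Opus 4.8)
The plan is to combine a blocking decomposition, a coupling step and a generic chaining step, in that order. Fix $n \in \{1,\ldots,T\}$. First I would reorganize the $T$ observations into roughly $T/n$ interlaced subsequences, each consisting of about $n$ indices whose consecutive elements are separated by about $T/n$ time steps; writing $m \approx T/n$ for the number of subsequences, the global centered average $\frac 1T\sum_{t=1}^T (g(\bm Z_t,\bm\theta) - \mathbb E g(\bm Z_t, \bm\theta))$ is then the (equal-weight) average over the $m$ subsequences of the within-subsequence centered averages, each of which averages $\approx n$ terms. By the triangle inequality the supremum over $\bm\theta$ of the global process is bounded by the largest of the $m$ within-subsequence suprema, so a union bound over the $m \approx T/n$ subsequences reduces the claim to a deviation bound for a single subsequence and accounts for the common prefactor $T/n$ in the probability. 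The within-subsequence spacing $\approx T/n$ is what makes the argument of $\beta$ equal to $\lfloor T/(n+1)\rfloor$.

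Second, on a fixed subsequence I would remove the dependence by coupling. Since $(\mathcal Z, d_{\mathcal Z})$ is Polish by \ref{asm:coupling}$(i)$, the coupling lemma of \citet[Theorem 2.9]{MerlevedePeligrad2002} constructs an i.i.d.\ sequence $\{\bm Z_t^*\}$ with the same marginals as the subsequence, with the coupling governed by the $\beta$-mixing coefficient at the subsequence spacing. I would split the within-subsequence process into the coupled i.i.d.\ process plus a coupling-error term $\sup_{\bm\theta} \frac 1n \sum |g(\bm Z_t,\bm\theta) - g(\bm Z_t^*,\bm\theta)|$. To control the latter in expectation I would apply \ref{asm:coupling}$(ii)$ to bound each summand's expected supremum by $\|d_{\mathcal Z}(\bm Z_t,\bm Z_t^*)\|_{L_r}$ uniformly in $t$, and then, using that the coupling makes $\bm Z_t = \bm Z_t^*$ off an event of probability at most $\beta(\lfloor T/(n+1)\rfloor)$, interpolate via Hölder's inequality between the $L_{r+s}$ moment (bounded by $C_{\mathcal Z}$ through \ref{asm:coupling}$(iii)$) and that probability. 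This yields a bound of order $C_{\mathcal Z}\,\beta^{s/(r(r+s))}(\lfloor T/(n+1)\rfloor)$, which averaging over the block preserves, and Markov's inequality at level $C_{\mathcal Z}\varepsilon_2$ then produces the term $8\frac Tn \beta^{s/(r(r+s))}(\lfloor T/(n+1)\rfloor)\frac 1{\varepsilon_2}$.

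Third, for the coupled i.i.d.\ process $\frac 1n \sum (g(\bm Z_t^*,\bm\theta) - \mathbb E g(\bm Z_t^*,\bm\theta))$ I would run the generic chaining argument driven by the increment condition \ref{asm:increments_and_tail}. Because the increments are sub-Weibull($\alpha$), the chaining bound splits into two regimes: a sub-Gaussian contribution controlled by $\gamma_2(\Theta)$ at scale $1/\sqrt n$ and a heavier sub-Weibull-tail contribution controlled by $\gamma_\alpha(\Theta)$ at scale $1/n^{1/\alpha\vee 1}$. Summing the increments along the chain, handling the anchoring term through \ref{asm:increments_and_tail}$(ii)$, and calibrating the deviation level through $\varepsilon_1$ gives a tail of the form $C_\alpha C_{\Theta}(\frac{1+\gamma_2}{\sqrt n}\varepsilon_1^{1/2} + \frac{1+\gamma_\alpha}{n^{1/\alpha\vee 1}}\varepsilon_1^{1/\alpha})$ with probability at most of order $\exp(-\varepsilon_1)$ per subsequence, which after the union bound of the first step becomes $5\frac Tn \exp(-\varepsilon_1)$. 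Adding the two deviation terms and the two probabilities gives the stated inequality.

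The main obstacle I anticipate is the coupling step: the Merlevède--Peligrad lemma must be applied to couple all $\approx n$ within-subsequence observations simultaneously, as a joint rather than merely pairwise coupling, and the resulting mismatch must be transferred into a bound that is uniform over $\Theta$ through \ref{asm:coupling}$(ii)$. The Hölder interpolation delivering the precise exponent $s/(r(r+s))$, and the bookkeeping that shows the per-subsequence chaining constants aggregate into the single constant $C_\alpha$ after the union bound, are the other delicate points; by contrast, the two-regime sub-Weibull chaining tail in the third step is essentially a standard generic-chaining computation once \ref{asm:increments_and_tail} is in hand.
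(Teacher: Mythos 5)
Your proposal is correct and follows essentially the same route as the paper's proof: interlaced blocking into $M\le T/n$ subsequences of length about $n$ with spacing $\lfloor T/(n+1)\rfloor$, a union bound producing the $T/n$ factor, the Merlev\`ede--Peligrad coupling applied jointly along each subsequence, a H\"older/Markov treatment of the coupling error yielding the exponent $s/(r(r+s))$ and the $1/\varepsilon_2$ term, and a two-regime sub-Weibull generic chaining bound (anchor term via \ref{asm:increments_and_tail}.$(ii)$, increments via $\gamma_2$ and $\gamma_\alpha$). The only cosmetic difference is that you motivate the coupling-error bound by a Berbee-style ``equal off an event of probability $\beta$'' picture, whereas the paper gets the same interpolation inequality directly from part $(iii)$ of its coupling proposition; the resulting estimate is identical.
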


A few remarks on Theorem \ref{thm:maintheorem} are in order.
To simplify the discussion it is useful to introduce a special version of the theorem.
Our result implies that for any $n \in \{ 1 , \ldots , T \}$ and any $\varepsilon\geq 2$ the inequality
\begin{align}
	& \sup_{\bm\theta\in\Theta}\left|{1 \over T} \sum_{t=1}^T g_{\bm\theta}(\bm Z_t) - \mathbb E(g_{\bm\theta}(\bm Z_t)) \right| \nonumber \\
	& \quad \geq C_\alpha C_{\Theta} \left( { 1 + \gamma_2(\Theta)\over \sqrt{n}}  \varepsilon^{1/2} +{1+ \gamma_\alpha(\Theta)\over n^{1/\alpha \vee 1}}  \varepsilon^{1/\alpha} \right) 
	+ C_{\mathcal Z} \beta^{s/(r(r+s))}\left( \left\lfloor{T \over n+1}\right\rfloor \right) \exp( \varepsilon ) \label{eqn:simple}
\end{align}
holds at most with probability \(13 (T/n)\exp( - \varepsilon ) \), where $C_\alpha$ is the same constant that appears in the statement of the theorem. This is obtained by choosing $\varepsilon_1 = \varepsilon$ and $\varepsilon_2 = \beta^{s/(r(r+s))}\left( \left\lfloor{T \over n+1}\right\rfloor \right) \exp( \varepsilon)$.

First, it is important to emphasize that the bound on the supremum of the empirical process is controlled by the variable $n$, which may be interpreted as the effective sample size. Intuitively, when observations are dependent the incremental information provided by an additional observation is in some sense smaller in comparison to the i.i.d.~case and 
the variable $n$ captures the loss of information due to dependence. 

Second, the supremum of the empirical process is bounded by three terms. 
The first two terms depend, respectively, on the Talagrand's functionals $\gamma_2(\Theta)$ and $\gamma_\alpha(\Theta)$, which capture the complexity of the parameter space $\Theta$.
When \( \alpha \geq 2 \) the first term dominates and we recover the classic sub-Gaussian concentration rate, with the effective sample size \( n \) playing the role typically held by the (actual) sample size \( T \) in the i.i.d.~setting.
On the contrary, when $\alpha<2$, the second term dominates, leading to a slower concentration rate, which is still controlled by the effective sample size $n$.
The third term depends on the $\beta$-mixing coefficients and may be interpreted as a correction term arising from the fact that the sequence of random vectors is dependent rather than independent.
It is important to highlight that the choice of effective sample size $n$ entails a trade-off (assuming that the $\beta$-mixing coefficients are decaying).
The first two terms that depend on Talagrand's functional are small when the effective sample size is large.
On the contrary the third term that depends on the $\beta$-mixing coefficients is small when the effective sample size is also small.c

Third, the probability bound of the inequality is the classic exponential-type bound that is typically associated with analogous concentration results for i.i.d.~data multiplied by the factor $T/n$.
The factor $T/n$ may be interpreted as a correction factor capturing the error arising from the fact that the sequence of random vectors is dependent rather than independent.

Fourth, the dimensionality of the parameter space affects the inequality through Talagrand's functional. In general, the larger the dimensionality of the parameter space the larger is Talagrand's functional.
The dimensionality of the data affects the inequality through the constant $C_Z$.
In what follows, we shall see how these constants simplify in the context of specific applications of our result.

Fifth, it is important to highlight that the generic chaining proof requires 
the empirical process to be \emph{separable}, in the sense of the definition of \citep[p.~305]{Boucheron}. 
In line with many authors, we assume throughout that these requirements are satisfied.

Last, we conclude with a few minor remarks on a number of additional aspects of theorem.
We note that the theorem holds for any $T$, unlike results stated in the literature, which are often stated to hold for an unspecified and sufficiently large $T$ \citep{Jiang:Tanner:2010,Brownlees:Gudmundsson:2021,Brownlees:LlorensTerrazas:2021}.
All the constants in the theorem can be recovered from the proofs in the appendix of the paper.
We do not provide explicit expressions in the text to avoid burdening exposition.
The theorem does not assume any specific rate of decay of the $\beta$-mixing coefficients. 
However, meaningful applications of the theorem require that the $\beta$-mixing coefficient decay at suitable rate.
Finally, applications of the theorem also require to set appropriately some of the variables in the theorem. 
We shall illustrate these choices in the application to statistical learning problems in the next section.



\section{Application to Statistical Learning Theory}\label{sec:learning}

Consider the stationary time series $\{ (Y_t,\bm X_t')' \}$ where $Y_t$ takes values in $\mathcal Y \subset \mathbb R$ and $\bm X_t$ takes values in $\mathcal X \subset \mathbb R^d$, with both $\mathcal Y$ and $\mathcal X$ assumed to be closed sets.
We are interested in forecasting the prediction target $Y_t$ on the basis of the vector of predictors $\bm X_t$.
The forecasts for the prediction target $Y_t$ are obtained from the class of prediction rules \( f_{\bm \theta} : \mathcal X \rightarrow \mathcal Y \) indexed by $\bm \theta \in \Theta$.
The square loss is used to measure prediction accuracy 
\[
	L( Y_t, f_{\bm \theta}(\bm X_t) ) := (  Y_t - f_{\bm \theta}(\bm X_t) )^2 ~.
\]
A standard problem in statistical learning consists is devising an algorithm to choose an accurate prediction rule $f_{\bm \theta}$ on the basis of a sample of observations
$\mathcal D = \{ (Y_1,\bm X_1')', \ldots, (Y_T,\bm X_T')' \}$.
One of the natural principles used to tackle this challenge is empirical risk minimization. This principle consists in choosing the $\bm \theta$ that minimizes the empirical risk, that is 
\[
	\hat{\bm \theta} \in \arg \min_{\bm \theta \in \Theta} R_T(\bm \theta) \text{ where } R_T(\bm \theta) := {1\over T}\sum_{t=1}^T (  Y_t - f_{\bm \theta}(\bm X_t) )^2  ~.
\]
If more than one $\bm \theta$ achieves the minimum we may pick one arbitrarily. We call $\hat {\bm \theta}$ the empirical risk minimizer (ERM).

The accuracy of the ERM is measured by its conditional risk defined as
\begin{equation}\label{eqn:risk:erm}
	R( \hat{ \bm \theta} )  
	:= \mathbb E( ( Y - f_{\hat{\bm \theta}}(\bm X) )^2 | \hat{ \bm \theta} = \hat{ \bm\theta} ( \mathcal D ) ) ~,
\end{equation}
where $(Y,\bm X')'$ denotes a draw form the stationary distribution of the time series $\{ (Y_t,\bm X_t')' \}$, and is assumed to be independent of the sample $\mathcal D$.
The performance measure in \eqref{eqn:risk:erm} can be interpreted as the risk of the ERM obtained from the ``training sample'' $\mathcal D$ over the ``validation observation'' $(Y,\bm X')'$.
This performance measure allows us to keep our analysis close to the bulk of contributions in the learning theory literature (which typically focus on the analysis of i.i.d.~data) and facilitates comparisons.
We remark that \citet{Brownlees:Gudmundsson:2021} and \citet{Brownlees:LlorensTerrazas:2021} consider 
alternative performance measures such as the conditional out-of-sample average risk of the ERM, 
which has a more attractive interpretation for time series applications. 
It turns out that these alternative measures lead to essentially the same theoretical analysis, 
at the expense of introducing additional notation.
Therefore, we focus on the performance measure defined in \eqref{eqn:risk:erm} for clarity.

A classic objective of statistical learning theory is to obtain a bound on the performance of the ERM relative to the optimal risk that can be achieved within the given class of prediction rules.
Define \( R(\bm \theta) := \mathbb E\left( (  Y_t - f_{\bm \theta}(\bm X_t) )^2\right) \).
Our aim is to find a pair $(B_T(\Theta),\delta_T)$ such that 
\begin{equation}\label{eqn:risk:regret}
	R( \hat{ \bm \theta } ) \leq \inf_{\bm \theta \in \Theta} R(\bm \theta) + B_T(\Theta) 
\end{equation}
holds at least with probability $1-\delta_T$ for all (sufficiently large) $T$.
In general, inequalities such as \eqref{eqn:risk:regret} provide non-asymptotic guarantees on the performance of the ERM.
Additionally, when we have that $B_T(\Theta) \rightarrow 0$ and $\delta_T \rightarrow 0$ as $T\rightarrow \infty$ the inequality in \eqref{eqn:risk:regret} is referred to as an \emph{oracle inequality},
meaning that that the ERM asymptotically performs as well as the best prediction rule in the class (when it exists).

Theorem \ref{thm:maintheorem} can be used to obtain performance bounds for empirical risk minimization.
We begin by recalling the basic inequality \citep[Lemma 8.2]{Devroye:Gyorfi:Lugosi:1996} stating that 
\[
	| R( \hat{\bm \theta} ) - \inf_{\bm \theta \in \theta} R(\bm \theta) | \leq 2 \sup_{\bm \theta \in \Theta} \left|  R_T(\bm \theta) - R(\bm \theta ) \right| ~.
\]

Let $\bm Z_t = (Y_t,\bm X_t')'$ that takes values in $\mathcal Z = \mathcal Y \times \mathcal X$
and define $g_{\bm \theta}(\bm Z_t) = (  Y_t - f(\bm X_t,\bm \theta) )^2$. Then, we have
\[
	\sup_{\bm \theta \in \Theta} \left|  R_T(\bm \theta) - R(\bm \theta ) \right|
	= \sup_{\bm\theta\in\Theta}\left|{1 \over T} \sum_{t=1}^T g_{\bm \theta}(\bm Z_t) - \mathbb E(g_{\bm \theta}(\bm Z_t)) \right| ~.
\]
Thus an application of Theorem \ref{thm:maintheorem} leads to the result of interest.

In order to apply Theorem \ref{thm:maintheorem}, we assume that a number of high-level conditions hold.
These conditions play the same role as \ref{asm:increments_and_tail} and \ref{asm:coupling} in the previous section, but are reformulated to fit the present framework.
In Section \ref{sec:neuralnet} we verify that these conditions are satisfied, for example, by a single-layer neural network model.

\begin{cond}\label{cond:c1}
There exists a distance $d_\Theta : \Theta \times \Theta \rightarrow \mathbb R_+$ and positive constant $C_1$ such that 
$(i)$ for any $\bm \theta_1$, $\bm \theta_2 $ in $\Theta$ it holds that 
\[
	\| f_{\bm \theta_1}(\bm X_t) - f_{\bm \theta_2}(\bm X_t) \|_{\psi_2} \leq C_1 d_\Theta( \bm \theta_1 , \bm \theta_2 ) ~,
\]
and $(ii)$ \( \| Y_t \|_{\psi_2} \leq C_1 \) and  \( \| \sup_{\theta \in \Theta} f_{\bm \theta}(\bm X_t) \|_{\psi_2} \leq C_1 \) \modch{.}
\end{cond}

Notice that in condition \ref{cond:c1} we have that the bound on the sub-Gaussian norms of 
$\sup_{\theta \in \Theta} f_{\bm \theta}(\bm X_t)$ and $f_{\bm \theta_1}(\bm X_t) - f_{\bm \theta_2}(\bm X_t)$ do not depend on dimension of $\bm X_t$.\footnote{This is satisfied, for example, when $f_{\bm \theta}(\bm X_t)=\bm X_t'\bm \theta$ and $\bm X_t$ is a sub-Gaussian vector. In this case for condition \ref{cond:c1}.$(i)$ we have that 
\[
	\| \bm X_t'(\bm \theta_1-\bm \theta_2) \|_{\psi_2} = \left\| \bm X_t'{(\bm \theta_1 - \bm \theta_2) \over \| \bm \theta_1 - \bm \theta_2 \|_2} \| \bm \theta_1 - \bm \theta_2 \|_2 
	\right\|_{\psi_2} \leq \| \bm X_t \|_{\psi_2} \| \bm \theta_1 - \bm \theta_2 \|_2 ~.
\]
}

\begin{cond}\label{cond:c2}
There exists a distance $d_{\mathcal X} : \mathcal X \times \mathcal X \rightarrow \mathbb R_+$ and positive constant $C_2$ such that 
$(i)$ $(\mathcal X,d_{\mathcal X})$ is a Polish space, 
$(ii)$ for any $\bm x_1$, $\bm x_2$ in $\mathcal X$ and any $\bm \theta$ in $\Theta$ it holds that
\begin{equation*}
	| f_{\bm \theta}(\bm x_1) - f_{\bm \theta}(\bm x_2) | \leq d_{\mathcal X}(\bm x_1 , \bm x_2 ) ~,
\end{equation*}
and $(iii)$ for some $\bm x$ in $\mathcal X$ it holds that $\| d_{\mathcal X}( \bm X_t , \bm x ) \|_{\psi_2} < C_2$.
\end{cond}

\begin{prop}\label{corollary:erm:reg}
	Suppose that \ref{cond:c1}--\ref{cond:c2} hold. 
	Suppose $\beta(l) \leq l^{-\zeta}$ for some $\zeta > 4$ and choose $n = \lceil T^\eta \rceil$ where $\eta=(\zeta-4)/(\zeta+2)$.

	Then, for any $T \geq 2$, 
	\begin{align*}
		R( \hat{\bm \theta} ) < \inf_{\bm \theta \in \theta} R(\bm \theta) 
		+ 
		C \left( \gamma_2(\Theta) \sqrt{ \log(n) \over n } + \gamma_1(\Theta) {\log(n)\over n} + {1\over \sqrt{n}} \right) ~,
	\end{align*}
	holds at least with probability \( 1-13 n^{-1}  \), where $C$ is a postive constant.
\end{prop}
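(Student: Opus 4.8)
The plan is to apply Theorem~\ref{thm:maintheorem} to the specific choice $g(\bm Z_t,\bm \theta) = (Y_t - f_{\bm \theta}(\bm X_t))^2$ and then translate the resulting concentration bound into the oracle inequality via the basic inequality $|R(\hat{\bm \theta}) - \inf_{\bm \theta} R(\bm \theta)| \leq 2\sup_{\bm \theta \in \Theta}|R_T(\bm \theta) - R(\bm \theta)|$ already recalled in the text. The first task is therefore to verify that Conditions~\ref{cond:c1} and \ref{cond:c2} imply Assumptions~\ref{asm:increments_and_tail} and \ref{asm:coupling} with $\alpha = 1$ (since the square loss introduces a product of two sub-Gaussian terms, which is sub-exponential). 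For the increment condition, I would write the difference of squared losses as $g(\bm Z_t,\bm \theta_1) - g(\bm Z_t,\bm \theta_2) = (f_{\bm \theta_2}(\bm X_t) - f_{\bm \theta_1}(\bm X_t))(2Y_t - f_{\bm \theta_1}(\bm X_t) - f_{\bm \theta_2}(\bm X_t))$, bound the first factor using \ref{cond:c1}$(i)$ and the second using \ref{cond:c1}$(ii)$, and invoke the product property of sub-Weibull norms (a product of two sub-Gaussian variables is sub-exponential, i.e.\ sub-Weibull of order $1$) together with the centering property (Proposition~\ref{prop:subweibull:centering}). This yields $C_\Theta$ proportional to $C_1$ and establishes \ref{asm:increments_and_tail} with $\alpha=1$.

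Next I would verify the coupling condition. A factorization analogous to the above, $|g(\bm Z_t,\bm \theta) - g(\bm Z_t^*,\bm \theta)| = |f_{\bm \theta}(\bm X_t) - f_{\bm \theta}(\bm X_t^*)|\,|2Y_t^{\mathrm{mix}} - f_{\bm \theta}(\bm X_t) - f_{\bm \theta}(\bm X_t^*)|$ together with the Lipschitz bound \ref{cond:c2}$(ii)$, controls the spatial increment by $d_{\mathcal X}(\bm X_t,\bm X_t^*)$ times a bounded-in-expectation envelope; applying Cauchy--Schwarz and \ref{cond:c1}$(ii)$, \ref{cond:c2}$(iii)$ gives the required $L_r$-bound, so \ref{asm:coupling} holds with a suitable $r$ and $C_{\mathcal Z}$ proportional to $C_2$ (times a constant absorbing the envelope). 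Because the envelope terms are sub-Gaussian, I can take $r=2$ and choose $s$ so that the moment condition \ref{asm:coupling}$(iii)$ is met; the exact values of $r$ and $s$ should be selected to make the mixing exponent $s/(r(r+s))$ compatible with the rate $\zeta > 4$.

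With both assumptions verified, I would invoke the simplified form \eqref{eqn:simple} of the main theorem with $\alpha = 1$, so that $n^{1/\alpha\vee 1} = n$ and the two complexity terms become $\gamma_2(\Theta)\varepsilon^{1/2}/\sqrt{n}$ and $\gamma_1(\Theta)\varepsilon/n$. The crucial step is then the tuning: setting $\varepsilon = \log(n)$ turns the exponential probability factor $13(T/n)\exp(-\varepsilon)$ into $13(T/n)/n$, and the polynomial decay $\beta(l)\leq l^{-\zeta}$ with the lag $\lfloor T/(n+1)\rfloor$ makes the third term comparable to the others. Plugging $n = \lceil T^\eta\rceil$ with $\eta = (\zeta-4)/(\zeta+2)$ should simultaneously force the coupling-correction term $C_{\mathcal Z}\,\beta^{s/(r(r+s))}(\lfloor T/(n+1)\rfloor)\exp(\varepsilon)$ down to order $1/\sqrt{n}$ and collapse the probability factor $13(T/n)/n$ to $13/n$, matching the claimed probability $1 - 13n^{-1}$. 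Doubling the bound via the basic inequality then delivers \eqref{eqn:risk:regret} in the stated form.

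The main obstacle, I expect, is the bookkeeping in the rate-balancing step: one must check that the specific exponent $\eta = (\zeta-4)/(\zeta+2)$ is exactly what equates the mixing-induced term with the complexity terms at scale $1/\sqrt{n}$, and that it simultaneously yields the clean probability $13n^{-1}$ rather than merely $O(n^{-1})$. This requires carefully tracking how $\lfloor T/(n+1)\rfloor$ behaves as a power of $T$, substituting into $\beta^{s/(r(r+s))}$, and confirming that with the chosen $r,s$ the exponent on $T$ in the product $\beta^{\cdots}(\lfloor T/(n+1)\rfloor)\exp(\log n)\,T/n$ is precisely tuned. Verifying that the sub-Gaussian envelope constants from Conditions~\ref{cond:c1}--\ref{cond:c2} do not depend on the dimension $d$ of $\bm X_t$ (as emphasized after \ref{cond:c1}) is what keeps $C_{\mathcal Z}$ and the overall constant $C$ dimension-free, and this dimension-independence should be highlighted when assembling the final constant.
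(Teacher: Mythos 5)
Your overall strategy is the paper's: verify \ref{asm:increments_and_tail} and \ref{asm:coupling} with $\alpha=1$ from \ref{cond:c1}--\ref{cond:c2}, invoke the simplified bound \eqref{eqn:simple}, and tune $\varepsilon$. But your tuning step contains a genuine error. With $n=\lceil T^\eta\rceil$ and $\eta=(\zeta-4)/(\zeta+2)<1$, your choice $\varepsilon=\log(n)$ gives failure probability $13(T/n)\exp(-\varepsilon)=13T/n^2\leq 13\,n^{1/\eta-2}$, and since $1/\eta>1$ this is \emph{not} $O(n^{-1})$: for instance $\zeta=10$ gives $\eta=1/2$, so the bound is of order a constant and hence vacuous, and for $\zeta$ close to $4$ it diverges. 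The paper instead sets $\varepsilon=(1/\eta)\log(n)$ (roughly $\log T$), so that $\exp(-\varepsilon)=n^{-1/\eta}\leq T^{-1}$ and $13(T/n)\exp(-\varepsilon)\leq 13\,n^{-1}$ exactly as claimed; the larger $\varepsilon$ only costs a factor $1/\eta$ (a constant depending on $\zeta$) absorbed into $C$ in the $\sqrt{\log(n)/n}$ and $\log(n)/n$ terms. With this $\varepsilon$ and the paper's choice $r=s=2$ (mixing exponent $s/(r(r+s))=1/4$), the coupling term satisfies $\beta^{1/4}\bigl(\lfloor T/(n+1)\rfloor\bigr)\exp(\varepsilon)\leq n^{\zeta/4-\zeta/(4\eta)+1/\eta}=n^{-1/2}$, since $\zeta/4+(1/\eta)(1-\zeta/4)=\zeta/4-(\zeta+2)/4=-1/2$; this identity is precisely what singles out $\eta=(\zeta-4)/(\zeta+2)$, the ``bookkeeping'' you flagged but did not carry out.

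A secondary flaw: your coupling factorization $|g(\bm Z_t,\bm\theta)-g(\bm Z_t^*,\bm\theta)|=|f_{\bm\theta}(\bm X_t)-f_{\bm\theta}(\bm X_t^*)|\,|2Y_t^{\mathrm{mix}}-\cdots|$ implicitly treats the two copies as sharing the same target variable. But $\bm Z_t^*=(Y_t^*,\bm X_t^{*\prime})'$ carries its own $Y_t^*$, so the difference of squares has first factor $(Y_t-Y_t^*)-(f_{\bm\theta}(\bm X_t)-f_{\bm\theta}(\bm X_t^*))$, and the coupling distance on $\mathcal Z=\mathcal Y\times\mathcal X$ must dominate the target discrepancy as well: the paper takes $d_{\mathcal Z}(\bm z_1,\bm z_2)=12C_1(|y_1-y_2|+d_{\mathcal X}(\bm x_1,\bm x_2))$ and verifies \ref{asm:coupling} for this $d_{\mathcal Z}$ with $r=s=2$. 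Your verification of \ref{asm:increments_and_tail} (same factorization, $\alpha=1$, though the constant is of order $C_1^2$ rather than $C_1$) and your use of the basic inequality match the paper.
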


A number of remarks on the proposition are in order.
First, the proposition implies that in our framework the ERM is consistent for prediction in the sense that $| R( \hat{\bm \theta} ) - \inf_{\bm \theta \in \theta} R(\bm \theta) | \stackrel{p}{\rightarrow} 0$.
Second, it is insightful to provide a simplified expression for the main claim of the proposition.
When $T$ is sufficiently large and assuming that the dimensionality of the parameter space and of the data is fixed we have that there is a positive constant $C$ such that 
\[
	R( \hat{\bm \theta} ) \leq \inf_{\bm \theta \in \theta} R(\bm \theta) + C \sqrt{ \log(n) \over n }
\]
holds at least with probability \( 1-13 n^{-1} \).
We recall that the rate of convergence $\sqrt{\log(n)/n}$ is typically referred to as the classical rate of convergence of empirical risk minimization in the learning literature with i.i.d.~data 
\citep[Ch.~12]{Devroye:Gyorfi:Lugosi:1996}. 
Thus, our results recovers the classical rate of converge once we replace the sample size $T$ with the effective sample size $n$.
We highlight that the proposition relies on fairly weak conditions on the sequence of mixing coefficients. 
In particular, it requires (a sufficiently fast rate of) polynomial decay as opposed to several contributions in the literature which typically assume geometric decay
\citep{Jiang:Tanner:2010,Brownlees:Gudmundsson:2021,Brownlees:LlorensTerrazas:2021}.
It is worth noting that the faster the rate of decay of the mixing coefficients (as captured by a larger value of $\zeta$), the smaller the discrepancy between $n$ and $T$ (as reflected by a value of $\eta$ closer to unity).

\subsection{Single-layer Neural Network}\label{sec:neuralnet}

It is instructive to apply Proposition~\ref{corollary:erm:reg} to a specific class of regression models in order to illustrate more concretely the implications of our results. 
In this section, we derive learning rates for a class of neural network models, specifically the single-layer perceptron for regression \citep[Ch.~11]{Hastieetal:2009}. 
We note that neural network models are typically trained using back-propagation algorithms rather than empirical risk minimization. 
Nevertheless, analyzing ERM remains valuable, as it offers theoretical benchmarks for understanding the predictive performance that can be expected to be achieved for this class of models.

The single-layer perceptron for regression may be defined as follows.
We start by defining a set of $K$ derived predictors called hidden units $H_{k\,t}$ for $k=1,\ldots,K$, which are nonlinear transformations of the original set of predictors. These are given by
\begin{equation}\label{eqn:nn:h}
	H_{k\,t}  = \sigma( \bm X_t' \bm w_k ) \text{ for } k=1,\ldots,K ~,
\end{equation}
where $\bm w_k$, $k=1,\ldots,K$, is a set of weight vectors and $\sigma : \mathbb R \rightarrow \mathbb R$ is the so-called activation function. 
Classic choices for $\sigma$ include the rectified linear unit (ReLU) function $\sigma(x)=\max\{0,x\}$ or the sigmoid function $\sigma(x)=1/(1+e^{-x})$.
We assume that $Y_t$ is subGaussian with subGaussian norm $\| \bm Y_t \|_{\psi_2} = \sigma_Y$, and that
$\bm X_t$ is subGaussian with subGaussian norm $\| \bm X_t \|_{\psi_2} = \sigma_X $.
Moreover, we assume that the activation function is sub-differentiable with a bounded first sub-derivative and that $\sigma(0)=0$.
Forecasts for the target variable $Y_t$ are then obtained by combining the hidden units 
\begin{equation}\label{eqn:nn:aggr}
	f_{\bm\theta\,t} = \sum_{k=1}^K \psi_k H_{k\,t} ~,
\end{equation}
where $\psi_1,\ldots,\psi_K$ are additional weights.
Putting together \eqref{eqn:nn:h} and \eqref{eqn:nn:aggr}, we get that the class of prediction rules in the single-layer perceptron for regression is given by 
\begin{align*}
	f_{\bm \theta}(\bm X_t)& = \sum_{k=1}^K \psi_k \sigma( \bm X_t' \bm w_k ) ~,
\end{align*}
with $\bm \theta = (\bm w_1',\ldots,\bm w_K',\psi_1,\ldots,\psi_K)' \in \mathbb R^p$ with $p=Kd+K$. We further assume that $\bm \theta$ belongs to the set $\Theta$ that is compact. 

The following corrollary specializes Proposition \ref{corollary:erm:reg} for the single-layer perceptron for regression.

\begin{corollary}\label{corollary:erm:neural_nets}

	Consider the class of prediction rules given by the single-layer perceptron for regression given by \eqref{eqn:nn:h} and \eqref{eqn:nn:aggr}.
	Suppose $\beta(l) \leq l^{-\zeta}$ for some $\zeta > 4$ and choose $n = \lceil T^\eta \rceil$ where $\eta=(\zeta-4)/(\zeta+2)$.

	Then, for any $T \geq 8$, 
	\begin{align*}
		R( \hat{\bm \theta} ) &\leq \inf_{\bm \theta \in \theta} R(\bm \theta) 
		+ 
		C \left( \sqrt{ d \log(n) \over n } + {d \log(n)\over n} + \sqrt{ \log(d) \over n } \right) ~,
	\end{align*}
	holds at least with probability \( 1-13 n^{-1}  \), 
	where $C$ is a positive constant. 
\end{corollary}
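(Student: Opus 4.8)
The plan is to obtain Corollary~\ref{corollary:erm:neural_nets} as a direct specialization of Proposition~\ref{corollary:erm:reg}: it suffices to verify Conditions~\ref{cond:c1}--\ref{cond:c2} for the single-layer perceptron and then to bound the three model-dependent quantities $\gamma_2(\Theta)$, $\gamma_1(\Theta)$ and $C_{\mathcal Z}$ explicitly in terms of $d$. The guiding principle is that the two parametric factors $\sqrt{d}$ and $d$ in the claimed bound should come entirely from the covering numbers of $\Theta$, while only the logarithmic factor $\sqrt{\log d}$ should enter through the coupling constant $C_{\mathcal Z}$.

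First I would verify \ref{cond:c1}. Writing $f_{\bm\theta}(\bm X)=\sum_k \psi_k\sigma(\bm X'\bm w_k)$ and telescoping gives $f_{\bm\theta_1}(\bm X)-f_{\bm\theta_2}(\bm X)=\sum_k \psi_{1,k}[\sigma(\bm X'\bm w_{1,k})-\sigma(\bm X'\bm w_{2,k})]+\sum_k(\psi_{1,k}-\psi_{2,k})\sigma(\bm X'\bm w_{2,k})$. Using $\sigma(0)=0$ and the bounded first sub-derivative (so $\sigma$ is globally Lipschitz with some constant $L_\sigma$), together with the sub-Gaussian bound $\|\bm X'\bm u\|_{\psi_2}\le \sigma_X\|\bm u\|_2$ and the compactness of $\Theta$ (which bounds $|\psi_k|$ and $\|\bm w_k\|_2$), each summand's $\psi_2$ quasi-norm is controlled by $\|\bm w_{1,k}-\bm w_{2,k}\|_2$ and $|\psi_{1,k}-\psi_{2,k}|$. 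A Cauchy--Schwarz step passes from the $K$-term sums to the Euclidean norm on $\mathbb R^p$, yielding \ref{cond:c1}$(i)$ with $d_\Theta$ the Euclidean distance and a constant $C_1$ depending only on $K$, $\sigma_X$, $L_\sigma$ and the radius of $\Theta$. For \ref{cond:c1}$(ii)$ one has $\|Y\|_{\psi_2}=\sigma_Y$ directly, while $\sup_{\bm\theta}|f_{\bm\theta}(\bm X)|\le L_\sigma(\sup_\Theta\sum_k|\psi_k|\,\|\bm w_k\|_1)\,\|\bm X\|_\infty$, whose $\psi_2$ norm is bounded by the sub-Gaussian maximal inequality over the $d$ coordinates of $\bm X$.

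Next I would verify \ref{cond:c2}. Taking $d_{\mathcal X}(\bm x_1,\bm x_2)=C\|\bm x_1-\bm x_2\|_\infty$ with $C=L_\sigma\sup_\Theta\sum_k|\psi_k|\,\|\bm w_k\|_1<\infty$, part $(i)$ holds because $\mathcal X$ is a closed subset of $\mathbb R^d$, part $(ii)$ follows from $|f_{\bm\theta}(\bm x_1)-f_{\bm\theta}(\bm x_2)|\le L_\sigma\sum_k|\psi_k|\,\|\bm w_k\|_1\,\|\bm x_1-\bm x_2\|_\infty$, and part $(iii)$ from $\|d_{\mathcal X}(\bm X,\bm x)\|_{\psi_2}=C\,\|\,\|\bm X-\bm x\|_\infty\,\|_{\psi_2}\lesssim \sigma_X\sqrt{\log d}$, once more by the maximal inequality. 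This identifies $C_{\mathcal Z}\asymp\sqrt{\log d}$. For the complexity terms I would invoke the covering-number bound for $\gamma_\alpha$ (Proposition~\ref{prop:gammafunctional:bound}): since $\Theta$ is a compact subset of $\mathbb R^p$ with $p=K(d+1)$ and $d_\Theta$ is Euclidean, the volumetric estimate $\log\mathcal N(\Theta,\varepsilon)\lesssim p\log(1/\varepsilon)$ gives, after evaluating the resulting Dudley-type integrals, $\gamma_2(\Theta)\lesssim\sqrt{p}\asymp\sqrt{d}$ and $\gamma_1(\Theta)\lesssim p\asymp d$ (with $K$ and the diameter of $\Theta$ treated as constants).

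Substituting $\gamma_2(\Theta)\asymp\sqrt{d}$, $\gamma_1(\Theta)\asymp d$ and $C_{\mathcal Z}\asymp\sqrt{\log d}$ into the conclusion of Proposition~\ref{corollary:erm:reg} produces the three terms $\sqrt{d\log(n)/n}$, $d\log(n)/n$ and $\sqrt{\log(d)/n}$ together with the probability $1-13n^{-1}$; the requirement $T\ge 8$ is used only to guarantee $n=\lceil T^\eta\rceil\ge 2$ (so that $\log(n)>0$) and to absorb lower-order constants. I expect the main obstacle to be the careful bookkeeping of the dimension dependence, namely keeping the parametric factors $\sqrt{d}$ and $d$ confined to the chaining functionals while showing that the coupling and envelope contributions scale only as $\sqrt{\log d}$. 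The delicate point is that a naive bound on $\sup_{\bm\theta}f_{\bm\theta}(\bm X)$ through the Euclidean radius of $\Theta$ would instead inject a spurious factor $\sqrt{d}$ into $C_{\mathcal Z}$; the device that avoids this is the consistent use of the $\ell_\infty$ geometry on $\mathcal X$ and the sub-Gaussian maximal inequality for the predictor coordinates, which trades the $\sqrt{d}$ for $\sqrt{\log d}$.
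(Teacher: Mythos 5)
Your overall strategy is the same as the paper's (verify \ref{cond:c1}--\ref{cond:c2}, bound $\gamma_2(\Theta)\lesssim\sqrt{d}$ and $\gamma_1(\Theta)\lesssim d$ through covering numbers, get a $\sqrt{\log d}$ coupling constant, plug into Proposition \ref{corollary:erm:reg}), but there is a genuine gap in \emph{where} you deploy the $\ell_\infty$/maximal-inequality device. You verify \ref{cond:c1}.$(ii)$ by writing $\sup_{\bm\theta}|f_{\bm\theta}(\bm X)|\le L_\sigma\bigl(\sup_\Theta\sum_k|\psi_k|\,\|\bm w_k\|_1\bigr)\|\bm X\|_\infty$ and invoking the sub-Gaussian maximal inequality over the $d$ coordinates, which makes $C_1\asymp\sqrt{\log d}$. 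The problem is how $C_1$ propagates through Proposition \ref{corollary:erm:reg}: in its proof, \ref{asm:increments_and_tail} is verified with $C_\Theta=16C_1^2$, and this constant multiplies \emph{both} Talagrand functionals in the final bound, while the coupling constant constructed there is $C_{\mathcal Z}\asymp C_1(C_1+C_2)$, not just $C_2$. With your $C_1\asymp\sqrt{\log d}$ the conclusion you can actually extract is of order $\log(d)\sqrt{d\log(n)/n}+d\log(d)\log(n)/n+\log(d)/\sqrt{n}$, which carries spurious $\log d$ factors in all three terms and therefore does not establish the corollary as stated, whose whole content is the explicit dimension dependence $\sqrt{d}$, $d$, $\sqrt{\log d}$ with a dimension-free $C$.

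The paper avoids this by confining the maximal inequality to \ref{cond:c2}.$(iii)$, where it only affects $C_2$ and hence only the third term, and by verifying \ref{cond:c1}.$(ii)$ with a different, $d$-free device: write $f_{\bm\theta}(\bm X)=\|\bm\psi\|_2\,\sigma(\bm X,\bm w_1,\ldots,\bm w_K)'(\bm\psi/\|\bm\psi\|_2)$ and bound $\|\sigma(\bm X'\bm w_k)\|_{\psi_2}\le C'LC_\Theta\sigma_X$ via a symmetrization argument using $\sigma(0)=0$ and $\|\bm X'\bm w_k\|_{\psi_2}\le\sigma_X\|\bm w_k\|_2$; no $\sqrt{d}$ or $\sqrt{\log d}$ enters because the sub-Gaussian norm of the vector $\bm X$ is defined through its one-dimensional marginals. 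This yields $C_1\asymp\sqrt{K}$, free of $d$, so the $\sqrt{\log d}$ surfaces only in $C_{\mathcal Z}$. Your closing remark identifies the right danger (a naive Euclidean sup over $\Theta$ injects $\|\bm X\|_2\asymp\sqrt{d}$), but the trade of $\sqrt{d}$ for $\sqrt{\log d}$ must be made only in the coupling constant, not in the envelope constant $C_1$: made at the envelope, its cost lands on the dominant chaining terms. (A shared, secondary issue: both you and the paper treat $\sup_k\|\bm w_k\|_1$ as a dimension-free constant, which compactness measured in $\ell_2$ alone does not warrant; this affects the two arguments equally and is not the source of the discrepancy above.)
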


\section{Proof of Theorem \ref{thm:maintheorem}}\label{sec:mainproof}

In this section we detail the proof of Theorem \ref{thm:maintheorem}.
To simplify exposition throughout this section we use $g_{\bm\theta}(\bm Z_t)$ to denote $g_{\bm\theta}(\bm Z_t) - \mathbb E(g_{\bm\theta}(\bm Z_t))$.

First, we introduce a coupling result \citep[Theorem 2.9]{MerlevedePeligrad2002} that is key to the proof.

\begin{prop} \label{prop:coupling}
Let $\{ \bm X_i \}_{i=1}^n$ be a sequence of random vectors taking values in the set $\mathcal X \subset \mathbb R^d$ equipped with the metric $d_{\mathcal X}$ such that $(\mathcal X,d_{\mathcal X})$ is a Polish space.
Then, we can redefine $\{ \bm X_i \}_{i=1}^n$ onto a richer probability space together with a sequence of $\{ \bm X^*_i \}_{i=1}^n$ of independent random vectors such that for each $i \in \{1,\ldots,n\}$ we have that
\begin{enumerate}[(i)]
	\item $ \bm X^*_i $ has the same distribution as $\bm X_i$ and is independent of $\mathcal F_1^{i-1} = \sigma(\bm X_1,\ldots,\bm X_{i-1})$;
	\item if $\mathbb E( d_{\mathcal X}^r(\bm X_i,\bm x) ) < \infty $ for an $r>0$ and some $\bm x \in \mathcal X$ then it holds
	\[
		\mathbb E( d_{\mathcal X}^r(\bm X_i,\bm X_i^*) ) \leq 2^{r+2} \int_0^{\beta(\mathcal F_1^{i-1},\sigma(\bm X_i))} Q_{ d_{\mathcal X}^r(\bm X_i,\bm x) }(u) d u ~,
	\]
	where $Q_{W}(u) = \inf\{ q \geq 0 : \mathbb P( W > q ) \leq u \}$ denotes the ``upper tail'' quantile function associated with the nonnegative random variable $W$;
	\item if $\mathbb E( d_{\mathcal X}^{r+s}(\bm X_i,\bm x) ) < \infty $ for an $r>0$ and $s>0$ and some $\bm x \in \mathcal X$ then it holds
	\[
		\mathbb E( d_{\mathcal X}^r(\bm X_i,\bm X_i^*) ) \leq 
		2^{r+2} \beta^{s/(r+s)}(\mathcal F_1^{i-1},\sigma(\bm X_i)) ( \mathbb E( d_{\mathcal X}^{r+s}(\bm X_i^*,\bm x) ) )^{r/(r+s)}  ~.
	\]
\end{enumerate}
\end{prop}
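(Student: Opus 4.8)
The plan is to ground the whole argument in Berbee's coupling lemma, which is the device that turns a bound on the $\beta$-mixing coefficient into an almost-sure agreement of two variables. The basic building block is a single-step coupling: given a random vector $\bm X$ and a $\sigma$-algebra $\mathcal G$, and provided the underlying space carries a variable uniform on $[0,1]$ independent of $(\bm X,\mathcal G)$, one can construct $\bm X^*$ that has the same law as $\bm X$, is independent of $\mathcal G$, and satisfies $\mathbb P(\bm X \neq \bm X^*) = \beta(\mathcal G,\sigma(\bm X))$. This is precisely where the Polish hypothesis enters: it guarantees the existence of a regular conditional distribution of $\bm X$ given $\mathcal G$, which is what allows the maximal coupling realizing the total-variation distance to be constructed explicitly from the auxiliary uniform variable.

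First I would assemble the independent sequence $\{\bm X_i^*\}_{i=1}^n$ by induction on $i$. At stage $i$ I apply the single-step construction with $\mathcal G = \mathcal F_1^{i-1} = \sigma(\bm X_1,\ldots,\bm X_{i-1})$, enlarged by the auxiliary randomness $\delta_1,\ldots,\delta_{i-1}$ used in the earlier stages, and a fresh uniform variable $\delta_i$ drawn independently of everything constructed so far. This yields $\bm X_i^*$ equal in law to $\bm X_i$ and independent of $\mathcal F_1^{i-1}$, giving $(i)$ at once. Since each $\delta_j$ is independent of the original sequence $(\bm X_1,\ldots,\bm X_n)$, adjoining $\sigma(\delta_1,\ldots,\delta_{i-1})$ to $\mathcal F_1^{i-1}$ leaves the mixing coefficient against $\sigma(\bm X_i)$ unchanged, so the coupling is still governed by $\beta(\mathcal F_1^{i-1},\sigma(\bm X_i))$; at the same time $\bm X_i^*$ is independent of the larger $\sigma$-field that contains $\bm X_1^*,\ldots,\bm X_{i-1}^*$, which is what finally delivers the mutual independence of the whole collection. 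I expect this measure-theoretic bookkeeping — keeping the constructed variables mutually independent while each one remains coupled to its original at the required rate — to be the main obstacle, and the place where the Polish assumption and the separately supplied auxiliary uniforms are genuinely needed.

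With the coupling in hand, the moment bounds are comparatively routine. Since $d_{\mathcal X}(\bm X_i,\bm X_i^*)$ vanishes on the agreement event $\{\bm X_i = \bm X_i^*\}$, I would write $\mathbb E(d_{\mathcal X}^r(\bm X_i,\bm X_i^*)) = \mathbb E\big(d_{\mathcal X}^r(\bm X_i,\bm X_i^*)\,\mathbf 1_{\{\bm X_i\neq\bm X_i^*\}}\big)$, bound $d_{\mathcal X}^r(\bm X_i,\bm X_i^*)$ through the triangle inequality and the elementary estimate $(a+b)^r \le 2^r(a^r+b^r)$ by $2^r\big(d_{\mathcal X}^r(\bm X_i,\bm x)+d_{\mathcal X}^r(\bm X_i^*,\bm x)\big)$, and then invoke the Hardy--Littlewood (Fr\'echet) inequality $\mathbb E(W\,\mathbf 1_A)\le \int_0^{\mathbb P(A)} Q_W(u)\,du$ for a nonnegative $W$. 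Applying this with $W = d_{\mathcal X}^r(\bm X_i,\bm x)$ and with $W = d_{\mathcal X}^r(\bm X_i^*,\bm x)$ on the event $A=\{\bm X_i\neq\bm X_i^*\}$, whose probability is at most $\beta(\mathcal F_1^{i-1},\sigma(\bm X_i))$, and using that $\bm X_i^*$ and $\bm X_i$ share the same quantile function, gives exactly the quantile-integral bound of $(ii)$, with the numerical constant $2^{r+2}$ absorbing the factors from the triangle inequality and the symmetrization between $\bm X_i$ and $\bm X_i^*$.

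Finally, $(iii)$ follows from $(ii)$ by H\"older's inequality applied to the quantile integral. Writing $\int_0^{\beta} Q_W(u)\,du = \int_0^1 Q_W(u)\,\mathbf 1_{\{u\le\beta\}}\,du$ with $W = d_{\mathcal X}^r(\bm X_i,\bm x)$ and splitting with conjugate exponents $(r+s)/s$ and $(r+s)/r$, I would bound it by $\beta^{s/(r+s)}\big(\int_0^1 Q_W^{(r+s)/r}(u)\,du\big)^{r/(r+s)}$. The identity $\int_0^1 Q_W^{p}(u)\,du = \mathbb E(W^{p})$, valid because $Q_W(U)$ has the same law as $W$ for $U$ uniform on $[0,1]$, turns the last factor into $\big(\mathbb E(d_{\mathcal X}^{r+s}(\bm X_i,\bm x))\big)^{r/(r+s)}$, which equals $\big(\mathbb E(d_{\mathcal X}^{r+s}(\bm X_i^*,\bm x))\big)^{r/(r+s)}$ by equality of laws, yielding the stated bound.
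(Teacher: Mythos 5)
Your proposal is correct, but it takes a genuinely different route from the paper. The paper's proof of this proposition is a pure citation: part $(i)$ is read off from parts $(a)$--$(b)$ of Theorem 2.9 in Merlev\`ede and Peligrad (2002), part $(ii)$ from part $(f)$ of that theorem, and part $(iii)$ from Remark 2.5 of the same reference, with no argument reproduced. You instead rebuild the result from first principles: Berbee's maximal coupling (applied inductively, with the $\sigma$-field at stage $i$ enlarged by the auxiliary uniforms, and the standard fact that adjoining independent randomness leaves $\beta(\mathcal F_1^{i-1},\sigma(\bm X_i))$ unchanged), then the Fr\'echet--Hardy--Littlewood inequality $\mathbb E(W\,\mathbf 1_A)\le\int_0^{\mathbb P(A)}Q_W(u)\,du$ together with the triangle inequality and $(a+b)^r\le 2^r(a^r+b^r)$ for part $(ii)$, and H\"older with exponents $(r+s)/s$, $(r+s)/r$ plus the identity $\int_0^1 Q_W^p(u)\,du=\mathbb E(W^p)$ for part $(iii)$. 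All of these steps check out: the bookkeeping that makes $\bm X_1^*,\ldots,\bm X_{i-1}^*$ measurable with respect to the enlarged conditioning $\sigma$-field (hence the collection mutually independent) is handled correctly, and your chain of estimates actually yields the constant $2^{r+1}$, slightly better than the stated $2^{r+2}$, so the claimed inequalities follow a fortiori. What each approach buys: the paper's citation is shorter and inherits the full strength of the Merlev\`ede--Peligrad coupling (whose properties $(a)$--$(f)$ hold simultaneously for one construction), while your derivation is self-contained, makes transparent exactly where the Polish-space hypothesis enters (existence of regular conditional distributions for the maximal coupling), and shows that for the three properties actually needed here, the classical Berbee lemma plus elementary quantile inequalities suffice.
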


Our proof strategy is built upon Proposition \ref{prop:coupling}.
Let $M$ be a natural number such that $T/(n+1) < M \leq T/n$.
Consider the extension of the sequence of vectors $\{\bm  Z_1, \ldots, \bm  Z_T \}$ given by $\{ \bm z, \bm Z_1, \ldots, \bm Z_T , \bm z, \bm z, \ldots \}$ where $\bm z$ denotes an arbitrary  
element in $\mathcal Z$ (which is deterministic). 
Define $\bm W_{i,j} = \bm Z_{i M +j}$ for $i\in \{0,\ldots,n\}$ and $j\in \{0,\ldots,M-1\}$.
For each $j \in \{0,\ldots,M-1\}$ consider the sequence $\{ \bm W^*_{0,j} , \ldots , \bm W^*_{n,j}\}$ constructed from $\{ \bm W_{0,j} , \ldots , \bm W_{n,j}\}$ using Proposition \ref{prop:coupling}.
Then we have  
\begin{align*}
	& \sup_{\theta \in \Theta} \left|{1\over T}\sum_{t=1}^T g_{\bm\theta}(\bm Z_t) \right| = 
	\sup_{\theta \in \Theta} {1\over T} \left|\sum_{t=1}^T g_{\bm\theta}(\bm Z_t) - \sum_{j=0}^{M-1} \sum_{i=0}^{n} g_{\bm\theta}(\bm W^*_{i,j}) - g_{\bm\theta}(\bm W^*_{i,j}) \right| \\
	& \quad 
	= \sup_{\theta \in \Theta} {1\over T} \left|\sum_{j=0}^{M-1} \sum_{i=0}^{n} g_{\bm\theta}(\bm W_{i,j}) - \sum_{j=0}^{M-1} \sum_{i=0}^{n} g_{\bm\theta}(\bm W^*_{i,j}) 
	+ \sum_{j=0}^{M-1} \sum_{i=0}^{n} g_{\bm\theta}(\bm W^*_{i,j}) \right| \\
	& \quad 
	\leq 
	\sup_{\theta \in \Theta}  {1\over Mn} \left| \sum_{j=0}^{M-1} \sum_{i=0}^{n} g_{\bm\theta}(\bm W^*_{i,j}) \right| 
	+ \sup_{\theta \in \Theta} {1\over Mn} \left|\sum_{j=0}^{M-1} \sum_{i=0}^{n} g_{\bm\theta}(\bm W_{i,j}) - \sum_{j=0}^{M-1} \sum_{i=0}^{n} g_{\bm\theta}(\bm W^*_{i,j}) \right| ~.
\end{align*}
Note that the second equality follows from the fact that $g_{\bm\theta}(\bm W_{i,j},\bm\theta)=0$ when $\bm W_{i,j}=\bm z$.
Furthermore, we have that
\begin{align*}
	& \sup_{\theta \in \Theta}  {1\over Mn} \left| \sum_{j=0}^{M-1} \sum_{i=0}^{n} g_{\bm\theta}(\bm W^*_{i,j}) \right| \\ 
	& \quad \leq {1\over Mn} \left| \sum_{j=0}^{M-1} \sum_{i=0}^{n} g_{\bm\theta_0}(\bm W^*_{i,j}) \right|
	+ \sup_{\theta \in \Theta}  {1\over Mn} \left| \sum_{j=0}^{M-1} \sum_{i=0}^{n} g_{\bm\theta}(\bm W^*_{i,j}) - g_{\bm\theta_0}(\bm W^*_{i,j})\right| ~,
\end{align*}
where $\bm \theta_0$ is defined \ref{asm:increments_and_tail}.$(ii)$.
Then, for any $\varepsilon', \varepsilon'_1, \varepsilon'_2, \varepsilon'_3 \geq 0$ such that $\varepsilon' = \varepsilon'_1 + \varepsilon'_2 + \varepsilon_3'$ we have that 
\begin{align}
	& \mathbb P\left(\sup_{\bm\theta\in\Theta}\left|{1 \over T} \sum_{t=1}^T g_{\bm\theta}(\bm Z_t) \right| \geq \varepsilon' \right) 	
	\leq \mathbb P\left( {1 \over Mn} \sum_{j=0}^{M-1}\left| \sum_{i=0}^{n} g_{\bm\theta_0}(\bm W^*_{i,j}) \right| \geq {\varepsilon'_1} \right) \nonumber \\
	& \quad + \mathbb P\left(\sup_{\bm\theta\in\Theta}{1 \over Mn} \sum_{j=0}^{M-1}\left| \sum_{i=0}^{n} g_{\bm\theta}(\bm W^*_{i,j})-  g_{\bm\theta_0}(\bm W^*_{i,j}) \right| \geq {\varepsilon'_2} \right) \nonumber \\
	& \quad + \mathbb P\left(\sup_{\bm\theta\in\Theta}{1 \over Mn}\sum_{j=0}^{M-1} \left| \sum_{i=0}^{n} g_{\bm\theta}(\bm W_{i,j}) - g_{\bm\theta}(\bm W^*_{i,j}) \right| \geq {\varepsilon'_3} \right) \nonumber \\
	& \quad \leq 
	{T \over n} \max_{0\leq j \leq M-1} \mathbb P\left( \left| {1 \over n}\sum_{i=0}^{n} g_{\bm\theta_0}(\bm W^*_{i,j}) \right| \geq {\varepsilon'_1} \right) \nonumber \\
	& \quad + {T \over n} \max_{0\leq j \leq M-1}  \mathbb P\left(\sup_{\bm\theta\in\Theta} \left| \sum_{i=0}^{n} g_{\bm\theta}(\bm W^*_{i,j})-  g_{\bm\theta_0}(\bm W^*_{i,j}) \right| \geq {\varepsilon'_2} \right) \nonumber \\
	& \quad + {T \over n} \max_{0\leq j \leq M-1}\mathbb P\left( \sup_{\bm\theta\in\Theta} \left| {1 \over n} \sum_{i=0}^{n} g_{\bm\theta}(\bm W_{i,j}) - g_{\bm\theta}(\bm W^*_{i,j}) \right| \geq {\varepsilon'_3} \right) ~. \label{eqn:basicdecomp} 
\end{align}
Our objective is to find appropriate bounds for the three terms in \eqref{eqn:basicdecomp}.

We begin with the first term in \eqref{eqn:basicdecomp}. 
We introduce a concentration result for sub-Weibull random variables that is based on \citep[Theorem 3.1]{KuchibhotlaChakrabortty2022} and results by \citet{Latala1997}.

\begin{prop}\label{prop:subweibull:conc}
Let $X_1,\ldots,X_n$ be independent zero-mean sub-Weibull($\alpha$) random variables of order $\alpha$ for some $\alpha > 0$ such that $\| X_i \|_{\psi_\alpha} < C_X$ for each $i = 1,\ldots, n$.

Then, for any $\varepsilon\geq 0$ it holds that
\[
	\mathbb P\left( \left| \sum_{i=1}^n X_i \right| \geq C'_\alpha C_X \sqrt{n} \sqrt{ \varepsilon } + C''_\alpha C_X  n^{(\alpha-1)/\alpha \vee 0} \varepsilon^{1/\alpha} \right) \leq e \exp( -\varepsilon ) ~,
\]
where $C'_\alpha$ and $C''_\alpha$ are constants that only depend on $\alpha$.
\end{prop}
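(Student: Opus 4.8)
The plan is to convert the sub-Weibull tail hypothesis into polynomial moment growth, to bound the $L_p$-norm of the sum by the Latała-type moment inequality of \citet{KuchibhotlaChakrabortty2022}, and then to pass from moments to tails by a $p$-th moment Markov argument with the calibrated choice $p=\varepsilon$. First I would record the moment characterization of sub-Weibull random variables (collected in Appendix~\ref{App:sub-Wei}): the hypothesis $\|X_i\|_{\psi_\alpha}<C_X$ is equivalent, up to a constant depending only on $\alpha$, to the moment growth $\|X_i\|_{L_p}\le K_\alpha C_X\,p^{1/\alpha}$ for all $p\ge1$. In particular $\|X_i\|_{L_2}\le 2^{1/\alpha}K_\alpha C_X$, so that the variance sum satisfies $\sum_{i=1}^n\|X_i\|_{L_2}^2\le n\,(2^{1/\alpha}K_\alpha C_X)^2$.

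Next I would apply Theorem 3.1 of \citet{KuchibhotlaChakrabortty2022}, which rests on Latała's moment inequality \citep{Latala1997}, to the independent mean-zero sequence $X_1,\dots,X_n$. That result controls $\bigl\|\sum_i X_i\bigr\|_{L_p}$ by a sum of a ``Gaussian'' term proportional to $\sqrt{p}\,\bigl(\sum_i\|X_i\|_{L_2}^2\bigr)^{1/2}$ and a ``heavy-tail'' term arising from the large-$s$ range of Latała's supremum. Feeding in the uniform moment bound and the variance estimate yields, for all $p\ge2$,
\[
	\Big\|\sum_{i=1}^n X_i\Big\|_{L_p}\le C'_\alpha C_X\,\sqrt{n}\,\sqrt{p}+C''_\alpha C_X\,n^{(\alpha-1)/\alpha\,\vee\,0}\,p^{1/\alpha}~.
\]
Here the factor $\sqrt{n}$ is the square root of the variance sum, while the factor $n^{(\alpha-1)/\alpha\,\vee\,0}$ emerges when the optimization over Latała's moment exponent is carried out using $\|X_i\|_{L_s}\le K_\alpha C_X\,s^{1/\alpha}$; the truncation at $0$ reflects that for $\alpha<1$ the heavy tail is dictated by a single summand and so does not grow with $n$, whereas for $\alpha\ge1$ it grows like $n^{(\alpha-1)/\alpha}$.

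Finally I would pass to tails. For $p\ge1$, Markov's inequality applied to $\bigl|\sum_i X_i\bigr|^p$ gives $\mathbb P\bigl(|\sum_i X_i|\ge e\,\|\sum_i X_i\|_{L_p}\bigr)\le e^{-p}$. Choosing $p=\varepsilon$ (legitimate once $\varepsilon\ge2$), substituting the moment bound of the previous step, and absorbing the constant $e$ into $C'_\alpha$ and $C''_\alpha$ reproduces exactly the stated threshold with probability at most $e^{-\varepsilon}\le e\exp(-\varepsilon)$. The range $\varepsilon<2$ is then disposed of separately: for $\varepsilon\le1$ the claimed bound $e\exp(-\varepsilon)$ exceeds one and the statement is vacuous, while for $1\le\varepsilon\le2$ a second-moment (Chebyshev) estimate with $C'_\alpha$ taken sufficiently large keeps the probability below $e^{-1}\le e\exp(-\varepsilon)$; the leading factor $e$ is precisely the slack that makes the inequality hold uniformly for every $\varepsilon\ge0$.

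The routine parts are the moment equivalence of the first step and the Markov conversion of the last step. The hard part will be the second step: extracting the precise exponent $(\alpha-1)/\alpha\,\vee\,0$ of $n$ from Latała's supremum and treating the regimes $\alpha\ge1$ and $\alpha<1$ on the same footing, in particular verifying that the heavy-tail term genuinely loses its $n$-dependence when $\alpha<1$. The small-$\varepsilon$ bookkeeping is minor but must be handled with care so that the single prefactor $e$ absorbs the $p\ge2$ restriction inherited from the moment inequality.
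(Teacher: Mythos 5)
Your proposal is correct and takes essentially the same route as the paper: a Lata\l{}a-type $L_p$ bound for $\sum_i X_i$ followed by Markov's inequality with $p=\varepsilon$, which is exactly the structure of the paper's proof (the paper, however, reproves the moment inequality of \citet[Theorem~3.1]{KuchibhotlaChakrabortty2022} from scratch---symmetrization, truncation to one-sided tails, and Proposition~\ref{prop:latala}---to obtain explicit constants and fix a typo there, rather than invoking it as a black box as you do). The only other difference is minor bookkeeping at small $\varepsilon$: the paper extends its moment bound to all $p\geq 1$ so that its moment-to-tail conversion (Proposition~\ref{prop:subweibull:GBOtailbound}) covers every $\varepsilon\geq 1$ at once with $\varepsilon<1$ vacuous, whereas you keep $p\geq 2$ and patch $1\leq\varepsilon<2$ by a Chebyshev estimate with $C'_\alpha$ enlarged; both treatments are valid.
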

We remark that the explict expressions for the constants $C'_\alpha$ and $C''_\alpha$ can be deduced in the proof of the proposition.

We note that \ref{asm:increments_and_tail}.$(ii)$ and Proposition \ref{prop:subweibull:conc} imply that for any $j \in \{ 0, \ldots, M-1 \}$ we have that
\begin{align}\label{eqn:part1}
	& \mathbb P\left( \left|{1 \over  n}\sum_{i=0}^{n} g_{\bm\theta_0}(\bm W^*_{i,j}) \right| 
	\geq {C'_\alpha C_{\Theta} \sqrt{(\varepsilon'_1)}\over \sqrt{n}}+ {C''_\alpha C_{\Theta} ( \varepsilon'_1)^{1\over \alpha}\over n^{1/\alpha\vee 1}} \right) 
	\leq e \exp\left( - \varepsilon'_1\right) ~. 
\end{align}
Notice that in this result we are using the fact that the random variable $g_{\bm\theta}(\bm W^*_{i,j})$ is degenerate at zero when $\bm W^*_{i,j} = \bm z$ and that in this case we have that 
$\| g_{\bm\theta}(\bm W^*_{i,j}) \|_{\psi_\alpha} < C_{\Theta}$.

We continue with the second term in \eqref{eqn:basicdecomp}. 
\ref{asm:increments_and_tail}.$(i)$ and Proposition \ref{prop:subweibull:conc} imply that for any $j \in \{ 0, \ldots, M-1 \}$, any $\bm \theta_1, \bm \theta_2 \in \Theta$ and any $\varepsilon\geq 0$ we have that
\begin{align*}
	& \mathbb P\left( \left|{1 \over  n}\sum_{i=0}^{n} g_{\bm\theta_1}(\bm W^*_{i,j}) - g_{\bm\theta_2}(\bm W^*_{i,j})  \right| \geq \left({C'_\alpha C_\Theta \sqrt\varepsilon\over \sqrt{n}}+ {C''_\alpha C_\Theta \varepsilon^{1\over \alpha}\over n^{1/\alpha\vee 1}} \right) d_{\Theta}(\bm\theta_1,\bm\theta_2)\right) 
	\leq e \exp\left( -\varepsilon \right) ~.
\end{align*}
In other words, the empirical process satisfies a sub-Weibull increment-type condition.
Such a property allows us to develop a generic chaining argument to control its supremum and, hence to control the second term in \eqref{eqn:basicdecomp}.

\begin{prop}
	[Generic Chaining] \label{prop:chaining:1} 
	Let $\{ X_{\bm \theta} \}_{\bm \theta \in \Theta}$ be a separable zero-mean stochastic process on a metric space $(\Theta,d_\Theta)$ 
	that satisfies for any $\bm \theta_1$, $\bm \theta_2$ in $\Theta$, some $a, b, \alpha>0$ and any $\varepsilon\geq0$
	\[
		\mathbb P\left( |X_{\bm \theta_1}-X_{\bm\theta_2}| \geq a d_\Theta(\bm\theta_1,\bm\theta_2) \sqrt{\varepsilon} + b d_\Theta(\bm\theta_1,\bm\theta_2) \varepsilon^{1/\alpha} \right) 
		\leq e \exp\left( - \varepsilon\right) ~.
	\]
	Then, for any $\bm \theta_0 \in \Theta$ and any $\varepsilon\geq 2$  the event
\begin{align*}
	\sup_{\bm\theta\in\Theta}|X_{\bm\theta}-X_{\bm\theta_0}| 
	\geq 8 a\gamma_2(\Theta)\sqrt\varepsilon+ 4^{(\alpha+1)/\alpha}b\gamma_\alpha(\Theta)\varepsilon^{1/\alpha}
\end{align*}
holds at most with probability $2 \exp(-\varepsilon)$.
\end{prop}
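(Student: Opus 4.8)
The plan is to run the standard generic chaining argument, fixing an admissible sequence of partitions $\{\mathcal A_k\}_{k\geq0}$ of $\Theta$ and choosing, for each $k$, a representative point $\pi_k(\bm\theta)$ in the cell $A_k(\bm\theta)$ containing $\bm\theta$. First I would build the chaining decomposition $X_{\bm\theta}-X_{\bm\theta_0} = \sum_{k\geq1}\bigl(X_{\pi_k(\bm\theta)}-X_{\pi_{k-1}(\bm\theta)}\bigr)$, valid (after taking $\pi_0(\bm\theta)=\bm\theta_0$) by separability and the telescoping limit, since $\pi_k(\bm\theta)\to\bm\theta$ as the diameters $\Delta(A_k(\bm\theta))$ shrink. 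The number of distinct links at level $k$ is at most $|\mathcal A_{k-1}||\mathcal A_k|\leq 2^{2^{k-1}}2^{2^k}\leq 2^{2^{k+1}}$, and each link has length $d_\Theta(\pi_k(\bm\theta),\pi_{k-1}(\bm\theta))\leq \Delta(A_{k-1}(\bm\theta))$.

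The core step is a union bound over these links using the hypothesized two-term tail. For each link I would apply the tail bound at a level $\varepsilon_k$ chosen so that the exponential factor $e\exp(-\varepsilon_k)$, multiplied by the number $2^{2^{k+1}}$ of links, is summable and controlled by the target probability $2\exp(-\varepsilon)$. The natural choice is $\varepsilon_k = u\,2^k$ for a suitable constant $u$ (of order $\log 2$) plus the free parameter $\varepsilon$; concretely, set $\varepsilon_k = \varepsilon + c\,2^k$ so that $2^{2^{k+1}}\,e\exp(-\varepsilon_k)$ sums over $k\geq1$ to at most $2\exp(-\varepsilon)$. With this choice, on the complementary (high-probability) event every link simultaneously satisfies
\[
	|X_{\pi_k(\bm\theta)}-X_{\pi_{k-1}(\bm\theta)}| \leq a\,\Delta(A_{k-1}(\bm\theta))\sqrt{\varepsilon_k} + b\,\Delta(A_{k-1}(\bm\theta))\,\varepsilon_k^{1/\alpha} .
\]
Summing over $k$ and using $\sqrt{\varepsilon + c2^k}\lesssim \sqrt\varepsilon + \sqrt{c}\,2^{k/2}$ together with $(\varepsilon+c2^k)^{1/\alpha}\lesssim \varepsilon^{1/\alpha} + c^{1/\alpha}2^{k/\alpha}$ (subadditivity of $x\mapsto x^{1/\alpha}$ when $\alpha\geq1$, and the reverse comparison absorbed into constants otherwise), the $2^{k/2}$ terms pair with $\Delta(A_{k-1}(\bm\theta))$ to reproduce the definition of $\gamma_2(\Theta)$, and the $2^{k/\alpha}$ terms reproduce $\gamma_\alpha(\Theta)$. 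Taking the supremum over $\bm\theta$ and then the infimum over admissible partitions yields the bound $8a\gamma_2(\Theta)\sqrt\varepsilon + 4^{(\alpha+1)/\alpha}b\gamma_\alpha(\Theta)\varepsilon^{1/\alpha}$; the condition $\varepsilon\geq2$ is what lets the pure-$\gamma$ terms (the parts not multiplied by $\sqrt\varepsilon$ or $\varepsilon^{1/\alpha}$) be folded into the stated constants rather than appearing separately.

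The main obstacle I expect is the bookkeeping of constants: one must track how the geometric series $\sum_k 2^{k/2}\Delta(A_{k-1})$ and $\sum_k 2^{k/\alpha}\Delta(A_{k-1})$ reindex into the functionals $\gamma_2$ and $\gamma_\alpha$ (a shift $k\to k-1$ in the index contributes the factors $2^{1/2}$ and $2^{1/\alpha}$ respectively), and how the level offsets $c\,2^k$ inside $\sqrt{\cdot}$ and $(\cdot)^{1/\alpha}$ combine with the union-bound requirement to produce exactly the clean constants $8$ and $4^{(\alpha+1)/\alpha}$. The genuinely delicate point is ensuring the union bound closes: the term $e\exp(-\varepsilon_k)$ carries an extra factor $e$ from the hypothesis, and $2^{2^{k+1}}=\exp(2^{k+1}\log 2)$ grows doubly-exponentially in $k$, so the linear-in-$2^k$ slack in $\varepsilon_k$ must be chosen with a large enough coefficient $c>2\log 2$ to dominate it; verifying that this same $c$ still yields the advertised constants (and the overall probability $2\exp(-\varepsilon)$ rather than a larger multiple) is where the argument is easiest to get slightly off. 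Everything else is routine once the partition, the representatives, and the level schedule $\varepsilon_k$ are fixed.
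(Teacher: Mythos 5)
Your plan follows the same generic chaining skeleton as the paper (chaining along an admissible partition sequence, union bound over links, resumming into $\gamma$-functionals), but it has a genuine gap at the final step. After chaining along a \emph{single} admissible sequence $\{\mathcal A_k\}$, your bound has the form $\kappa_1 a\sqrt{\varepsilon}\,\sup_{\bm\theta}\sum_{k}2^{k/2}\Delta(A_k(\bm\theta))+\kappa_2 b\,\varepsilon^{1/\alpha}\sup_{\bm\theta}\sum_{k}2^{k/\alpha}\Delta(A_k(\bm\theta))$, with the \emph{same} sequence appearing in both sums. You then assert that ``taking the infimum over admissible partitions'' yields $8a\gamma_2(\Theta)\sqrt{\varepsilon}+4^{(\alpha+1)/\alpha}b\gamma_\alpha(\Theta)\varepsilon^{1/\alpha}$. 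This does not follow: $\gamma_2$ and $\gamma_\alpha$ are infima over possibly incompatible optimizing sequences, and for a sum one only has $\inf_{\mathcal A}\left(f(\mathcal A)+g(\mathcal A)\right)\geq\inf_{\mathcal A}f(\mathcal A)+\inf_{\mathcal A}g(\mathcal A)$, which is the wrong direction; a sequence nearly optimal for $\gamma_2$ may be terrible for $\gamma_\alpha$ and vice versa. The paper closes this hole with a merging construction your plan omits: take $\{\mathcal B_k\}$ with $\sup_{\bm\theta}\sum_k 2^{k/\alpha}\Delta(B_k(\bm\theta))\leq 2\gamma_\alpha$ and $\{\mathcal C_k\}$ with $\sup_{\bm\theta}\sum_k 2^{k/2}\Delta(C_k(\bm\theta))\leq 2\gamma_2$, set $\mathcal A_0=\{\Theta\}$, and let $\mathcal A_k$ be the partition generated by the intersections $B\cap C$ with $B\in\mathcal B_{k-1}$, $C\in\mathcal C_{k-1}$. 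The index shift is exactly what keeps the merged sequence admissible, since $|\mathcal A_k|\leq|\mathcal B_{k-1}||\mathcal C_{k-1}|\leq 2^{2^{k-1}}\cdot 2^{2^{k-1}}=2^{2^k}$, and because $A_k(\bm\theta)\subset B_{k-1}(\bm\theta)\cap C_{k-1}(\bm\theta)$ this single sequence is simultaneously near-optimal for both functionals. Without it your chaining sum cannot be converted into the two-functional bound in the statement.

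A secondary problem is your level schedule $\varepsilon_k=\varepsilon+c\,2^k$. The additive offset forces the splittings $\sqrt{\varepsilon+c2^k}\leq\sqrt{\varepsilon}+\sqrt{c}\,2^{k/2}$ and $(\varepsilon+c2^k)^{1/\alpha}\lesssim\varepsilon^{1/\alpha}+c^{1/\alpha}2^{k/\alpha}$, producing cross terms (weighted by $\sqrt{c}$, $c^{1/\alpha}$, and for $\alpha<1$ an extra factor $2^{1/\alpha-1}$) that must be folded back in using $\varepsilon\geq 2$; the resulting constants depend on the union-bound coefficient $c$ and, as you yourself concede, are never verified to reduce to the advertised $8$ and $4^{(\alpha+1)/\alpha}$, so the proposition as stated is not established. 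The paper avoids this entirely with the multiplicative schedule $\varepsilon_k=2^{k+1}\varepsilon$: each link bound then factors exactly as $\left(a2^{(k+1)/2}\sqrt{\varepsilon}+b2^{(k+1)/\alpha}\varepsilon^{1/\alpha}\right)d_\Theta(\bm\theta_1,\bm\theta_2)$, so every term retains its $\sqrt{\varepsilon}$ or $\varepsilon^{1/\alpha}$ factor with nothing to fold, and the union bound closes for $\varepsilon\geq2$ because $2^{2^{k+1}}\exp(-2^{k+1}\varepsilon)\leq\exp(-\varepsilon)\,(2/e)^{2^{k+1}}$, whose sum over $k\geq1$ multiplied by $e$ is below $2$, delivering both the stated constants and the probability $2\exp(-\varepsilon)$ directly. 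With the merging construction added and the multiplicative schedule in place of the additive one, your argument becomes the paper's proof.
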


We outline here the basic strategy of the generic chaining proof.
We are interested in establishing a high-probability bound for 
\[
	\sup_{\bm\theta\in\Theta}|X_{\bm\theta}-X_{\bm\theta_0}| ~.
\]
To simplify exposition, here we assume that $\Theta$ is finite \citep[Ch.~2]{Talagrand2005}.\footnote{We remark that Proposition \ref{prop:chaining:1} does not rely on this assumption and allows $\Theta$ to be uncountable.}
We begin by constructing a sequence of subsets of $\Theta$ denoted by $\{ \Theta_k \}_{k\geq 0}^K$ such that $\bm \theta_0 \in \Theta_0$ and $\Theta = \Theta_K$.
The sequence of subsets is carefully constructed and may be interpreted as a sequence of progressively finer approximations of $\Theta$, in the sense that any $\bm \theta$ can be more accurately approximated by an element in $\Theta_k$ as $k$ increases.
Let $\pi_k(\bm \theta) = \arg \min_{ s \in \Theta_k} d_\Theta(s,\bm \theta) $ denote the closest element of the set $\Theta_k$ to $\bm \theta$.
Then, by constructing a telescoping sum and applying the triangle inequality we get that 
\[
	\sup_{\bm\theta\in\Theta}|X_{\bm\theta}-X_{\bm\theta_0}|\leq \sup_{\bm\theta\in\Theta}\sum_{k\geq 1}^K|X_{\pi_k(\bm\theta)}-X_{\pi_{k-1}(\bm\theta)}| ~.
\]
Next, for any \(\varepsilon \geq 0\), define the event \(\Omega(\varepsilon)\) as
\[
	\left\{ 
	\text{ for all } k \in \{ 1 , \ldots , K \}, \text{ for any } \bm\theta_1,\bm\theta_2\in \Theta_k,~ |X_{\bm \theta_1}-X_{\bm\theta_2}| \leq c_k(\varepsilon) d_\Theta(\bm\theta_1,\bm\theta_2) 
	~\right\} ~,
\]
where $c_k = a2^{(k+1)/2} \sqrt\varepsilon + b2^{(k+1)/\alpha}\varepsilon^{1/\alpha}$. 
It can be shown that, under the sub-Weibull increment condition, the event $\Omega^c(\varepsilon)$ is realized with probability at most $2\exp(-\varepsilon)$ for any $\varepsilon\geq 2$.
Then, assuming that the \(\Omega(\varepsilon)\) is realized we have that 
\begin{align*}
& \sup_{\bm\theta\in\Theta}|X_{\bm\theta}-X_{\bm\theta_0}|\leq \sup_{\bm\theta\in\Theta}\sum_{k\geq 1}^K|X_{\pi_k(\bm\theta)}-X_{\pi_{k-1}(\bm\theta)}| 
\leq\sup_{\bm\theta\in\Theta}\sum_{k\geq 1}^K c_k(\varepsilon) d_\Theta( \pi_k(\bm\theta), \pi_{k-1}(\bm\theta) ) ~.
\end{align*}
The final upper bound follows from straightforward computations by studying the properties of the summation in the last display.

Condition \ref{asm:increments_and_tail}.$(i)$, Proposition \ref{prop:subweibull:conc} and Proposition \ref{prop:chaining:1} imply that for any $j \in \{ 0, \ldots, M-1 \}$ and any $\varepsilon'_2 \geq 2$ we have that
\begin{align}\label{eqn:part2}
	& \mathbb P\left(\sup_{\bm\theta\in\Theta} \left| \sum_{i=0}^{n} g_{\bm\theta}(\bm W^*_{i,j})-  g_{\bm\theta_0}(\bm W^*_{i,j}) \right| 
	\geq 8 {C'_\alpha C_\Theta \gamma_2(\Theta)\sqrt{(\varepsilon'_2)}\over \sqrt{n}}+ 4^{(\alpha+1)/\alpha} {C''_\alpha C_\Theta \gamma_\alpha(\Theta) ( \varepsilon'_2)^{1\over \alpha}\over n^{1/\alpha\vee 1}} \right) \nonumber \\
	& \quad \leq 2 \exp\left(-\varepsilon_2'\right) ~.
\end{align}

We conclude with the third term in \eqref{eqn:basicdecomp}. 
\ref{asm:coupling} and Proposition \ref{prop:coupling} imply that for any $j \in \{ 0, \ldots, M-1 \}$, some ${\bm w} \in \mathcal Z$ and some $s>0$ we have that 
\begin{align}
	& \mathbb P\left( \sup_{\bm\theta\in\Theta} \left| {1 \over n} \sum_{i=0}^{n} g_{\bm\theta}(\bm W_{i,j}) - g_{\bm\theta}(\bm W^*_{i,j}) \right| \geq {\varepsilon'_3} \right) 
	\leq \mathbb P\left( {1 \over n} \sum_{i=0}^{n} \sup_{\bm\theta\in\Theta} \left|g_{\bm\theta}(\bm W_{i,j}) - g_{\bm\theta}(\bm W^*_{i,j}) \right| \geq {\varepsilon'_3} \right) \nonumber \\
	& \quad \leq {1 \over \varepsilon'_3} {1 \over n} \sum_{i=0}^{n} \mathbb E\left( \sup_{\bm\theta\in\Theta} \left|g_{\bm\theta}(\bm W_{i,j}) - g_{\bm\theta}(\bm W^*_{i,j}) \right| \right) \nonumber \\
	& \quad \leq {1 \over \varepsilon'_3} {1 \over n} \sum_{i=0}^{n} \| d_{\mathcal X}(\bm W_{i,j},\bm W^*_{i,j}) \|_{L_r}  
	\leq {1 \over \varepsilon'_3} {1 \over n} \sum_{i=0}^{n} ( 2^{r+2} \beta^{s/(r+s)}(M) ( \mathbb E( d_{\mathcal X}^{r+s}(\bm W_{i,j},\bm w) ) )^{r/(r+s)}   )^{1/r} \nonumber \\
	& \quad = {2^{(r+2)/r} C_{\mathcal Z} } \beta^{s/(r(r+s))}(M) {1 \over \varepsilon'_3} 
	\leq 8 C_{\mathcal Z} \beta^{s/(r(r+s))}\left(\left\lfloor{T \over n+1}\right\rfloor\right) {1 \over \varepsilon_3'} \label{eqn:part3} ~.
\end{align}

The claim of the theorem then follows from \eqref{eqn:basicdecomp}, \eqref{eqn:part1}, \eqref{eqn:part2} and \eqref{eqn:part3} after 
setting $\varepsilon_1' = \varepsilon_2'$ and redefining $\varepsilon_1 = \varepsilon_1'$ and $\varepsilon_2 = C_{\mathcal Z} \varepsilon_3'$. 


We conclude this section with an auxiliary proposition that provides an upper bound for Talagrand's functional in terms of a generalised version of Dudley's entropy integral.
This result allows to simplify the bounds of the empirical process implied by our main theorem in the applications.

\begin{prop}\label{prop:gammafunctional:bound}
	Consider the functional $\gamma_\alpha(\Theta)$ for some $\alpha>0$.

	Then, it holds that 
	\[
	\gamma_\alpha(\Theta) \leq (\log (2))^{1/\alpha} \left(1 - \frac{1}{2^{1/\alpha}}\right) \int_{0}^{\Delta(\Theta)} \big(\log \mathcal{N}(\Theta, \varepsilon)\big)^{1/\alpha} \, d\varepsilon ~,
	\]
	where $\mathcal{N}(\Theta, \varepsilon)$ is the covering number of the set $\Theta$ at scale $\varepsilon > 0$.
\end{prop}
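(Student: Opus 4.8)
The plan is to prove the bound by the classical Dudley-type route: exhibit one explicit admissible sequence of partitions of $\Theta$, built from covers at a geometric family of scales, use it to bound $\gamma_\alpha(\Theta)$ by a weighted series of covering radii, and then compare that series against the entropy integral on the right-hand side. Concretely, I would first introduce the entropy numbers $e_k = \inf\{\varepsilon > 0 : \mathcal{N}(\Theta,\varepsilon) \le 2^{2^k}\}$, which are nonincreasing in $k$ and are, by construction, the scales at which the cardinality constraint $|\mathcal{A}_k| \le 2^{2^k}$ in the definition of admissibility becomes binding. Working with the convention that $\mathcal{N}(\Theta,\varepsilon)$ counts sets of diameter at most $\varepsilon$ (rather than balls of radius $\varepsilon$) is convenient here, since it lets me pass directly from covering numbers to the diameters $\Delta(A_k(\theta))$ without carrying spurious factors of two.

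Next I would turn these covers into an admissible sequence. For each $k$ fix a cover of $\Theta$ by at most $2^{2^k}$ sets of diameter at most $e_k$, and let $\mathcal{B}_k$ be the partition obtained from it (breaking overlaps arbitrarily), so that $|\mathcal{B}_k| \le 2^{2^k}$ and every cell of $\mathcal{B}_k$ has diameter at most $e_k$. Since the $\mathcal{B}_k$ need not be nested, I would set $\mathcal{A}_k$ equal to the common refinement of $\mathcal{B}_0,\ldots,\mathcal{B}_{k-1}$, which is increasing by construction. The key bookkeeping point is that $|\mathcal{A}_k| \le \prod_{j=0}^{k-1} 2^{2^j} = 2^{2^k - 1} \le 2^{2^k}$, so admissibility is preserved precisely because the cardinality budget $2^{2^k}$ doubles its exponent at each step; this index shift forces the diameter bound $\Delta(A_k(\theta)) \le e_{k-1}$. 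Substituting into the definition gives
\[
\gamma_\alpha(\Theta) \le \sup_{\theta\in\Theta}\sum_{k\ge 0} 2^{k/\alpha}\,\Delta(A_k(\theta)) \le C \sum_{k\ge 0} 2^{k/\alpha} e_k
\]
for an explicit constant absorbing the shift, reducing the problem to comparing the series $\sum_{k\ge 0} 2^{k/\alpha} e_k$ with the entropy integral.

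The final and most delicate step is this comparison, which is where the stated constant is produced. Using monotonicity of $\mathcal{N}(\Theta,\cdot)$, for $\varepsilon \in (e_{k+1},e_k)$ one has $\mathcal{N}(\Theta,\varepsilon) \ge 2^{2^k}$, hence $\big(\log\mathcal{N}(\Theta,\varepsilon)\big)^{1/\alpha} \ge (\log 2)^{1/\alpha} 2^{k/\alpha}$, so that
\[
\int_{0}^{\Delta(\Theta)} \big(\log\mathcal{N}(\Theta,\varepsilon)\big)^{1/\alpha}\, d\varepsilon \ge (\log 2)^{1/\alpha}\sum_{k\ge 0} 2^{k/\alpha}(e_k - e_{k+1}) ~.
\]
Summing by parts, the coefficient of $e_k$ becomes $2^{k/\alpha} - 2^{(k-1)/\alpha} = 2^{k/\alpha}\big(1 - 2^{-1/\alpha}\big)$, so the right-hand side is bounded below by $(\log 2)^{1/\alpha}\big(1 - 2^{-1/\alpha}\big)\sum_{k\ge 0} 2^{k/\alpha} e_k$, which is exactly where both factors in the stated coefficient originate. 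The main obstacle I anticipate is constant tracking: the admissibility-preserving refinement in the middle step introduces the index shift (and, depending on the covering convention, possible factors of two), and reconciling these with the clean factor $(\log 2)^{1/\alpha}\big(1 - 2^{-1/\alpha}\big)$ emerging from the Abel summation is the part that requires the most care. The geometric decay of the weights $2^{-k/\alpha}$ guarantees that the relevant series converges whenever the entropy integral is finite, so no additional integrability hypotheses are needed.
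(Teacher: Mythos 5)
Your strategy coincides with the paper's: introduce the entropy numbers $e_k$, reduce the claim to comparing $\sum_{k\ge 0}2^{k/\alpha}e_k$ with the entropy integral, integrate the pointwise bound $\log\mathcal N(\Theta,\varepsilon)\ge 2^k\log 2$ over each interval $(e_{k+1},e_k)$, and finish by Abel summation. You are in fact more careful than the paper on the reduction step: the paper simply writes $\gamma_\alpha(\Theta)\le\sum_{k\ge 0}2^{k/\alpha}e_k(\Theta)$, whereas you supply the common-refinement construction that this step genuinely requires; note, though, that the index shift you describe leaves you with something like $\gamma_\alpha(\Theta)\le\Delta(\Theta)+2^{1/\alpha}\sum_{k\ge 0}2^{k/\alpha}e_k$, i.e.\ a constant $C>1$ that you never actually absorb.

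The genuine gap is the final reconciliation, which you flag as the part ``requiring the most care'' but never carry out --- and it cannot be carried out. Your comparison step correctly gives
\[
\int_{0}^{\Delta(\Theta)}\big(\log\mathcal N(\Theta,\varepsilon)\big)^{1/\alpha}\,d\varepsilon\;\ge\;(\log 2)^{1/\alpha}\Bigl(1-\tfrac{1}{2^{1/\alpha}}\Bigr)\sum_{k\ge 0}2^{k/\alpha}e_k ~,
\]
so that, together with $\gamma_\alpha(\Theta)\le C\sum_{k\ge 0}2^{k/\alpha}e_k$, you obtain
\[
\gamma_\alpha(\Theta)\;\le\;\frac{C}{(\log 2)^{1/\alpha}\bigl(1-1/2^{1/\alpha}\bigr)}\int_{0}^{\Delta(\Theta)}\big(\log\mathcal N(\Theta,\varepsilon)\big)^{1/\alpha}\,d\varepsilon ~.
\]
The coefficient therefore lands in the denominator; your remark that the Abel summation is ``exactly where both factors in the stated coefficient originate'' has the inequality backwards. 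Since $(\log 2)^{1/\alpha}(1-2^{-1/\alpha})<1$ for every $\alpha>0$, the stated bound is strictly stronger than anything this argument can deliver, and it is in fact false as written: for $\Theta$ consisting of three points at pairwise distance $1$ one has $\gamma_\alpha(\Theta)=1$ (any admissible $\mathcal A_0$ has at most two cells, so some cell has diameter $1$), while the stated right-hand side equals $(\log 2\cdot\log 3)^{1/\alpha}(1-2^{-1/\alpha})<1$. For what it is worth, the paper's own proof commits exactly this inversion: from its two displayed inequalities one gets $(\log 2)^{1/\alpha}(1-1/2^{1/\alpha})\sum_{k}2^{k/\alpha}e_k\le\int_0^{\Delta(\Theta)}(\log\mathcal N(\Theta,\varepsilon))^{1/\alpha}d\varepsilon$, but its final display places the constant on the wrong side of the inequality. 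So your attempt is substantively equivalent to the paper's argument; the step neither can complete is matching the constant as stated, which should instead read $\bigl[(\log 2)^{1/\alpha}(1-2^{-1/\alpha})\bigr]^{-1}$, further enlarged by the refinement constant $C$.
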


\section{Conclusion}\label{sec:end}

This paper establishes a concentration inequality for the suprema of the empirical processes with dependent data.
The concentration inequality is established by developing an argument based on generic chaining combined with a coupling strategy. 
We apply our result to study the properties of statistical learning procedures. 
Specifically, we derive non-asymptotic predictive performance guarantees for empirical risk minimization for nonlinear regression.
We show that empirical risk minimization achieves the classical convergence rate that can be obtained in i.i.d.~setting after replacing the sample size with what we call 
in this work the effective sample size, a notion of sample size that reflects the loss of information due to the dependence with respect to the i.i.d.~case.
Our result encompasses a broad class of nonlinear regression models, including a single-layer neural network models,
and offers theoretical guarantees for widely used statistical learning procedures in dependent data environments.

\appendix

\section{Appendix}\label{sec:proofs}

\subsection{Proofs for Section \ref{sec:learning}}

\begin{proof}[Proof of Proposition \ref{corollary:erm:reg}]

We begin the proof by verifying that \ref{cond:c1} and \ref{cond:c2} imply that \ref{asm:increments_and_tail} and \ref{asm:coupling} are satisfied. We then apply Theorem \ref{thm:maintheorem}.

\paragraph{Verifying \ref{asm:increments_and_tail}.} 
For any $\bm \theta_1,\bm\theta_2\in\Theta$ it holds that 
\begin{align*}
	& (Y_t - f_{\bm \theta_{1}}(\bm X_t))^2 - (Y_t - f_{\bm \theta_{2}}(\bm X_t))^2 
	= (-2Y_t + f_{\bm \theta_{1}}(\bm X_t) + f_{\bm \theta_{2}}(\bm X_t))(f_{\bm \theta_{1}}(\bm X_t) - f_{\bm \theta_{2}}(\bm X_t)) \\
	& \quad \leq 2( |Y_t| + \sup_{\bm \theta} |f_{\bm\theta}(\bm X_t)| ) | f_{\bm \theta_{1}}(\bm X_t) - f_{\bm \theta_{2}}(\bm X_t) |~.
\end{align*}	
Then \ref{cond:c1} and basic properties of subGaussian random variables imply that $\| g_{\bm\theta_1}(\bm Z_t) -g_{\bm\theta_2}(\bm Z_t) \|_{\psi_1} \leq 4 C^2_1 d_\Theta(\bm \theta_1,\bm \theta_2)$.
For any $\bm\theta\in\Theta$ we have  
\[
	g_{\bm\theta}(\bm Z_t)=(Y_t-f_{\bm\theta}(\bm X_t))^2 \leq 2 Y_t^2 + 2 f_{\bm\theta}(\bm X_t)^2 \leq 2( Y_t^2 + \sup_{\bm \theta} f_{\bm\theta}(\bm X_t)^2 )~.
\]
Then \ref{cond:c1} and Proposition \ref{prop:subweibull:centering} imply that $ \|g_{\bm\theta}(\bm Z_t)-\mathbb E g_{\bm\theta}(\bm Z_t)\|_{\psi_1} \leq 16C_1^2$. 
Thus \ref{asm:increments_and_tail} is satisfied for $C_\Theta=16C_1^2$.

\paragraph{Verifying \ref{asm:coupling}.}
We note that 
\begin{align*}
	& \mathbb E\left( \sup_{\bm \theta \in \Theta} \left| g_{\bm\theta}(\bm Z_t) -g_{\bm\theta}(\bm Z_t^*) \right| \right) \\
	& \quad = \mathbb E\left( \sup_{\bm \theta \in \Theta} \left| (Y_t-f_{\bm \theta}(\bm X_t) - Y_t^*+f_{\bm\theta}(\bm X_t^*))^{2} + 2(Y_t^*-f_{\bm\theta}(\bm X_t^*))[(Y_t-f_{\bm\theta}(\bm X_t)) - (Y_t^*-f_{\bm\theta}(\bm X_t^*))] \right| \right) \\
	& \quad \leq \mathbb E\left( \sup_{\bm \theta \in \Theta} \left|Y_t-Y_t^* - f_{\bm \theta}(\bm X_t) +f_{\bm\theta}(\bm X_t^*)\right|^2 \right) \\ 
	& \quad + 2 \mathbb E\left( \sup_{\bm \theta \in \Theta} \left| (Y_t^*-f_{\bm\theta}(\bm X_t^*) ) ( Y_t-Y_t^* - f_{\bm \theta}(\bm X_t) +f_{\bm\theta}(\bm X_t^*) ) \right| \right) \\
	& \quad \leq \mathbb E\left( \sup_{\bm \theta \in \Theta} \left|Y_t-Y_t^* - f_{\bm \theta}(\bm X_t) +f_{\bm\theta}(\bm X_t^*)\right|^2 \right) \\ 
	& \quad + 2 \left( \mathbb E\left( \sup_{\bm \theta \in \Theta} \left| Y_t^*-f_{\bm\theta}(\bm X_t^*) \right|^2 \right) 
	\mathbb E\left( \sup_{\bm \theta \in \Theta} | Y_t-Y_t^* - f_{\bm \theta}(\bm X_t) +f_{\bm\theta}(\bm X_t^*) |^2 \right) \right)^{1/2} \\
	& \quad \leq 
	\left( \left\| \sup_{\bm \theta \in \Theta} \left|Y_t-Y_t^* - f_{\bm \theta}(\bm X_t) +f_{\bm\theta}(\bm X_t^*)\right| \right\|_{L_2} 
	+ 2 \left\| \sup_{\bm \theta \in \Theta} \left| Y_t^*-f_{\bm\theta}(\bm X_t^*) \right| \right\|_{L_2} \right) \\ 
	& \quad \times \left\| \sup_{\bm \theta \in \Theta} | Y_t-Y_t^* - f_{\bm \theta}(\bm X_t) +f_{\bm\theta}(\bm X_t^*) | \right\|_{L_2} \\
	& \quad \leq 
	4 \left\| \sup_{\bm \theta \in \Theta} \left| Y_t^*-f_{\bm\theta}(\bm X_t^*) \right| \right\|_{L_2} 
	\left\| | Y_t-Y_t^* | + d_{\mathcal X}(\bm X_t,\bm X_t^*) \right\|_{L_2} ~.
\end{align*}
Next we note that
\begin{align*}
	\left\| \sup_{\bm \theta \in \Theta} \left| Y_t^*-f_{\bm\theta}(\bm X_t^*) \right| \right\|_{L_2}
	\leq \| Y_t^* \|_{L_2} + \left\| \sup_{\bm \theta \in \Theta} f_{\bm\theta}(\bm X_t^*) \right\|_{L_2} \leq 12 C_1 ~.
\end{align*}
Thus we have that
\[
	\mathbb E\left( \sup_{\bm \theta \in \Theta} \left| g_{\bm\theta}(\bm Z_t) -g_{\bm\theta}(\bm Z_t^*) \right| \right) 
	\leq 12 C_1 \left\| | Y_t-Y_t^* | + d_{\mathcal X}(\bm X_t,\bm X_t^*) \right\|_{L_2} ~.
\]
Next define $d_{\mathcal Z}(\bm z_1,\bm z_2) = 12 C_1 ( |y_1 - y_2| + d_{\mathcal X}(\bm x_1,\bm x_2) )$ and note that $d_{\mathcal Z}$ is a distance, since it is a sum of distances rescaled by a positive constant. Moreover, it is straightforward to verify that the space $\mathcal{Z}$ equipped with the distance $d_{\mathcal Z}$ is also separable and complete which implies that $(\mathcal Z,d_{\mathcal Z})$ is Polish.
Since $(\mathcal{X}, d_{\mathcal{X}})$ is Polish by assumption, it suffices to prove
that $(\mathcal{Y}, |\cdot|)$ is Polish as well; in that case, the product space
$(\mathcal{Z}, d_{\mathcal{Z}})$ is automatically Polish. 
Recall that $\mathcal{Y} \subset \mathbb{R}$ is closed. 
Equipped with the metric induced by the Euclidean distance on $\mathbb{R}$, 
any subset of $\mathbb{R}$ is separable, and any closed subset is complete. 
It follows that $(\mathcal{Y}, |\cdot|)$ is a Polish space.
If we pick $\bm z=(0,\bm x)$ where $\bm x$ is is defined in \ref{cond:c2} we have that for any $s>0$ it holds that
\begin{align*}
	& \left\| d_{\mathcal Z}(\bm Z_t,\bm z) \right\|_{L_{2+s}} 
	= 12 C_1 \left\| | Y_t | + d_{\mathcal X}(\bm X_t,\bm x) \right\|_{L_{2+s}} \\ 
	& \quad \leq 12 C_1\left\| Y_t \|_{L_{2+s}} + 12 C_1\| d_{\mathcal X}(\bm X_t,\bm x) \right\|_{L_{2+s}} 
	\leq 12C_1 C_2^{(1)}\sqrt{s+2}(C_1+C_2) ~,
\end{align*}
where the last inequality follows from Proposition \ref{prop:subweibull:moments} ($C^{(1)}_2$ is defined in that proposition). Thus \ref{asm:coupling} is satisfied for 
$d_{\mathcal Z}(\bm z_1,\bm z_2) = 12 C_1 ( |y_1 - y_2| + d_{\mathcal X}(\bm x_1,\bm x_2) )$, $r=2$, $s=2$ and $C_{\mathcal Z}$ equal to the expression in the last display.

\paragraph{Applying Theorem \ref{thm:maintheorem}.}
We obtain the claim of the theorem by applying Theorem \ref{thm:maintheorem} and relying on the simplified version of the result in \eqref{eqn:simple}.
In particular, we obtain the claim by setting $\varepsilon = (1/\eta)\log(n)$. Using this choice of $\varepsilon$ we get
\[
	13 {T\over n} \exp( - \varepsilon ) = 13 {1\over n} 
	\text{ and }
	\beta^{1/4}\left( \left\lfloor{T \over n+1}\right\rfloor \right) \exp( \varepsilon )
	\leq n^{\zeta/4 -\zeta/(4\eta) + 1/\eta} = n^{-1/2}~,
\]
having used that $\eta = \frac{\zeta - 4}{\eta + 2}$. 
We remark that the conditions of the Theorem require $\zeta > 4$ so that $\eta > 0$ and consequently $\varepsilon > 0$.
\end{proof}

\begin{proof}[Proof of Corollary \ref{corollary:erm:neural_nets}]
	We verify that \ref{cond:c1} and \ref{cond:c2} hold for the single-layer neural network.
	The claim then follows from Proposition \ref{corollary:erm:reg}. 

	We introduce some additional notation and preliminary facts that will be used in the proof. 
	First, we note that $f_{\bm \theta}$ may be represented as
	\[ 
		f_{\bm \theta}(\bm X_t)=\sum_{k=1}^K\psi_k\sigma(\bm X_t'\bm w_k)=\left( \sigma( \bm X_t , \bm w_{1} , \ldots, \bm w_K )'{  \bm\psi  \over \| \bm\psi \|_2 } \right) \| \bm\psi \|_2
	\]
	where, $\sigma( \bm X_t , \bm w_{1} , \ldots, \bm w_K )=( \sigma( \bm X_t'\bm w_{1} ), \ldots, \sigma( \bm X_t'\bm w_{K} ) )'$ 
	and $\bm\psi=(\psi_1,\ldots,\psi_K)'$.
	Second, since $\sigma$ has bounded first sub-derivative, it follows that $\sigma$ is Lipschitz and we shall denote its Lipschitz constant by $L$.
	Third, since $\Theta$ is a compact set we have that we can find a positive constant $C_\Theta$ such that 
	$\sup_k|\psi_k| \leq C_\Theta$ and $\sup_k \|\bm{w}_k\|_2 \leq C_\Theta$.
	Fourth, since $\sigma$ is Lipschitz with Lipschitz constant $L$, $\sigma(0)=0$ and $\bm X_t$ is sub-Gaussian with $\| \bm X_t \|_{\psi_2} = \sigma_X$, 
	it follows that for all $k=1,\ldots,K$ we have that \( \|\sigma(\bm X_t'\bm w_k)\|_{\psi_2}\leq C' L C_\Theta \sigma_X  \) where $C'$ is a positive constant.
	To see this, note that if $\widetilde{\bm X}_t$ is an independent copy of $\bm X_t$ we have that, for any positive constant $c$,
	\begin{align*}
	& \mathbb E \exp {( \sigma(\bm X_t'\bm w_k) - \mathbb E \sigma(\bm X_t'\bm w_k) )^2\over c^2 } =\mathbb E \exp {( \sigma(\bm X_t'\bm w_k) - \mathbb E (\sigma(\widetilde{\bm X}_t'\bm w_k)\mid \bm X_t) )^2\over c^2 } \\
	&\quad\leq \mathbb E \left(\mathbb E\left( \left. \exp {( \sigma(\bm X_t'\bm w_k) - \sigma(\widetilde{\bm X}_t'\bm w_k) )^2\over c^2 }\right|\bm X_t\right)\right) 
	 \leq 
	\mathbb E\exp {L^2 (\bm X_t'\bm w_k - \widetilde{\bm X}_t'\bm w_k)^2 \over c^2 } \\
	&\quad =
	\mathbb E \exp {L^2 \|\bm w_k \|^2_2 (\bm X_t'\bm v - \widetilde{\bm X}_t'\bm v)^2 \over c^2 }  
	 \leq
	\mathbb E \exp {2 L^2 \|\bm w_k \|^2_2 [ (\bm X_t'\bm v)^2 + (\widetilde{\bm X}_t'\bm v)^2 ] \over c^2 }  \\
	& \quad =
	\left( \mathbb E \exp {2 L^2 \|\bm w_k \|^2_2 (\bm X_t'\bm v)^2 \over c^2 } \right)^2 
	 \leq
	\mathbb E \exp {4 L^2 \| \bm w_k \|^2_2 (\bm X_t'\bm v)^2 \over c^2 } ~,
\end{align*}
	where $\bm v = \bm w_k / \| \bm w_k \|_2$ and we remark that the first inequality follows from Jensen's inequality. 
	If we then set $c= 2 L \sigma_X \|\bm w_k \|_2$ we have that the expectation in the last expression is at most 2, 
	implying that $\|\sigma(\bm X_t'\bm w_k) - \mathbb E \sigma(\bm X_t'\bm w_k)\|_{\psi_2}=2L\sigma_X\|\bm w_k\|_2 \leq 2 L C_\Theta \sigma_X$. 
	Moreover, since $\sigma(0)=0$ it holds that 
	\begin{align*}
		& \| \mathbb E \sigma(\bm X_t'\bm w_k) \|_{\psi_2} \leq L \| \mathbb E |\bm X_t'\bm w_k| \|_{\psi_2} \leq {L \over \sqrt{\log(2)} } \sup_k \|\bm w_k\|_2  \| \bm X_t'\bm v \|_{L_1} 
		\leq {C^{(1)}_2  \over \sqrt{\log(2)} } L C_\Theta \sigma_X  ~,
	\end{align*}
	with $\bm v = \bm w_k / \| \bm w_k \|_2$, where the last inequality follows from Proposition \ref{prop:subweibull:moments} ($C^{(1)}_2$ is defined in that proposition).
	The result follows from the triangle inequality.

	\paragraph{Verifying \ref{cond:c1}} We verify that parts $(i)$ and $(ii)$ of condition \ref{cond:c1} hold for some positive constants $C_1$ and for the distance $d_\Theta(\bm \theta_1,\bm \theta_2) = \| \bm \theta_1 - \bm \theta_2 \|_2 $. \\
	$(i)$ We begin by noting that
	\begin{align*}
	& | f_{\bm \theta_{1}}(\bm X_{t}) - f_{\bm \theta_{2}}(\bm X_{t}) | = \left| \sum_{k=1}^K \psi_{1k} \sigma( \bm X_t'\bm w_{1k} ) - \psi_{2k} \sigma( \bm X ' \bm w_{2k} ) \right| \\
	& \quad \leq \left| \sum_{k=1}^K \psi_{1k} ( \sigma( \bm X_t'\bm w_{1k} ) - \sigma( \bm X_t'\bm w_{2k} ) ) \right| + \left|\sum_{k=1}^K \sigma( \bm X_t'\bm w_{2k} )( \psi_{1k} - \psi_{2k} ) \right| \\
	& \quad \leq L \sum_{k=1}^K |\psi_{1k}| |\bm X_t' (\bm w_{1k} - \bm w_{2k} ) | + \left|\sum_{k=1}^K \sigma( \bm X_t'\bm w_{2k} )( \psi_{1k} - \psi_{2k} ) \right| \\
	& \quad = L \sum_{k=1}^K |\psi_{1k}| \left|\bm X_t' {(\bm w_{1k} - \bm w_{2k} )\over \| \bm w_{1k} - \bm w_{2k} \|_2} \right| \| \bm w_{1k} - \bm w_{2k} \|_2
	+ \left| \sigma( \bm X_t, \bm w_{21}, \ldots , \bm w_{2K} )'{ ( \bm\psi_{1} - \bm\psi_{2} ) \over \| \bm\psi_{1} - \bm\psi_{2} \|_2 } \right| \| \bm\psi_{1} - \bm\psi_{2} \|_2 ~.
	\end{align*}
	Next we note that for any $k=1,\ldots,K$ it holds that
	\[
		\left\| \bm X_t' {(\bm w_{1k} - \bm w_{2k} )\over \| \bm w_{1k} - \bm w_{2k} \|_2} \right\|_{\psi_2} 
		\leq \left\| \bm X_t \right\|_{\psi_2} 
	= \sigma_X ~.
	\]
	Moreover, it holds that 
	\[
		\left\|  \sigma( \bm X_t, \bm w_{21},\ldots,\bm w_{2K} )'{ ( \bm\psi_{1} - \bm\psi_{2} ) \over \| \bm\psi_{1} - \bm\psi_{2} \|_2 } \right\|_{\psi_2} 
		\leq \left\|  \sigma( \bm X_t, \bm w_{21},\ldots,\bm w_{2K} ) \right\|_{\psi_2} \leq \sqrt{K} C' L C_\Theta \sigma_X ~. 
	\]
	Finally, combining these results,
	\begin{align*}
	& \| f_{\bm \theta_{1}}(\bm X_{t}) - f_{\bm \theta_{2}}(\bm X_{t}) \|_{\psi_2} \leq L C_\Theta \sigma_X \sum_{k=1}^K \| \bm w_{1k} - \bm w_{2k} \|_2 
		+ \sqrt{K} C' L C_\Theta \sigma_X    \| \bm\psi_{1} - \bm\psi_{2} \|_2 \\
	& \quad \leq (1+C') \sqrt K L C_\Theta \sigma_X  {K+1\over K+1} \left( \sum_{k=1}^K \| \bm w_{1k} - \bm w_{2k} \|_2 + \| \bm\psi_{1} - \bm\psi_{2} \|_2 \right) \\
	& \quad \leq (1+C') L C_\Theta \sigma_X  (K+1)^{3/2} \sqrt{  \| \bm w_{1} - \bm w_{2} \|^2_2 +\| \bm\psi_{1} - \bm\psi_{2} \|_2^2 \over K+1 } 
	= (1+C') L C_\Theta \sigma_X  (K+1)\|\bm\theta_1-\bm\theta_2\|_2 ~,
	\end{align*}
	where, the final inequality follows from Jensen's inequality. \\
	$(ii)$
Thus it follows that
\begin{align*}
	&  \left\|\sum_{k=1}^K\psi_k\sigma(\bm X_t'\bm w_k)\right\|_{\psi_2} \leq \|\bm\psi\|_2\left\|\sigma( \bm X_t , \bm w_{1},\ldots,\bm w_K )'{  \bm\psi  \over \| \bm\psi \|_2 } \right\|_{\psi_2} 
	\leq K C' L C_\Theta^2 \sigma_X  ~.
\end{align*}
 Additionally, by definition,  $\|Y_t\|_{\psi_2} = \sigma_Y$. These results confirm the second part of \ref{cond:c1}. 

\paragraph{Verifying \ref{cond:c2}}
	We verify that parts $(i)$, $(ii)$ and $(iii)$ of condition \ref{cond:c2} hold for some positive constant $C_2$, $\bm x=0$ and for the distance $d_{\mathcal X}(\bm x_1,\bm x_2) = (K L C_\Theta^2) \| \bm x_1-\bm x_2\|_\infty$. \\
	$(i)$ The fact that $(\mathcal X,d_\mathcal X)$ is Polish follows from the fact that $\R^d$ equipped with the sup-norm is Polish, and any closed subset of a Polish space is Polish as well.\\
	$(ii)$ We note that
	\begin{align*}
	&| f_{\bm \theta}(\bm x_1) - f_{\bm \theta}(\bm x_2) | 
	 \leq \sum_{k=1}^K |\psi_k ( \sigma( \bm x_2' \bm w_k ) - \sigma( \bm x_1' \bm w_k ) )| 
	\leq L \sum_{k=1}^K |\psi_k|  | (\bm x_2 - \bm x_1)' \bm w_k | \\
	& \leq L\sup_k|\psi_k|\sum_{k=1}^{K}|(\bm x_1-\bm x_2)'\bm w_k|= L\sup_k|\psi_k| \sum_{k=1}^{K} \|\bm w_k\|_1 \|\bm x_1-\bm x_2\|_\infty\\
	& \leq K L\sup_k|\psi_k| \sup_k  \|\bm w_k\|_1 \|\bm x_1-\bm x_2\|_\infty ~.
	\end{align*}
	$(iii)$ It follows from \citet{Vershynin}[Proposition 2.7.6] and by the fact that $\bm x=\bm 0$ that
	$\| \, \|\bm X-\bm x\|_\infty \, \|_{\psi_2} = \| \max_{i \in \{1,\ldots,p\} } |X_i| \|_{\psi_2}\leq C'' \sigma_X \sqrt{\log(d)}$ where $C''$ is defined in that proposition.
	Then by Proposition \ref{prop:subweibull:moments} we have that
	\[
		\| \, \| \bm X - \bm x \|_\infty \, \|_{L_{4}}\leq 2 C_\alpha^{(1)} C'' \sigma_X \sqrt{ \log(d) } ~,
	\]
	where $C_\alpha^{(1)}$ is a positive constant defined in Proposition \ref{prop:subweibull:moments} .

\paragraph{Applying Corollary \ref{corollary:erm:reg}.}

The claim of the corollary follows after noting that \cite[Corollary 4.2.11]{Vershynin} implies 
\begin{align*}
	& \gamma_1(\Theta) \leq (\log(2))^2 p \Delta(\Theta) \leq 2 K (\log(2))^2 d \Delta(\Theta) \\
	& \gamma_2(\Theta) \leq 2(\log(2))^{-1/2}\left(1-1/2^{1/2}\right)\sqrt{p}\Delta(\Theta) \leq 2(\log(2))^{-1/2}\left(1-1/2^{1/2}\right) \sqrt{2 K}\sqrt{d}\Delta(\Theta)~,
\end{align*}
where we recall that $\Delta(\Theta)$ is the diameter of the set $\Theta$.

\end{proof}

\subsection{Proofs for Section \ref{sec:mainproof}}

\begin{proof}[Proof of Proposition \ref{prop:coupling}]
	Part $(i)$ is an immediate consequence of \citep[Theorem 2.9]{MerlevedePeligrad2002} parts $(a)$, $(b)$, part $(ii)$ is a consequence of $(f)$ of the same theorem and
	part $(iii)$ is a consequence of \citep[Remark 2.5]{MerlevedePeligrad2002}.
\end{proof}

\begin{proof}[Proof of Proposition \ref{prop:subweibull:conc}]
	We begin by introducing a number of auxiliary quantities and basic facts.
	First, let $a_i=\|X_i\|_{\psi_\alpha}$ define $\widetilde{X}_i={X_i / a_i}$ and note that $\mathbb P(|\widetilde{X}_i |\geq \varepsilon)\leq 2\exp(-\varepsilon^\alpha)$.
	Second, let $\{ \epsilon_i \}_{i=1}^n$ denote a sequence of independent Rademacher random variables independent of $\{\widetilde{X}_i\}_{i=1}^n$ and note that $\epsilon_i \widetilde{X}_i$ is identically distributed as $\epsilon_i|\widetilde X_i|$. 
	Third, let $\zeta=(\log(2))^{1/\alpha}$ define $Y_i=(|\widetilde{X}_i|-\zeta )_+$ when $\alpha\geq 1$ and 
	$Y_i=(|\widetilde{X}_i|^\alpha-\log(2) )^{1/\alpha}_+$ when $\alpha<1$ and note that $\mathbb P(Y_i\geq\varepsilon)\leq \exp\left(-\varepsilon^\alpha\right)$ and that
	$|\widetilde{X}_i| \leq  2^{[(1 - \alpha)/\alpha] \wedge 0}( Y_i + \zeta )$ for all $\alpha>0$.\footnote{We remark that \citep[Theorem 3.1]{KuchibhotlaChakrabortty2022} define $Y_i=(|\widetilde{X}_i|-\zeta )_+$ for all $\alpha>0$. This however appears to be a typo.}
	In fact we have that when $\alpha\geq 1$ 
	\begin{align*}
		& \mathbb P( Y_i \geq \varepsilon ) =\mathbb P( |\widetilde{X}_i| \geq \varepsilon + \zeta ) 
		\leq 2 \exp( -(\varepsilon + \zeta )^\alpha ) \leq 2 \exp( -\varepsilon^\alpha - \zeta^\alpha ) = \exp( -\varepsilon^\alpha ) ~,
	\end{align*}
	and when $\alpha < 1$ we have that 
	\begin{align*}
		& \mathbb P( Y_i \geq \varepsilon ) = 
		\mathbb P( Y_i^\alpha \geq \varepsilon^\alpha ) = 
		\mathbb P( |\widetilde{X}_i|^\alpha \geq \varepsilon^\alpha + \log(2) ) =
		\mathbb P( |\widetilde{X}_i| \geq (\varepsilon^\alpha + \log(2) )^{1/\alpha} )\\
		& \quad \leq 2 \exp( -\varepsilon^\alpha - \log(2) ) 
		= \exp( -\varepsilon^\alpha ) ~.
	\end{align*}
	Moreover, when $\alpha\geq1$ the inequality $|\widetilde{X}_i| \leq Y_i + \zeta$ is immediate and when $\alpha <1$ we have that
	\[
		| \widetilde{X}_i | = ( | \widetilde{X}_i |^\alpha )^{1/\alpha} \leq ( Y_i^\alpha + \log(2) )^{1/\alpha} \leq 2^{(1 - \alpha)/\alpha} ( Y_i + \zeta ) ~.
	\]
	Fourth, note that the random variable $\epsilon_i Y_i$ is symmetric and satisfies $\mathbb P( |\epsilon_i Y_i| \geq\varepsilon)\leq \exp\left(-\varepsilon^\alpha\right)$.
	Then for any $p\geq 2$ we have
	\begin{align*}
		& \left\|\sum_{i=1}^nX_i\right\|_{L_p} \leq C_{X} \left\|\sum_{i=1}^n \widetilde{X}_i\right\|_{L_p} \leq 2 C_X \left\|\sum_{i=1}^n\epsilon_i \widetilde{X}_i\right\|_{L_p} =2 C_X \left\|\sum_{i=1}^n \epsilon_i|\widetilde{X}_i|\right\|_{L_p} \\ 
		& \quad = 2^{(1/\alpha) \wedge 1} C_X \left\|\sum_{i=1}^n \epsilon_i(Y_i+\zeta)\right\|_{L_p} \leq 2^{(1/\alpha) \wedge 1}C_X\left\|\sum_{i=1}^n \epsilon_iY_i\right\|_{L_p}+ 2^{(1/\alpha) \wedge 1}C_X \zeta\left\|\sum_{i=1}^n \epsilon_i\right\|_{L_p} \\
		& \quad \leq 2^{(1/\alpha) \wedge 1} C_X\left\|\sum_{i=1}^n \epsilon_iY_i\right\|_{L_p} + 2^{(1/\alpha) \wedge 1} C_X (\log(2))^{1/\alpha} \sqrt{ n } \sqrt p ~,
	\end{align*}
	where the second inequality follows from \citet[Proposition 6.3]{LedouxTalagrand2011} and, for $p>2$, the last inequality follows from \citet[Theorem 1.3.1]{delaPenaGine1999} (and for $p=2$ the inequality is trivial).	
	Next distinguish the cases $(i)$ $\alpha \leq 1$ and $(ii)$ $\alpha>1$.\\
	$(i)$ It follows from Proposition \ref{prop:latala} that for $p\geq 2$ there exists a positive constant $C$ (precisely defined in that proposition) that only depends on $\alpha$ such that	
	\[
		\left\|\sum_{i=1}^n \epsilon_i Y_i \right\|_{L_p}\leq C ( \sqrt{n}\sqrt p  + p^{1/\alpha} ) ~,
	\]
	where we have used the fact that $\epsilon_i Y_i$ is a symmetric random variable.
	Note that for $p=1$ we have that 
	\[
		\left\|\sum_{i=1}^n \epsilon_i Y_i \right\|_{L_1}
		\leq \left\|\sum_{i=1}^n \epsilon_i Y_i \right\|_{L_2}
		\leq C (\sqrt{n} \sqrt 2 +2^{1/\alpha}) ~.
	\]
	Thus, for $p\geq 1$ we have 
	\begin{align} 
		\left\|\sum_{i=1}^n \epsilon_i Y_i\right\|_{L_p} 
		\leq \max\{\sqrt 2,2^{1/\alpha}\} C ( \sqrt{n} \sqrt p +p^{1/\alpha} ) \label{eqn:subweibull:lesser1}~,
	\end{align} 
	and
	\begin{align*}
		\left\|\sum_{i=1}^nX_i\right\|_{L_p} 
		&\leq 2^{(1/\alpha) \wedge 1} [ C\max\{\sqrt 2,2^{1/\alpha}\}+ (\log(2))^{1/\alpha} ] C_X \sqrt n \sqrt p  +  2^{(1/\alpha) \wedge 1} C \max\{\sqrt 2,2^{1/\alpha}\}  C_X  p^{1/\alpha} ~.
	\end{align*}
	The claim then follows from Proposition \ref{prop:subweibull:GBOtailbound}. \\
	$(ii)$ Analogously to the case $(i)$, it follows from Proposition \ref{prop:latala} that for $p\geq 2$ there exists a positive constant $C$ (precisely defined in that proposition) that only depends on $\alpha$ such that 
	\begin{align*}
		\left\|\sum_{i=1}^n \epsilon_i Y_i \right\|_{L_p}\leq C \sqrt{p}\sqrt{n} + C  p^{1/\alpha}n^{(\alpha-1)/\alpha} ~.
	\end{align*}
	Note that for $p=1$,  we have that 
	\[
		\left\|\sum_{i=1}^n \epsilon_i Y_i \right\|_{L_1}\leq \left\|\sum_{i=1}^n \epsilon_i Y_i  \right\|_{L_2}\leq \max\{\sqrt 2,2^{1/\alpha}\} C ( \sqrt n+ n^{(\alpha-1)/\alpha} )~.
	\]
	Thus for $p\geq 1$ we have
	\begin{align}
		\left\|\sum_{i=1}^n \epsilon_i Y_i \right\|_{L_p}\leq \max\{\sqrt 2,2^{1/\alpha}\}C\left(\sqrt p\sqrt{n}+p^{1/\alpha}n^{(\alpha-1)/\alpha}\right)\label{eqn:subweibull:greater1} ~,
	\end{align}
	and
	\begin{align}
		\left\|\sum_{i=1}^n X_i \right\|_{L_p}\leq 2^{(1/\alpha) \wedge 1} \max\{\sqrt 2,2^{1/\alpha}\} C_X \left((C+(\log(2))^{1/\alpha}) \sqrt{n} \sqrt{p}+ 2^{(1/\alpha) \wedge 1}C n^{(\alpha-1)/\alpha} p^{1/\alpha} \right) ~.
	\end{align}
	The claim then follows from Proposition \ref{prop:subweibull:GBOtailbound}.
\end{proof}

\begin{proof}[Proof of Proposition \ref{prop:chaining:1}]
	Since $X_{\bm\theta}$ is a separable process there exists a countable dense subset $\tilde\Theta\subset\Theta$ such that 
	\[
		\sup_{\bm\theta\in\tilde\Theta}|X_{\bm\theta}-X_{\bm\theta_0}| =\sup_{\bm\theta\in\Theta}|X_{\bm\theta}-X_{\bm\theta_0}|~\text{a.s.}~,
	\]
	\citep[Ch.~11]{Boucheron}.
	Thus, the remainder of the proof consists of controlling $\sup_{\bm\theta\in\tilde\Theta}|X_{\bm\theta}-X_{\bm\theta_0}|$.
	Redefine $\tilde \Theta$ as $\tilde \Theta \cup \{ \bm \theta_0 \}$.
	Consider an admissible sequence of partitions $\{ \mathcal B_k \}_{k\geq 0}$ of $\tilde \Theta$ such that 
	for any $\bm \theta \in \tilde{\Theta}$ it holds that $\sum_{k\geq 0} 2^{k/\alpha} \Delta( B_k(\bm \theta) ) \leq 2 \gamma_\alpha(\tilde \Theta)$
	and an admissible sequence of partitions $\{ \mathcal C_k \}_{k\geq 0}$ of $\tilde \Theta$ such that for any $\bm \theta \in \tilde{\Theta}$ it holds that $\sum_{k\geq 0} 2^{k/2} \Delta( C_k(\bm \theta) ) \leq 2 \gamma_2(\tilde \Theta)$,
	where $B_k(\bm \theta)$ and $C_k(\bm \theta)$ are respectively the unique elements of $\mathcal B_k$ and $\mathcal C_k$ that contain $\bm \theta$.
	Define a new sequence of partitions $\{ \mathcal A_k \}_{k\geq 0}$ of $\tilde \Theta$ as follows. Set $\mathcal A_0 = \tilde \Theta$ and $\mathcal A_k$ as the partition 
	generated by $\mathcal B_{k-1}$ and $\mathcal C_{k-1}$, that is the partition that consists of the sets $B \cap C$ for $B \in \mathcal B_{k-1}$ and $C \in \mathcal C_{k-1}$.
	It is straightforward to verify that $\{ \mathcal A_k \}_{k\geq 0}$ is also an admissible sequence of partitions. 
	In fact, it holds that $|\mathcal A_k| \leq |\mathcal B_{k-1}||\mathcal C_{k-1}| \leq 2^{2^k}$.
	Moreover, for any $\bm \theta \in \tilde{\Theta}$ it holds that $\sum_{k\geq 0} 2^{k/\alpha} \Delta( A_k(\bm \theta) ) \leq 2 \gamma_\alpha(\tilde \Theta)$
	and $\sum_{k\geq 0} 2^{k/2} \Delta( A_k(\bm \theta) ) \leq 2 \gamma_2(\tilde \Theta)$ where $A_k(\bm \theta)$ is the unique element of $\mathcal A_k$ that contain $\bm \theta$.
	For each $k\geq 0$ consider the set $\tilde \Theta_k$ that intersects each element of $\mathcal A_k$ in exactly one point and set $\tilde \Theta_0 = \{ \bm \theta_0 \}$.
	For any \(\varepsilon > 0\), define the event \(\Omega(\varepsilon)\) as
\begin{align*}
	& \text{ for all } k \geq 1, \text{ for any } \bm\theta_1,\bm\theta_2\in \tilde\Theta_k, \\
	& \quad |X_{\bm \theta_1}-X_{\bm\theta_2}| \leq \left( a2^{(k+1)/2} \sqrt\varepsilon + b2^{(k+1)/\alpha}\varepsilon^{1/\alpha} \right)d_\Theta(\bm\theta_1,\bm\theta_2) 
	~.
\end{align*}
We proceed by bounding the probability of the event $\Omega^c(\varepsilon)$.
We begin by noting that for any $k$ and any $\bm\theta_1,\bm\theta_2\in \tilde\Theta_k$, we have
\begin{align*}
	& \mathbb P\left( |X_{\bm\theta_1}- X_{\bm\theta_2}|\geq \left(a2^{(k+1)/2}\sqrt \varepsilon+ b2^{(k+1)/\alpha}\varepsilon^{1/\alpha}\right) d_\Theta(\bm\theta_1,\bm\theta_2)\right) 
	\leq e\exp(-2^{k+1} \varepsilon) ~.
\end{align*}
By construction, $|\tilde\Theta_k|\leq 2^{2^k}$ implying that for $\varepsilon\geq 2$,
\begin{align*}
 	& \mathbb P( \Omega^c(\varepsilon) ) \leq e \sum_{k \geq 1} (|\tilde\Theta_k|)^2 \exp\left(-2^{k+1}\varepsilon\right) 
	< e \sum_{k \geq 1} 2^{2^{k+1}} \exp\left(-\varepsilon-2^{k}\varepsilon\right) \\
	& = e \exp(-\varepsilon) \sum_{k \geq 1} 2^{2^{k+1}} e^{-2^k \varepsilon } \leq 
	e \exp(-\varepsilon) \sum_{k \geq 1} \left({ 2 \over e }\right)^{2^{k+1}} 
	= 2 \exp(-\varepsilon) \sum_{k \geq 1} \left({ 2 \over e }\right)^{2^{k+1}-1} \leq 2\exp(-\varepsilon) ~.
\end{align*}
Let $\pi_k:\tilde\Theta\rightarrow  \tilde\Theta_k$ be the mapping such that $\pi_k(\bm\theta)=\arg\min_{s\in \tilde\Theta_k}d_\Theta(\bm\theta,s)$.
Then, assuming that $\Omega(\varepsilon)$ occurs, we apply the classic chaining argument which implies that
\begin{align}
	& \sup_{\bm\theta\in\tilde\Theta}|X_{\bm\theta}-X_{\bm\theta_0}|\leq \sup_{\bm\theta\in\tilde\Theta}\sum_{k\geq 1}|X_{\pi_k(\bm\theta)}-X_{\pi_{k-1}(\bm\theta)}| \nonumber \\
	& \quad \leq\sup_{\bm\theta\in\tilde\Theta}\sum_{k\geq 1}\left(a2^{(k+1)/2}\sqrt \varepsilon+b2^{(k+1)/\alpha}\varepsilon^{1/\alpha}\right)d_\Theta( \pi_k(\bm\theta), \pi_{k-1}(\bm\theta) ) \nonumber \\
	& \quad \leq a \sup_{\bm\theta\in\tilde\Theta}\left(\sum_{k\geq 1} 2^{(k+1)/2}\sqrt \varepsilon d_\Theta(\pi_{k-1}(\bm\theta),\bm\theta)+\sum_{k\geq 1} 2^{(k+1)/2}\sqrt \varepsilon d_\Theta(\pi_{k}(\bm\theta),\bm\theta)\right) \nonumber \\
	& \quad + b \sup_{\bm\theta\in\tilde\Theta} \left(\sum_{k\geq 1} 2^{(k+1)/\alpha}\varepsilon^{1/\alpha}d_\Theta(\pi_{k-1}(\bm\theta),\bm \theta)+\sum_{k\geq 1} 2^{(k+1)/\alpha}\varepsilon^{1/\alpha}d_\Theta(\pi_{k}(\bm\theta),\bm \theta)\right)\label{eqn:chaining:decomp1} ~,
\end{align}
where the last inequality follows from $d_\Theta(\pi_k(\bm\theta),\pi_{k-1}(\bm\theta))\leq d_\Theta(\bm\theta,\pi_{k-1}(\bm\theta))+d_\Theta(\pi_k(\bm\theta),\bm\theta)$. 
Next, we bound the two supremums on the right hand side of \eqref{eqn:chaining:decomp1}. 
First, we have that
\begin{align*}
	& \sup_{\bm\theta\in\tilde\Theta}\left(\sum_{k\geq 1} 2^{(k+1)/2}\sqrt \varepsilon d_\Theta(\pi_{k-1}(\bm\theta),\bm\theta)+\sum_{k\geq 1} 2^{(k+1)/2}\sqrt \varepsilon d_\Theta(\pi_{k}(\bm\theta),\bm\theta)\right)\\
	&\quad = \sup_{\bm\theta\in\tilde\Theta}\left(\sum_{k\geq 0} 2^{(k+2)/2}\sqrt \varepsilon d_\Theta(\pi_{k}(\bm\theta),\bm\theta)+\sum_{k\geq 1} 2^{(k+1)/2}\sqrt \varepsilon d_\Theta(\pi_{k}(\bm\theta),\bm\theta)\right)\\
	&\quad\leq 4 \sqrt \varepsilon \sup_{\bm\theta\in\tilde\Theta}\sum_{k\geq 0} 2^{k/2} \Delta( A_k(\bm \theta ) ) \leq 8 \gamma_2(\tilde\Theta)\sqrt \varepsilon \leq 8 \gamma_2(\Theta)\sqrt \varepsilon  ~,
\end{align*}
where the first step follows from reindexing and the last step follows from the fact that $\tilde\Theta\subset\Theta$.
Second, we have that (following analogous arguments)
\begin{align*}
	& \sup_{\bm\theta\in\tilde\Theta} \left(\sum_{k\geq 1} 2^{(k+1)/\alpha}\varepsilon^{1/\alpha}d_\Theta(\pi_{k-1}(\bm\theta),\bm \theta)+\sum_{k\geq 1} 2^{(k+1)/\alpha}\varepsilon^{1/\alpha}d_\Theta(\pi_{k}(\bm\theta),\bm \theta)\right)\\
	&\quad\leq \sup_{\bm\theta\in\tilde\Theta} \left(\sum_{k\geq 0} 2^{(k+2)/\alpha}\varepsilon^{1/\alpha}d_\Theta(\pi_{k}(\bm\theta),\bm \theta)+\sum_{k\geq 1} 2^{(k+1)/\alpha}\varepsilon^{1/\alpha}d_\Theta(\pi_{k}(\bm\theta),\bm \theta)\right)\\
	&\quad \leq 2^{(\alpha+2)/\alpha} \varepsilon^{1/\alpha}\sup_{\bm\theta\in\tilde\Theta}\sum_{k\geq 0} 2^{k/\alpha} \Delta( A_k(\bm \theta ) ) \leq 4^{(\alpha+1)/\alpha} \gamma_\alpha(\Theta)\varepsilon^{1/\alpha} ~.
\end{align*} 
Combining the previous results we obtain 
\begin{align*}
	\mathbb P\left(\sup_{\bm\theta\in\tilde\Theta}|X_{\bm\theta}-X_{\bm\theta_0}|
	\geq 8 a\gamma_2(\Theta)\sqrt\varepsilon+ 4^{(\alpha+1)/\alpha}b\gamma_\alpha(\Theta)\varepsilon^{1/\alpha}\right)\leq 2 \exp\left(-\varepsilon\right) ~,
\end{align*}
which implies the claim.
\end{proof}

rr

\begin{proof}[Proof of Proposition \ref{prop:gammafunctional:bound}]
The proof is based on a straightforward generalization of the arguments in \cite[pp.~22--23]{Talagrand2005}.
For the metric space $(\Theta,d_\theta)$ we define the $k$-th entropy number as $e_k(\Theta) = \inf_{\mathcal A_k} \sup_{\bm{\theta} \in \Theta} \Delta(A_k(\bm{\theta}))$. 
It is the smallest radius such that $\Theta$ can be covered by at most $2^{2^k}$ balls of that radius. 
Equivalently,
\[
e_k(\Theta) \;=\; \inf\Bigl\{\,\varepsilon>0:\ \mathcal N(\Theta,\varepsilon)\le 2^{2^k}\,\Bigr\} ~,
\]
where $\mathcal N(\Theta,\varepsilon)$ denotes the covering number of $\Theta$ at scale $\varepsilon$.
Next we note that for any $\varepsilon<e_k(\Theta)$ it holds that $\mathcal N(\Theta,\varepsilon)>2^{2^k}$, which in turn it implies that  $\mathcal N(\Theta,\varepsilon)\geq 2^{2^{k}}+1$. 
Thus, for any $k$ and $\varepsilon\in(e_{k+1},e_k)$, we have that 
\begin{align*}
	(\log(2^{2^{k}}+1))^{1/\alpha}(e_k(\Theta)-e_{k+1}(\Theta))&\leq \int_{e_{k+1}(\Theta)}^ {e_k(\Theta)}(\log(\mathcal N(\Theta,\varepsilon)))^{1/\alpha}\,d\varepsilon ~.
\end{align*}
Taking summation over $k\geq 0$ and using the facts that $(\log(2^{2^{k}}+1)) > 2^k \log(2)$ and $e_0(\Theta)\leq \Delta(\Theta)$ we obtain
\begin{align*}
	(\log(2))^{1/\alpha}\sum_{k\geq 0}2^{k/\alpha}(e_k(\Theta)-e_{k+1}(\Theta)) \leq  \int_{0}^ {\Delta(\Theta)}(\log(\mathcal N(\Theta,\varepsilon)))^{1/\alpha}\,d\varepsilon ~.
\end{align*}
Further, note that
\begin{align*}
	\sum_{k\geq 0}2^{k/\alpha}(e_k(\Theta)-e_{k+1}(\Theta))=\sum_{k\geq 0}2^{k/\alpha}e_k(\Theta)-\sum_{k\geq 1}2^{(k-1)/\alpha}e_{k}(\Theta)\geq \left(1-1/2^{1/\alpha}\right)\sum_{k\geq 0}2^{k/\alpha}e_k(\Theta)~.
\end{align*}
Combining the above two results we have 	
\begin{align*}
	& \gamma_\alpha(\Theta) =\inf_{\mathcal A_k}\sup_{\bm\theta\in\Theta}\sum_{k\geq 0}2^{k/\alpha}\Delta(A_k(\bm\theta))\\
	& \quad\leq\sum_{k\geq 0}2^{k/\alpha}e_k(\Theta)\leq (\log(2))^{1/\alpha}\left(1-1/2^{1/\alpha}\right)\int_{0}^ {\Delta(\Theta)}(\log(\mathcal N(\Theta,\varepsilon)))^{1/\alpha}\,d\varepsilon~,
\end{align*}
which establishes the claim.
\end{proof}

\subsection{Properties of sub-Weibull Random Variables}{\label{App:sub-Wei}}

This section collects several useful results on sub-Weibull random variables.
We begin by noting that a straightforward implication of the definition is that for a sub-Weibull($\alpha$) random variable $X$ of order $\alpha$ for some $\alpha>0$ it holds that
\begin{equation}\label{eq: tails subWei}
	\mathbb P( |X| \geq \varepsilon ) \leq 2 \exp\left( -\frac{\varepsilon^\alpha}{\| X \|_{\psi_\alpha}^\alpha} \right)
	\quad \text{for all } \varepsilon \geq 0 ~.
\end{equation}
In other words, sub-Weibull random variables have generalized exponential tails.

We remark that the notion of sub-Weibull random variable can be extended to sub-Weibull random vectors.
We say that the random vector $\bm X$ taking values in $\mathbb R^d$ is sub-Weibull($\alpha$) of order $\alpha$ for some $\alpha>0$ if
the one-dimensional marginals $\bm X'\bm v$ are sub-Weibull($\alpha$) for all $\bm v \in \mathbb R^d$. 
The sub-Weibull norm of $\bm X$ is defined as \( \| \bm X \|_{\psi_\alpha} = \sup_{ \bm v \in \mathcal S^{d-1} } \| \bm X' \bm v \|_{\psi_\alpha} \).

First, we show that any sub-Weibull($\alpha$) random variable of order $\alpha$ for some $\alpha>0$ belongs to $L^p$ for all $p \ge 1$. Moreover, the transition to the $L^p$ norm is explicit.

\begin{prop}\label{prop:subweibull:moments}
	Let $X$ be a sub-Weibull($\alpha$) random variable of order $\alpha$ for some $\alpha > 0$. 
	Then, for any integer $p \ge 1$ it holds that $\|X\|_{L_p}\leq C^{(1)}_\alpha \|X\|_{\psi_\alpha}p^{1/\alpha}$, 
	where $C^{(1)}_\alpha= 2\sqrt{2\pi}e^{\alpha/12 } e^{1/(2e)} \alpha^{ -(\alpha + 2)/(2\alpha) }$.
\end{prop}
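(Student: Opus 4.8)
The plan is to convert the sub-Weibull quasi-norm into a moment bound through the tail inequality \eqref{eq: tails subWei}, and then to recognize the resulting integral as a Gamma function whose growth in $p$ is controlled by Stirling's formula. Writing $c = \|X\|_{\psi_\alpha}$ and using the layer-cake representation $\mathbb{E}|X|^p = \int_0^\infty p\,s^{p-1}\,\mathbb{P}(|X|\ge s)\,ds$ together with \eqref{eq: tails subWei}, I would bound $\mathbb{E}|X|^p \le 2p\int_0^\infty s^{p-1}\exp(-s^\alpha/c^\alpha)\,ds$. The change of variables $u = (s/c)^\alpha$ turns this into a Gamma integral, yielding
\[
	\|X\|_{L_p}^p \le \frac{2p}{\alpha}\,c^p\,\Gamma\!\left(\frac{p}{\alpha}\right) ~.
\]

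The second step is to take $p$-th roots and insert a Stirling-type bound. I would use the inequality $\Gamma(x)\le \sqrt{2\pi}\,x^{x-1/2}e^{-x}e^{1/(12x)}$, valid for all $x>0$, evaluated at $x = p/\alpha$. After extracting the factor $x^{x} = (p/\alpha)^{p/\alpha}$, whose $p$-th root is exactly $(p/\alpha)^{1/\alpha} = \alpha^{-1/\alpha}p^{1/\alpha}$, this isolates the advertised $p^{1/\alpha}$ growth and leaves a prefactor assembled from $(2\sqrt{2\pi})^{1/p}$, $p^{1/(2p)}$, $\alpha^{-1/(2p)}$, $e^{-1/\alpha}$ and $e^{\alpha/(12p^2)}$.

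The bulk of the work, and the main obstacle, is to bound this prefactor uniformly over all integers $p\ge 1$ so as to produce a single explicit constant. Here I would maximize the factors: $(2\sqrt{2\pi})^{1/p}\le 2\sqrt{2\pi}$, $\alpha^{-1/(2p)}\le \alpha^{-1/2}$ and $e^{\alpha/(12p^2)}\le e^{\alpha/12}$ are each extremal at $p=1$, while $p^{1/(2p)} = \exp(\log(p)/(2p))$ is maximized at $p=e$ and hence bounded by $e^{1/(2e)}$, and $e^{-1/\alpha}\le 1$ is discarded. Collecting the two $\alpha$-powers gives $\alpha^{-1/2}\cdot\alpha^{-1/\alpha} = \alpha^{-(\alpha+2)/(2\alpha)}$, and combining all the constants reproduces $C^{(1)}_\alpha = 2\sqrt{2\pi}\,e^{\alpha/12}e^{1/(2e)}\alpha^{-(\alpha+2)/(2\alpha)}$.

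Two points would require care. First, the Stirling bound must be applied in a form that is valid for small arguments, since $p/\alpha<1$ whenever $\alpha>p$. Second, the factor-by-factor maximization is clean only when $\alpha\le 1$; for $\alpha>1$ one has $\alpha^{-1/(2p)}\ge\alpha^{-1/2}$, so bounding that factor at $p=1$ is not licit in isolation, and the remaining slack (the decay of $(2\sqrt{2\pi})^{1/p}$ and $e^{\alpha/(12p^2)}$ as $p$ grows) must be exploited through a joint estimate of the prefactor rather than a term-by-term one. I expect this reconciliation to be the delicate bookkeeping step behind the stated constant.
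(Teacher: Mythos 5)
Your proposal follows exactly the paper's route: the layer-cake representation plus the tail bound \eqref{eq: tails subWei} gives $\mathbb{E}|X|^p \le \frac{2p}{\alpha}\|X\|_{\psi_\alpha}^p\,\Gamma(p/\alpha)$, then Feller's form of Stirling's bound, $\Gamma(x) < \sqrt{2\pi/x}\,(x/e)^x e^{1/(12x)}$ for all $x>0$ (so your first point of care is automatically handled), is applied at $x=p/\alpha$, and finally the $p$-th root of the prefactor is bounded uniformly in $p$. The only substantive difference is your second point of care, which deserves a real answer.

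That concern is genuine, and in fact it identifies an unacknowledged gap in the paper's own write-up: the paper passes from $\alpha^{-(\alpha+2p)/(2p\alpha)}$ to $\alpha^{-(\alpha+2)/(2\alpha)}$ term-by-term, which amounts to $\alpha^{-1/(2p)}\le\alpha^{-1/2}$, and this is false for $\alpha>1$, $p\ge 2$. The joint estimate you anticipate can be carried out as follows. After cancelling the common factor $\alpha^{-1/\alpha}$, multiplying through by $\alpha^{1/2}$, and using $p^{1/(2p)}\le e^{1/(2e)}$ and $e^{-1/\alpha}\le 1$, the required inequality reduces to
\[
	\alpha^{1/2-1/(2p)} \;\le\; (2\sqrt{2\pi})^{1-1/p}\,e^{\alpha(1-1/p^2)/12} ~.
\]
For $p=1$ both sides equal $1$. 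For $p\ge 2$ and $\alpha\le 1$ the left side is at most $1$ and the right side is at least $1$. For $p\ge 2$ and $\alpha>1$, use $1-1/p\ge 1/2$ and $1-1/p^2\ge 3/4$, so it suffices that $\alpha^{1/2}e^{-\alpha/16}\le(2\sqrt{2\pi})^{1/2}$; the left side is maximized at $\alpha=8$ with value $\sqrt{8/e}\approx 1.72$, which is below $(2\sqrt{2\pi})^{1/2}\approx 2.24$. So your plan, completed this way, establishes the stated constant for all $\alpha>0$ and all integers $p\ge 1$, and is in fact more rigorous than the paper's version of the same argument.
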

\begin{proof} 
We have that
\begin{align*}
	& \mathbb{E}|X|^p = \int_{0}^\infty \mathbb{P}\left(|X|^p \geq u\right)\, du 
	= \int_{0}^\infty \mathbb{P}\left(|X| \geq u^{1/p}\right)\, du \\
	&\quad = \| X \|^p_{\psi_\alpha} \frac{p}{\alpha} \int_{0}^\infty \mathbb{P}\left(|X| \geq \left( \| X \|^p_{\psi_\alpha} t^{p/\alpha} \right)^{1/p} \right)\, t^{p/\alpha-1} dt ~,
\end{align*}
where the last equality follows from the change of variable \( u = \| X \|^p_{\psi_\alpha} t^{p/\alpha} \).
Using the tail bound for sub-Weibull($\alpha$) random variables given in \eqref{eq: tails subWei}, and recalling the definition of the Gamma function, we obtain
\begin{align*}
	& \mathbb{E}|X|^p
	\leq 2\|X\|_{\psi_\alpha}^p \frac{p}{\alpha} \int_{0}^\infty t^{p/\alpha - 1} \exp(-t)\, dt 
	= 2 \frac{p}{\alpha} \|X\|_{\psi_\alpha}^p \Gamma\left( \frac{p}{\alpha} \right) \\
	&\quad \leq 2 \sqrt{2\pi} \, e^{\alpha / (12p)} \sqrt{\frac{p}{\alpha}} \|X\|_{\psi_\alpha}^p \left( \frac{p}{e \alpha} \right)^{p/\alpha} ~,
\end{align*}
where the last inequality follows from Stirling’s approximation \citet[eq.~9.15]{feller1971}, which states that for all \( x > 0 \),
\[
\Gamma(x) < \sqrt{ \frac{2\pi}{x} } \left( \frac{x}{e} \right)^x e^{1/(12x)} ~.
\]
Taking the $p$-th root of the previous expression, we obtain
\begin{align*}
	& \left[ 2\sqrt{2\pi} e^{\alpha/(12p)} \sqrt{\frac{p}{\alpha}} \|X\|_{\psi_\alpha}^p \left( \frac{p}{e \alpha} \right)^{p/\alpha} \right]^{1/p} 
	= \left( 2\sqrt{2\pi} e^{\alpha/(12p)} \right)^{1/p} \, p^{1/(2p)} \, \alpha^{ -(\alpha + 2p)/(2p\alpha) } \, e^{-1/\alpha} \, \|X\|_{\psi_\alpha} \, p^{1/\alpha} \\
	& \quad \leq 2\sqrt{2\pi} \, e^{\alpha/12} \, e^{1/(2e)} \, \alpha^{ -(\alpha + 2)/(2\alpha) } \, \|X\|_{\psi_\alpha} \, p^{1/\alpha} ~,
\end{align*}
where the inequality follows from the fact that the function \( f(x) = x^{1/x} = e^{(\log x)/x} \) attains its maximum at \( x = e \), so that \( x^{1/x} \leq e^{1/e} \) for all \( x > 0 \). This completes the proof.
\end{proof}

Second, we show that the functional $\| \cdot \|_{\psi_\alpha}$ defines a norm when $\alpha \ge 1$ and a quasi-norm when $\alpha < 1$. 
Recall that a quasi-norm satisfies all the axioms of a norm, except that the triangle inequality holds only up to a multiplicative constant greater than one. 

\begin{prop}\label{prop:subweibull:sum}
	Let $X$ and $Y$ be sub-Weibull($\alpha$) random variables of order $\alpha$ for some $\alpha > 0$.
	Then, it holds that 
	\[
		\|X+Y\|_{\psi_\alpha}\leq C^{(2)}_\alpha(\|X\|_{\psi_\alpha}+\|Y\|_{\psi_\alpha}) ~,
	\]
	where $C^{(2)}_\alpha = 2^{1/\alpha}$ if $\alpha<1$ and $C^{(2)}_\alpha=1$ if $\alpha\geq1$.
\end{prop}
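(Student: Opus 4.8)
The plan is to argue directly from the definition \eqref{eq: def subWei} of the quasi-norm. Write $a = \|X\|_{\psi_\alpha}$ and $b = \|Y\|_{\psi_\alpha}$, both of which I may assume strictly positive: if either vanishes, the corresponding variable is zero almost surely (letting $c \downarrow 0$ in the definition forces this via Fatou's lemma), and the claim is then immediate. For the target scale $c = C^{(2)}_\alpha(a+b)$ it then suffices to establish $\mathbb{E}(\psi_\alpha(|X+Y|/c)) \leq 1$, since by the definition of the infimum this gives $\|X+Y\|_{\psi_\alpha} \leq c$. A preliminary fact I would record first is that the infimum is effectively attained, i.e.\ $\mathbb{E}(\psi_\alpha(|X|/a)) \leq 1$: choosing $c_m \downarrow a$ with $\mathbb{E}(\psi_\alpha(|X|/c_m)) \leq 1$, the integrands increase monotonically to $\psi_\alpha(|X|/a)$, so monotone convergence yields the bound (and likewise for $Y$). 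Throughout I will use $\exp(x^\alpha) = \psi_\alpha(x) + 1$.

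For $\alpha \geq 1$ I would take $c = a+b$ and start from $|X+Y| \leq |X| + |Y|$, which rewrites as $|X+Y|/c \leq \lambda(|X|/a) + (1-\lambda)(|Y|/b)$ with $\lambda = a/(a+b)$. Since $t \mapsto t^\alpha$ is increasing and convex on $[0,\infty)$, applying it and then Jensen's inequality gives $(|X+Y|/c)^\alpha \leq \lambda(|X|/a)^\alpha + (1-\lambda)(|Y|/b)^\alpha$. Exponentiating and invoking convexity of $\exp$ once more yields $\psi_\alpha(|X+Y|/c)+1 \leq \lambda(\psi_\alpha(|X|/a)+1) + (1-\lambda)(\psi_\alpha(|Y|/b)+1)$; taking expectations and using the preliminary fact bounds the right-hand side by $2$, so $\mathbb{E}(\psi_\alpha(|X+Y|/c)) \leq 1$. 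This delivers $C^{(2)}_\alpha = 1$.

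For $\alpha < 1$ the convexity step is unavailable, and this is where the main obstacle sits: $t \mapsto t^\alpha$ is now concave, so I cannot pass the power inside the convex combination. The workaround I would use is the subadditivity inequality $(s+t)^\alpha \leq s^\alpha + t^\alpha$ for $s,t \geq 0$, $\alpha \leq 1$, giving $|X+Y|^\alpha \leq |X|^\alpha + |Y|^\alpha$ and hence the factorization $\exp(|X+Y|^\alpha/c^\alpha) \leq \exp(|X|^\alpha/c^\alpha)\,\exp(|Y|^\alpha/c^\alpha)$. Choosing $c = 2^{1/\alpha}(a+b)$, so that $c^\alpha = 2(a+b)^\alpha$, and applying the Cauchy--Schwarz inequality to the product, each factor becomes $\mathbb{E}(\exp(2|X|^\alpha/c^\alpha))^{1/2} = \mathbb{E}(\exp((|X|/(a+b))^\alpha))^{1/2} \leq 2^{1/2}$, where the equality is the exponent arithmetic $2/c^\alpha = 1/(a+b)^\alpha$ and the inequality uses $a+b \geq a$ together with the preliminary fact. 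The two factors multiply to $2$, whence $\|X+Y\|_{\psi_\alpha} \leq 2^{1/\alpha}(a+b)$, i.e.\ $C^{(2)}_\alpha = 2^{1/\alpha}$.

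Beyond routine book-keeping, the only genuinely delicate point is this concave regime: the constant $2^{1/\alpha}$ is precisely the price of replacing the unavailable Jensen argument by the splitting-plus-Cauchy--Schwarz device. The step I would verify most carefully is that the exponent arithmetic lines up exactly, so that each Cauchy--Schwarz factor collapses to an $\mathbb{E}(\psi_\alpha(|X|/(a+b))) + 1 \leq 2$ term rather than forcing a larger constant; it is this cancellation of the factor $2$ in $c^\alpha$ against the factor $2$ produced by Cauchy--Schwarz that makes $2^{1/\alpha}(a+b)$ (rather than something larger) suffice.
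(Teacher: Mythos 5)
Your proof is correct, and for $\alpha<1$ it follows essentially the paper's strategy: both arguments split $\exp(|X+Y|^\alpha/c^\alpha)$ using the subadditivity $(s+t)^\alpha\le s^\alpha+t^\alpha$ and then decouple the resulting product of exponentials by a quadratic inequality --- you use Cauchy--Schwarz, the paper uses Young's inequality $uv\le u^2/2+v^2/2$; the two devices are interchangeable here and both deliver the constant $2^{1/\alpha}$. Where you genuinely differ is the case $\alpha\ge 1$: the paper simply asserts that $\|\cdot\|_{\psi_\alpha}$ is a true norm and invokes the triangle inequality (citing Vershynin for $\alpha\in\{1,2\}$), whereas you prove it directly by writing $|X+Y|/(a+b)$ as a convex combination and applying convexity of $t\mapsto t^\alpha$ and of $\exp$ --- the standard Luxemburg-norm argument --- which makes your treatment self-contained for every $\alpha\ge 1$. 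You also pin down two details the paper leaves implicit: attainment of the infimum defining the quasi-norm (via monotone convergence), which is exactly what licenses the bound $\mathbb{E}\bigl[\exp\bigl((|X|/\|X\|_{\psi_\alpha})^\alpha\bigr)\bigr]\le 2$, and the degenerate case of a vanishing quasi-norm. A further small advantage of your bookkeeping: by keeping the common denominator $(a+b)^\alpha$ throughout, you avoid the paper's mismatched parenthesization $\bigl(2(\|X\|_{\psi_\alpha}+\|Y\|_{\psi_\alpha})\bigr)^\alpha$, which should read $2(\|X\|_{\psi_\alpha}+\|Y\|_{\psi_\alpha})^\alpha$ for its intermediate inequalities to hold for all values of the two norms when $\alpha<1$.
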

\begin{proof} 
When $\alpha < 1$, we exploit the fact that $|a + b|^\alpha \le |a|^\alpha + |b|^\alpha$ for any $a, b \ge 0$. We then have that
\begin{align*}
	&\mathbb{E}\left[\exp\left(\frac{|X+Y|^\alpha}{\left(2^{1/\alpha}(\|X\|_{\psi_\alpha} + \|Y\|_{\psi_\alpha})\right)^\alpha}\right)\right] \\
	&\quad \leq \mathbb{E}\left[\exp\left(\frac{|X|^\alpha}{\left(2(\|X\|_{\psi_\alpha} + \|Y\|_{\psi_\alpha})\right)^\alpha} + \frac{|Y|^\alpha}{\left(2(\|X\|_{\psi_\alpha} + \|Y\|_{\psi_\alpha})\right)^\alpha}\right)\right] \\
	&\quad \leq \mathbb{E}\left[\exp\left(\frac{|X|^\alpha}{2\|X\|_{\psi_\alpha}^\alpha} + \frac{|Y|^\alpha}{2\|Y\|_{\psi_\alpha}^\alpha}\right)\right] 
	= \mathbb{E}\left[\exp\left(\frac{|X|^\alpha}{2\|X\|_{\psi_\alpha}^\alpha}\right) \exp\left(\frac{|Y|^\alpha}{2\|Y\|_{\psi_\alpha}^\alpha}\right)\right] \\
	&\quad \leq \frac{1}{2} \left( \mathbb{E}\left[\exp\left(\frac{|X|^\alpha}{\|X\|_{\psi_\alpha}^\alpha}\right)\right] + \mathbb{E}\left[\exp\left(\frac{|Y|^\alpha}{\|Y\|_{\psi_\alpha}^\alpha}\right)\right] \right) ~,
\end{align*}
where the final inequality follows from Young's inequality, which states $ab \leq a^2/2 + b^2/2$.
Since $X$ and $Y$ are sub-Weibull($\alpha$), we know from \eqref{eq: def subWei} that each expectation on the right-hand side is bounded by $2$. We therefore deduce that $2^{1/\alpha}(\|X\|_{\psi_\alpha} + \|Y\|_{\psi_\alpha})$ belongs to the set on the right-hand side of \eqref{eq: def subWei}. By definition of the quasi-norm $\| X + Y \|_{\psi_\alpha}$ as the infimum of such quantities, it follows that
\( \| X + Y \|_{\psi_\alpha} \leq 2^{1/\alpha} (\|X\|_{\psi_\alpha} + \|Y\|_{\psi_\alpha}) \),
which concludes the proof in the case $\alpha < 1$. \\
When $\alpha \geq 1$, the sub-Weibull quasi-norm is in fact a norm, and the result follows directly from the triangle inequality. \citet{Vershynin} shows that the sub-Weibull quasi-norm is a true norm for 
$\alpha=1$ and $\alpha=2$, corresponding to sub-exponential and sub-Gaussian random variables.
\end{proof}

Third, from the bounds established in the two propositions above, we derive the following result concerning the centering of sub-Weibull random variables.

\begin{prop}\label{prop:subweibull:centering}
	Let $X$ be a sub-Weibull($\alpha$) random variable of order $\alpha$ with $\alpha > 0$. 
	Then, it holds that $\| X - \mathbb E[X] \|_{\psi_\alpha} < C_\alpha^{(3)} \|X\|_{\psi_\alpha}$, 
	where $C_\alpha^{(3)} = C_\alpha^{(2)}(1 + C^{(1)}_\alpha (\log 2)^{-\frac{1}{\alpha}})$. 
\end{prop}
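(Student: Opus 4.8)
The plan is to combine the quasi-triangle inequality for the sub-Weibull quasi-norm (Proposition \ref{prop:subweibull:sum}) with the $L_p$-moment bound (Proposition \ref{prop:subweibull:moments}), the only genuinely new ingredient being the computation of the sub-Weibull quasi-norm of a deterministic constant. The key observation is that $\mathbb E[X]$ is a constant, so $X - \mathbb E[X]$ is the sum of the random variable $X$ and the constant $-\mathbb E[X]$, and we can treat the centering term as an honest summand.

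First I would apply Proposition \ref{prop:subweibull:sum} to the pair $(X, -\mathbb E[X])$ to obtain
\[
	\| X - \mathbb E[X] \|_{\psi_\alpha} \leq C^{(2)}_\alpha \left( \| X \|_{\psi_\alpha} + \| \mathbb E[X] \|_{\psi_\alpha} \right) ~.
\]
Next I would compute the quasi-norm of a constant directly from the definition \eqref{eq: def subWei}. For a deterministic $c$, the defining condition $\mathbb E( \psi_\alpha(|c|/c') ) \leq 1$ becomes $\exp((|c|/c')^\alpha) - 1 \leq 1$, i.e.\ $|c|/c' \leq (\log 2)^{1/\alpha}$; taking the infimum over admissible $c'$ yields $\| c \|_{\psi_\alpha} = |c| (\log 2)^{-1/\alpha}$. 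In particular $\| \mathbb E[X] \|_{\psi_\alpha} = |\mathbb E[X]| (\log 2)^{-1/\alpha}$.

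Finally I would bound $|\mathbb E[X]|$ by $\mathbb E|X| = \| X \|_{L_1}$ and invoke Proposition \ref{prop:subweibull:moments} with $p = 1$, giving $\| X \|_{L_1} \leq C^{(1)}_\alpha \| X \|_{\psi_\alpha}$. Substituting back,
\[
	\| X - \mathbb E[X] \|_{\psi_\alpha} \leq C^{(2)}_\alpha \left( \| X \|_{\psi_\alpha} + C^{(1)}_\alpha (\log 2)^{-1/\alpha} \| X \|_{\psi_\alpha} \right) = C^{(2)}_\alpha \big( 1 + C^{(1)}_\alpha (\log 2)^{-1/\alpha} \big) \| X \|_{\psi_\alpha} ~,
\]
which is exactly $C^{(3)}_\alpha \| X \|_{\psi_\alpha}$ with the stated constant. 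There is no serious obstacle here: the argument is a short chain of the two preceding propositions, and the only point requiring a moment of care is the elementary evaluation of $\| c \|_{\psi_\alpha}$ for a constant, which is where the factor $(\log 2)^{-1/\alpha}$ enters.
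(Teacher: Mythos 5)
Your proposal is correct and follows essentially the same route as the paper's proof: quasi-triangle inequality (Proposition \ref{prop:subweibull:sum}), evaluation of the sub-Weibull quasi-norm of the deterministic constant $\mathbb{E}[X]$ (which is where $(\log 2)^{-1/\alpha}$ enters), and the $L_1$ moment bound from Proposition \ref{prop:subweibull:moments}. The only cosmetic difference is that the paper passes through $\|\mathbb{E}[|X|]\|_{\psi_\alpha}$ and solves the defining equation for the constant's quasi-norm, whereas you compute $\|\mathbb{E}[X]\|_{\psi_\alpha}$ directly; the two computations are identical in substance.
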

\begin{proof}
Note that Proposition \ref{prop:subweibull:sum} yields the following bound
\[
	\|X - \mathbb{E}[X]\|_{\psi_\alpha} 
	\leq C^{(2)}_\alpha \left( \|X\|_{\psi_\alpha} + \| \mathbb{E}[X] \|_{\psi_\alpha} \right)
	\leq C^{(2)}_\alpha \left( \|X\|_{\psi_\alpha} + \| \mathbb{E}[|X|] \|_{\psi_\alpha} \right).
\]
We recall that the definition of the sub-Weibull random variable implies that
\[
\| \mathbb{E}[|X|] \|_{\psi_\alpha} = \inf \left\{ c > 0 : \exp\left( \left( \frac{\mathbb{E}[|X|]}{c} \right)^\alpha \right) \le 2 \right\}.
\]
Since $\mathbb{E}[|X|]$ is deterministic we may obtain a $c$ in the set on the right hand side of the display above by solving the equation $( \mathbb{E}[|X|] / c )^\alpha = \log 2$, which yields $ c = \mathbb{E}[|X|] (\log 2)^{-1/\alpha}$.
Substituting into the earlier inequality, we obtain
\[
\|X - \mathbb{E}[X]\|_{\psi_\alpha} \leq C^{(2)}_\alpha \left( \|X\|_{\psi_\alpha} + \log(2)^{-1/\alpha} \mathbb{E}[|X|] \right) 
\leq C^{(2)}_\alpha \left(1 + \log(2)^{-1/\alpha} C^{(1)}_\alpha \right) \|X\|_{\psi_\alpha},
\]
where the last inequality follows from Proposition \ref{prop:subweibull:moments}.  
This concludes the proof.
\end{proof} 

Fourth, we establish a bound on the $L^p$ norm of the sum of i.i.d.~symmetric random variables whose tails satisfy appropriate decay conditions, as specified below, akin sub-Weibull random variables.

\begin{prop}\label{prop:latala}
	Let $X_1, \ldots, X_n$ be a sequence of i.i.d.~symmetric random variables such that for any $\varepsilon>0$ it holds that $\mathbb P( | X_i | \geq \varepsilon ) \leq \exp( - \varepsilon^\alpha ) $~.

	Then, for any $p\geq 2$, we have that 
	$(i)$ if $\alpha < 1$ it holds that 
	\[
		\|X_1+X_2+\ldots+X_n\|_{L_p} \leq C^{(4)}_\alpha\left( p^{1/\alpha} +  \sqrt{p}\sqrt{n} \right) ~,
	\]
	where, $C^{(4)}_\alpha= 2e^3(2\pi)^{1/4}e^{1/24} (2e^{2/e}/\alpha)^{1/\alpha}$;
	and 
	$(ii)$ if $\alpha\geq 1$ it holds that 
	\[
		\|X_1+X_2+\ldots+X_n\|_{L_p}\leq C^{(4)}_\alpha ( p^{1/\alpha} n^{(\alpha-1)/\alpha}+ \sqrt{p} \sqrt{n} ) ~,
	\]
	where $C^{(4)}_\alpha=4e$.
\end{prop}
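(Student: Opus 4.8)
The plan is to prove Proposition \ref{prop:latala} via the Latała moment bound for sums of i.i.d.~symmetric random variables, which gives a precise two-sided estimate for $\|\sum_i X_i\|_{L_p}$ in terms of the quantity $L_p := \sup\{ (p/s)^{1/s} \|X_1\|_{L_s} : 2 \le s \le p \}$ (together with the Gaussian-type term). Since the tail hypothesis $\mathbb P(|X_i| \ge \varepsilon) \le \exp(-\varepsilon^\alpha)$ is precisely the tail of a sub-Weibull($\alpha$) random variable with unit quasi-norm, I can invoke Proposition \ref{prop:subweibull:moments} to control the individual moments $\|X_1\|_{L_s} \le C^{(1)}_\alpha s^{1/\alpha}$. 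The whole proof then reduces to: (a) recall/state Latała's inequality with explicit constants; (b) substitute the moment bound into the definition of $L_p$; (c) maximize $(p/s)^{1/s} s^{1/\alpha}$ over $s \in [2,p]$, which is where the two cases $\alpha < 1$ and $\alpha \ge 1$ diverge; and (d) collect constants.

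\textbf{Step 1 (Latała's bound).} For i.i.d.~symmetric $X_i$ and $p \ge 2$, Latała's inequality states that $\|\sum_{i=1}^n X_i\|_{L_p} \le e\, L_p$ where $L_p$ is as above, up to a universal constant that I will track to recover the stated $C^{(4)}_\alpha$. The Gaussian contribution appears through the $s=2$ term, which produces $\sqrt{p}\,\|X_1\|_{L_2}\sqrt{n}$-type behavior; since $\|X_1\|_{L_2} \le C^{(1)}_\alpha 2^{1/\alpha}$ is an absolute constant, this accounts for the $\sqrt{p}\sqrt{n}$ summand common to both cases.

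\textbf{Step 2 (substitute and optimize).} Plugging $\|X_1\|_{L_s} \le C^{(1)}_\alpha s^{1/\alpha}$ into the relevant term, I must bound $\sup_{2 \le s \le p} (p/s)^{1/s} s^{1/\alpha}\, n^{1/s}$. The dependence on $n$ enters because each summand's $L_s$ norm, when raised appropriately in Latała's formula, carries a factor $n^{1/s}$; the optimization over $s$ is the crux. When $\alpha \ge 1$ the exponent structure makes $s = p$ the effective maximizer, yielding the term $p^{1/\alpha} n^{(\alpha-1)/\alpha}$; when $\alpha < 1$ the tails are heavier, the maximizer shifts, the $n$-dependence collapses to the $\sqrt{n}$ term, and the heavy-tail term is purely $p^{1/\alpha}$. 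Verifying these two optimizations—showing the claimed summands genuinely dominate $\sup_s(\cdots)$—is the \emph{main obstacle}, as it requires a careful calculus/monotonicity argument on $(p/s)^{1/s} s^{1/\alpha} n^{1/s}$ and a clean separation of the two regimes in $\alpha$.

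\textbf{Step 3 (constants).} Finally I assemble the explicit constant $C^{(4)}_\alpha$ by combining the universal constant from Latała's inequality, the factor $C^{(1)}_\alpha = 2\sqrt{2\pi}\,e^{\alpha/12}e^{1/(2e)}\alpha^{-(\alpha+2)/(2\alpha)}$ from Proposition \ref{prop:subweibull:moments}, and the numerical constants picked up in the optimization; the bookkeeping differs between the two cases, producing $C^{(4)}_\alpha = 2e^3(2\pi)^{1/4}e^{1/24}(2e^{2/e}/\alpha)^{1/\alpha}$ when $\alpha<1$ and the simpler $C^{(4)}_\alpha = 4e$ when $\alpha \ge 1$. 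I expect the $\alpha \ge 1$ constant to be cleaner precisely because the maximizer sits at the endpoint $s=p$, avoiding the additional $\alpha$-dependent factors that the interior optimization generates in the $\alpha<1$ case.
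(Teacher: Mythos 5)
Your overall strategy---reducing the claim to Lata\l{}a's (1997) moment estimates for sums of i.i.d.\ symmetric variables and then optimizing over the moment index $s$---is viable in principle, and it draws on the same source as the paper. But the paper takes a different computational route: it works with Lata\l{}a's Theorem~2 in its infimum (Orlicz-norm) formulation,
\[
\inf\Bigl\{ t>0 : \sum_{i=1}^n \log \mathbb E\bigl|1 + X_i/t\bigr|^p \le p \Bigr\} ~,
\]
bounding this quantity directly via Lata\l{}a's Examples~3.2 (convex case $\alpha\ge 1$, through the convex conjugate of $x\mapsto x^\alpha$) and~3.3 (concave case $\alpha<1$), plus a trick from Bogucki to remove a residual $n^{1/p}$ factor. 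Your route instead invokes the i.i.d.\ corollary of that theorem (the sup-over-$s$ form). Unfortunately, as written, your proposal has genuine gaps. First, you misstate the key lemma: Lata\l{}a's i.i.d.\ corollary controls $\|\sum_i X_i\|_{L_p}$ by
\[
\sup\Bigl\{ \tfrac{p}{s}\bigl(\tfrac{n}{p}\bigr)^{1/s}\|X_1\|_{L_s} : \max(2, p/n) \le s \le p \Bigr\} ~,
\]
not by $\sup\{ (p/s)^{1/s}\|X_1\|_{L_s} : 2\le s\le p\}$ with the $n$-dependence left vague. The prefactor $p/s$ (not raised to the power $1/s$), the factor $(n/p)^{1/s}$, and the lower limit $\max(2,p/n)$ of the admissible range are all essential to the optimization you defer to Step~2; without the correct formula that step cannot even be set up.

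Second, the optimization itself---which you rightly flag as the main obstacle---is not carried out, and your sketch of how it resolves is incorrect. For $\alpha\ge 1$ the maximizer is \emph{not} $s=p$: at $s=p$ the expression equals $(n/p)^{1/p}\,p^{1/\alpha}$, which is of order $p^{1/\alpha}$ and cannot produce $p^{1/\alpha}n^{(\alpha-1)/\alpha}$. That term in fact arises from the \emph{lower} endpoint $s=p/n$ when $p>2n$ (where the value is $n\,(n/p)^{n/p}(p/n)^{1/\alpha}\le p^{1/\alpha}n^{(\alpha-1)/\alpha}$), or from an interior critical point $s^*=\log(p/n)/(1-1/\alpha)$; when $n\ge p$ the supremum sits at $s=2$ and yields only the $\sqrt{pn}$ term. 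The case $\alpha<1$ requires its own analysis (there the critical point is a minimum, so both endpoints must be checked, and bounding the $s=p$ endpoint by $C_\alpha(p^{1/\alpha}+\sqrt{pn})$ needs a separate argument). This regime-by-regime analysis is the real content of the proof and is missing. Finally, the explicit constants $C^{(4)}_\alpha$ in the statement come out of the paper's computation with the infimum formulation; the sup-form corollary carries different (and generally untracked) universal constants, so your Step~3 claim that the stated constants would be recovered by bookkeeping is unsupported.
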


\begin{proof}
The proof relies on Theorem 2 of \citet{Latala1997}, which provides a bound on the \( L^p \) norm of the sum of symmetric random variables in terms of the Orlicz norm of the sequence. 
We begin by introducing the definition of the Orlicz norm for a sequence of random variables, and then proceed to show how it can be bounded as in the right-hand side of the statement.
For \( p > 0 \), the Orlicz norm of a sequence \( \{X_i\}_{i=1}^n \) is defined as
\[
	\vertiii{ \{ X_i \}_{i=1}^n }_{L_p} := \inf\left\{ t > 0 : \sum_{i=1}^n \log \mathbb E\left( \left| 1 + \frac{X_i}{t} \right|^p \right) \leq p \right\} ~.
\]
We now derive bounds for this quantity, considering separately the cases \( \alpha < 1 \) and \( \alpha \geq 1 \). \\
$(i)$ Case \(\alpha < 1\). 
We follow the argument in \citet[Example 3.3]{Latala1997}, which applies to random variables satisfying \( \mathbb P( |X_i| \geq \varepsilon ) \leq \exp(-N_\alpha(\varepsilon)) \) for $\varepsilon\geq 0$, 
where \( N_\alpha : \mathbb R_+ \rightarrow \mathbb R_+ \) is a concave function---a property satisfied when \( \alpha < 1 \).
Following \citet[Example 3.3]{Latala1997} the assumptions of the proposition imply that for any \( s > 0 \) and each \( i = 1, \dots, n \) we have
\[
	\log \left( \mathbb E \left( \left| 1 + \frac{sX_i}{e^2} \right|^p \right) \right)
	\leq p s^p \|X_i\|_{L_p}^p + p^2 s^2 \|X_i\|_{L_2}^2 ~.
\]
Setting \( s = e^2 / t \), we obtain
\[
	\vertiii{ \{ X_i \}_{i=1}^n }_{L_p}
	\leq \inf\left\{ t > 0 : \sum_{i=1}^n \left( \frac{e^{2p}}{t^p} \|X_i\|_{L_p}^p + \frac{p e^4}{t^2} \|X_i\|_{L_2}^2 \right) \leq 1 \right\} ~.
\]
By sub-additivity of the infimum, we can bound this by the sum of two terms:
\[
	\vertiii{ \{ X_i \}_{i=1}^n }_{L_p}
	\leq \inf\left\{ t > 0 : \frac{e^{2p}}{t^p} \sum_{i=1}^n \|X_i\|_{L_p}^p \leq 1 \right\}
	+ \inf\left\{ t > 0 : \frac{p e^4}{t^2}\sum_{i=1}^n \|X_i\|_{L_2}^2 \leq 1 \right\} ~.
\]
We now bound each of these terms separately.
For the first term, solving the equation \( \sum_{i=1}^n \frac{e^{2p}}{t^p} \|X_i\|_{L_p}^p = 1 \) and using the bound on the $L_p$ norm as in the proof of Proposition \ref{prop:subweibull:moments} gives
\[
	\inf\left\{ t > 0 : \frac{e^{2p}}{t^p} \sum_{i=1}^n  \|X_i\|_{L_p}^p \leq 1 \right\}
	\leq e^2 \left( \sum_{i=1}^n \|X_i\|_{L_p}^p \right)^{1/p} 
	\leq e^2 \left( n \frac{p}{\alpha} \Gamma\left( \frac{p}{\alpha} \right) \right)^{1/p} ~.
\]
Similarly, for the second term we have
\[
	\inf\left\{ t > 0 : \frac{p e^4}{t^2} \sum_{i=1}^n \|X_i\|_{L_2}^2 \leq 1 \right\}
	\leq e^2 \left( p \sum_{i=1}^n \|X_i\|_{L_2}^2 \right)^{1/2}
	\leq e^2 \left( p n \frac{2}{\alpha} \Gamma\left( \frac{2}{\alpha} \right) \right)^{1/2} ~.
\]
Combining both bounds and using Stirling’s approximation for the Gamma function, which states that 
\( \Gamma(x) < \sqrt{ (2\pi)/{x} } ( x/e )^x e^{1/(12x)} \) for all $x > 0$,
we obtain 
\begin{align*}
	& \vertiii{ \{ X_i \}_{i=1}^n }_{L_p} 
	\leq e^2 \left[ \left( n \frac{p}{\alpha} \Gamma\left( \frac{p}{\alpha} \right) \right)^{1/p}
	+ \left( p n \frac{2}{\alpha} \Gamma\left( \frac{2}{\alpha} \right) \right)^{1/2} \right] \\
	& \quad \leq e^2 \left\{ \left[ \sqrt{ \frac{2\pi p}{\alpha} } \left( \frac{p}{e\alpha} \right)^{p/\alpha} e^{\alpha/(12p)} \right]^{1/p} n^{1/p}
	+ \left[ \sqrt{ \frac{4\pi}{\alpha} } \left( \frac{2}{e\alpha} \right)^{2/\alpha} e^{\alpha/24} \right]^{1/2} \sqrt{p} \sqrt{n} \right\} \\
	& \quad \leq { e^2 (2\pi)^{1/4} e^{1/24} \over (e \alpha)^{1/\alpha}} \left\{ \left[ \left({ p \over \alpha}\right)^{1/(2p)} p^{1/\alpha} \right] n^{1/p}
	+ \left[ \left( \frac{2}{\alpha} \right)^{1/4} 2^{1/\alpha} \right] \sqrt{p} \sqrt{n} \right\} \\
	& \quad \leq { e^2 (2\pi)^{1/4} e^{1/24} \over (e\alpha)^{1/\alpha}} \left( \left\{ \left[ \left({ p \over \alpha}\right)^{\alpha/p} \right]^{1/(2\alpha)} p^{1/\alpha} \right\} n^{1/p}
	+ \left\{ \left[ \left( \frac{2}{\alpha} \right)^{\alpha/2} \right]^{1/(2\alpha)} 2^{1/\alpha} \right\} \sqrt{p} \sqrt{n} \right) \\
	& \quad \leq { e^2 (2\pi)^{1/4} e^{1/24} e^{1/(2 e \alpha)} \over (e\alpha)^{1/\alpha}} \left( p^{1/\alpha}  n^{1/p}
	+ 2^{1/\alpha}  \sqrt{p} \sqrt{n} \right) 
	\leq \frac{e^2 (2\pi)^{1/4} e^{1/24} 2^{1/\alpha}}{ \alpha^{1/\alpha} } \left( p^{1/\alpha} n^{1/p} + \sqrt{p} \sqrt{n} \right) ~,
\end{align*}
where in the third inequality we used that \( x^{1/x} \leq e^{1/e} \) for all \( x > 0 \), applied to \( x = {p/\alpha} \) and \( x = {2/\alpha} \), respectively.
Observe that, in the bound above, since \( p \ge 2 \) and \( \alpha < 1 \), the second term is smaller compared to the first. Therefore, we refine the bound further to eliminate the dependence on \( n^{1/p} \) in the first term, at the cost of enlarging the multiplicative constant. This step follows a trick presented in the proof of \citet[Corollary 1.2]{bogucki2015}.
Let \( \bm \iota \in \mathbb{R}^n \) be the vector with all components equal to one. By construction, we have \( \| \bm\iota \|_\infty = 1 \) and \( \| \bm\iota \|_p = n^{1/p} \) for any positive integer \( p \). Define
\begin{equation} \label{eq: choice C}
C(p, \alpha, n) := \left(p^{1/\alpha} \|\bm\iota\|_\infty + \sqrt{p} \|\bm\iota\|_2 \right)^{-1},
\end{equation}
which implies the bounds \( \| \bm\iota \|_\infty \le C(p, \alpha, n)^{-1} p^{-1/\alpha} \) and \( \| \bm\iota \|_2 \le C(p, \alpha, n)^{-1} p^{-1/2} \).
Now consider the following trick:
\[
n^{1/p} = \| \bm \iota \|_p = \left( \| \bm \iota \|_2^2 \| \bm \iota \|_\infty^{p-2} \right)^{1/p} 
\le C(p, \alpha, n)^{-1} \left( p^{-p/\alpha}  p^{(2-\alpha)/\alpha} \right)^{1/p}
\le C(p, \alpha, n)^{-1} \frac{e^{(2 - \alpha)/(e\alpha)}}{p^{1/\alpha}},
\]
where we have used that \( x^{1/x} \le e^{1/e} \) for \( x > 0 \), applied to \( x = p\).
Substituting the expression of \( C(p, \alpha, n) \) from \eqref{eq: choice C} into the above, we obtain:
\[
n^{1/p} \le \frac{e^{(2 - \alpha)/(e\alpha)}}{p^{1/\alpha}} \left( p^{1/\alpha} \| \bm\iota \|_\infty + \sqrt{p} \| \bm\iota \|_2 \right)
= \frac{e^{(2 - \alpha)/(e\alpha)}}{p^{1/\alpha}} \left( p^{1/\alpha} + \sqrt{p} \sqrt{n} \right).
\]
Inserting this bound into our previous estimate for the Orlicz norm yields
\begin{align*}
& \vertiii{ \{ X_i \}_{i=1}^n }_{L_p} \le \frac{e^2 2^{1/\alpha} (2\pi)^{1/4} e^{1/24} e^{(2 - \alpha)/(e\alpha)} }{ \alpha^{1/\alpha} } \left( p^{1/\alpha} + 2 \sqrt{p} \sqrt{n} \right) \\
& \quad \le 2 e^2 (2\pi)^{1/4} e^{1/24} \left( \frac{2 e^{2/e}}{\alpha} \right)^{1/\alpha} \left( p^{1/\alpha} + \sqrt{p} \sqrt{n} \right),
\end{align*}
which completes the proof for the case \( \alpha < 1 \), up to the application of \citet[Theorem 2]{Latala1997}.\\
$(ii)$ Case \(\alpha \geq 1\). We follow the argument in \citet[Example 3.2]{Latala1997}, which applies to random variables satisfying \( \mathbb P( |X_i| \geq \varepsilon ) \leq \exp(-N(\varepsilon)) \) for $\varepsilon\geq 0$, 
where \( N_\alpha : \mathbb R_+ \rightarrow \mathbb R_+ \) is a convex function---a property satisfied when \( \alpha \geq 1 \).
In this setting, for any \( s > 0 \) and \( i = 1, \ldots, n \), we have
\[
	\log \left( \mathbb{E}\left[ \left| 1 + \frac{s X_i}{4} \right|^p \right] \right) \leq 
	\begin{cases}
		N_\alpha^*(p|s|), & \text{if } p|s| \geq 2 \\
		p^2 s^2, & \text{if } p|s| < 2
	\end{cases}
	~
\]
where \( N_\alpha^*(y) = \sup_{x > 0} \{ yx - x^\alpha \} \) denotes the convex conjugate of the function \( N_\alpha(x) = x^\alpha \).  
%
Set \( s = 4/t \). Then, recalling the definition of the Orlicz norm and proceeding as in the proof of part $(i)$, we obtain
\begin{align*} 
	& \vertiii{ \{ X_i \}_{i=1}^n }_{L_p} 
	\le \inf \left\{ t > 0 : \sum_{i=1}^n N_\alpha^*\left( \frac{4p}{t} \right) \mathbf{1}_{\frac{4p}{t} \ge 2} + \sum_{i=1}^n p^2 \frac{16}{t^2} \mathbf{1}_{\frac{4p}{t} < 2} \le p \right\} \\
        &\quad \le \inf \left\{ t > 0 : \frac{n}{p} N_\alpha^*\left( \frac{4p}{t} \right) \le 1 \right\} 
    + \inf \left\{ t > 0 : \frac{16 n p}{t^2} \le 1 \right\} ~.
\end{align*}
Let us focus on the case \( \alpha = 1 \). It is straightforward to verify that
\[
	N_1^*(y) = \sup_{x > 0} \{ yx - x \} = 
	\begin{cases}
		0, & \text{if } y \leq 1 \\
		\infty, & \text{otherwise} ~,
	\end{cases}
\]
which leads to the bound
\[
	\inf \left\{ t > 0 : \frac{n}{p} N_1^*\left( \frac{4p}{t} \right) \le 1 \right\} = 
	\inf \left\{ t \geq 4p \right\} = 4p ~.
\]
Let us now focus on the case \( \alpha > 1 \). We now study the quantity
\[
	\inf \left\{ t > 0 : \frac{n}{p} N_\alpha^*\left( \frac{4p}{t} \right) \le 1 \right\}.
\]
Set \( y = {4p/t} \), and recall the definition of \( N_\alpha^*(y) \). We seek a pair \( (x, y) \) such that the function \( f(x, y) = xy - x^\alpha \) satisfies the constraints:
\[
    \begin{cases}
        f(x, y) \le \frac{p}{n} \\
        \partial_x f(x, y) = 0 ~.
    \end{cases}
\]
The second condition gives the optimizer \( x = \left( \frac{y}{\alpha} \right)^{\frac{1}{\alpha - 1}} \), which is positive for any $y>0$. Substituting this into \( f(x, y) \), we obtain:
\[
	xy - x^\alpha = y^{\frac{\alpha}{\alpha - 1}} \left( \left( \frac{1}{\alpha} \right)^{\frac{1}{\alpha - 1}} - \left( \frac{1}{\alpha} \right)^{\frac{\alpha}{\alpha - 1}} \right) 
	= y^{\frac{\alpha}{\alpha - 1}} \left( \frac{1}{\alpha} \right)^{\frac{\alpha}{\alpha - 1}} (\alpha - 1).
\]
This means that the first constraint in the system above is equivalent to
\[
	 \frac{n}{p} (\alpha - 1) \left( \frac{1}{\alpha} \right)^{\frac{\alpha}{\alpha - 1}} y^{\frac{\alpha}{\alpha - 1}} \le 1 ~.
\]
Substituting back \( y = \frac{4p}{t} \), we have
\[
	\inf \left\{ t > 0 : \frac{n}{p} N_\alpha^*\left( \frac{4p}{t} \right) \le 1 \right\}
	= \inf \left\{ t > 0 : \frac{n}{p} (\alpha - 1) \left( \frac{1}{\alpha} \right)^{\frac{\alpha}{\alpha - 1}} \left( \frac{4p}{t} \right)^{\frac{\alpha}{\alpha - 1}} \le 1 \right\} ~.
\]
This infimum is bounded above by the value of \( t \) solving the equality:
\[
	\frac{n}{p} (\alpha - 1) \left( \frac{1}{\alpha} \right)^{\frac{\alpha}{\alpha - 1}} \left( \frac{4p}{t} \right)^{\frac{\alpha}{\alpha - 1}} = 1 ~.
\]
Solving for \( t \), we get:
\[
	t = \frac{4p}{\alpha} \left( \frac{n(\alpha - 1)}{p} \right)^{\frac{\alpha - 1}{\alpha}} 
	\le 4 n^{\frac{\alpha - 1}{\alpha}} p^{\frac{1}{\alpha}} (\alpha - 1)^{ -\frac{1}{\alpha}} 
	\le 4 n^{\frac{\alpha - 1}{\alpha}} p^{\frac{1}{\alpha}} ~.
\]
Note that this expression matches the bound obtained earlier when \( \alpha = 1 \).
Furthermore, observe that
\[
	\inf \left\{ t > 0 : \frac{16np}{t^2} \le 1 \right\} \le 4 \sqrt{pn} ~.
\]
Combining the two bounds above yields
\[
	\vertiii{ \{ X_i \}_{i=1}^n }_{L_p} \le 4 p^{1/\alpha} n^{(\alpha - 1)/\alpha} + 4 \sqrt{p} \sqrt{n} ~.
\]
As in case $(i)$, the claim then follows by applying \citet[Theorem 2]{Latala1997}.
\end{proof}

Last, we conclude the appendix by stating and proving a proposition that goes in the opposite direction of the previous one. Instead of deriving moment bounds from tail behavior, we now use a bound on the
$L^p$ norm of a random variable to establish the exponential decay of its tails.

\begin{prop}\label{prop:subweibull:GBOtailbound}
	Let $X$ be a random variable such that, for some $\alpha >0$, it holds that $\|X\|_{L_p}\leq C_1\sqrt{p}+C_2p^{1/\alpha}$ for some positive constants $C_1$ and $C_2$ and any $p \geq 1$.

	Then it holds that
	\[
		\mathbb P\left(|X|\geq e C_1 \sqrt\varepsilon + e C_2 \varepsilon^{1/\alpha} \right)\leq e\exp(-\varepsilon) \text{ for all } \varepsilon \geq 0 ~.
	\]
\end{prop}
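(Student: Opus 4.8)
The plan is to convert the moment growth bound into a tail bound by a single application of Markov's inequality, with the order of the moment chosen to match $\varepsilon$. First I would dispose of the easy regime: when $\varepsilon \le 1$ one has $e\exp(-\varepsilon) \ge e\exp(-1) = 1$, so the asserted inequality holds trivially, since any probability is at most $1$. Hence it suffices to treat $\varepsilon \ge 1$.

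For $\varepsilon \ge 1$, I would take $p = \varepsilon$, which is admissible because the hypothesis supplies the moment bound for \emph{every} real $p \ge 1$. Applying Markov's inequality to $|X|^p$ then gives, for any threshold $t>0$,
\[
	\mathbb{P}(|X| \ge t) = \mathbb{P}(|X|^p \ge t^p) \le \frac{\mathbb{E}|X|^p}{t^p} = \left(\frac{\|X\|_{L_p}}{t}\right)^p \le \left(\frac{C_1\sqrt{p} + C_2 p^{1/\alpha}}{t}\right)^p ~,
\]
where the last step is exactly the assumed $L_p$-norm bound.

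The key step is the choice of threshold. Setting $t = e\,C_1\sqrt{\varepsilon} + e\,C_2\varepsilon^{1/\alpha} = e\,(C_1\sqrt{p} + C_2 p^{1/\alpha})$ makes the ratio inside the parentheses equal to $1/e$, so the right-hand side collapses to $e^{-p} = e^{-\varepsilon}$. Since $e^{-\varepsilon} \le e\exp(-\varepsilon)$, this establishes the claim for $\varepsilon \ge 1$ and, together with the trivial regime, completes the argument.

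There is no substantial obstacle here; the result is a textbook moment-to-tail conversion. The only points requiring a little care are $(i)$ that the hypothesis is available for non-integer $p$, which is precisely what allows me to set $p = \varepsilon$ exactly and thereby avoid the rounding losses that otherwise produce the constant $e$, and $(ii)$ that the prefactor $e$ in the statement is genuinely slack, used only to absorb the trivial small-$\varepsilon$ regime, while for $\varepsilon \ge 1$ one in fact obtains the sharper bound $e^{-\varepsilon}$.
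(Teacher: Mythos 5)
Your proposal is correct and follows essentially the same route as the paper's proof: Markov's inequality applied to $|X|^{\varepsilon}$ with the moment order set to $p=\varepsilon$, the threshold $e\,C_1\sqrt{\varepsilon}+e\,C_2\varepsilon^{1/\alpha}$ chosen so that the ratio collapses to $e^{-\varepsilon}$, and the $\varepsilon<1$ regime handled trivially (the paper leaves this last step implicit, while you spell it out). No gaps; your remark that the prefactor $e$ is only needed to absorb the small-$\varepsilon$ case is also accurate.
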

\begin{proof}
The proof of this result is based on a simplified version of the arguments used \citep[Proposition C.1 and A.3]{KuchibhotlaChakrabortty2022}.
Using Markov’s inequality and the bound on the \( L^p \) norm in the assumptions, we obtain for any \( \varepsilon \geq 1 \)
\begin{align*}
	& \mathbb{P}\left( |X| \geq eC_1 \sqrt{\varepsilon} + eC_2 \varepsilon^{1/\alpha} \right) 
	= \mathbb{P}\left( |X|^\varepsilon \geq \left( eC_1 \sqrt{\varepsilon} + eC_2 \varepsilon^{1/\alpha} \right)^\varepsilon \right) \\
	& \quad \leq \frac{\mathbb{E}|X|^\varepsilon}{\left( eC_1 \sqrt{\varepsilon} + eC_2 \varepsilon^{1/\alpha} \right)^\varepsilon} 
	\leq \frac{ \left( C_1 \sqrt{\varepsilon} + C_2 \varepsilon^{1/\alpha} \right)^\varepsilon }{ \left( eC_1 \sqrt{\varepsilon} + eC_2 \varepsilon^{1/\alpha} \right)^\varepsilon } 
	= \exp(-\varepsilon) ~.
\end{align*}
The claim of the proposition then follows.
\end{proof}

\singlespacing

\bibliographystyle{abbrvnat}
\setcitestyle{authoryear,open={((},close={))}}
\bibliography{references}

\end{document}